\documentclass[a4paper,twosided]{article}
\pdfoutput=1
\renewcommand{\pagebreak}{}
\usepackage{ifthen}
\newboolean{proceedings_version}
\setboolean{proceedings_version}{false}
\usepackage{amsfonts,amstext,amsmath,amssymb,mathrsfs,stmaryrd,calc,amsthm,xspace,mathtools,bold-extra,mathdots,enumerate}
\usepackage{bm}%
\providecommand{\Upgamma}{\Gamma}
\usepackage{macros}
\providecommand{\qedhere}{\qed}
\newtheorem*{ltheorem}{Length Function Theorem}
\newtheorem*{starnotation}{Notation}
\usepackage{thmtools}
\declaretheorem[numberwithin=section]{theorem}
\declaretheorem[sibling=theorem]{lemma}
\declaretheorem[sibling=theorem]{proposition}
\declaretheorem[sibling=theorem]{corollary}

\declaretheorem[sibling=theorem,style=definition]{definition}
\declaretheorem[sibling=theorem,style=definition]{example}
\declaretheorem[parent=theorem,style=remark]{claim}
\declaretheorem[sibling=theorem,style=remark]{remark}

\makeatletter
\def\ps@firstpage{\let\@mkboth\@gobbletwo%
\def\@oddhead{}%
\def\@oddfoot{%
\hbox{%
  \vbox to 100 pt{\footnotesize%
    \begin{minipage}[t]{\textwidth}\par An extended
      abstract of this work is published as Schmitz, S. and Schnoebelen, {\relax Ph}., 2011.
\newblock Multiply-recursive bounds with {H}igman's {L}emma.
\newblock In Aceto, L., Henzinger, M., and Sgall, J., editors, \emph{ICALP
  2011}\natconfdetails[\emph{38th}]{\emph{International Colloquium on Automata,
  Languages and Programming}}, volume 6756 of \emph{Lecture Notes in Computer
  Science}, pages 441--452. Springer.
\newblock \href{http://arxiv.org/abs/1103.4399}
  {\nolinkurl{doi:10.1007/978-3-642-22012-8_35}}.\end{minipage}}\hfill
    }}
\def\@evenhead{}
\def\@evenfoot{}}%
\makeatother
\usepackage{natbib}
\renewcommand{\cite}{\citep}\newcommand{\natconfdetails}[2][]{, #1 #2}
\newcommand{\ltheoremautorefname}{Length Function Theorem}
\renewcommand{\theoremautorefname}{Theorem}
\renewcommand{\propositionautorefname}{Proposition}
\renewcommand{\corollaryautorefname}{Corollary}
\renewcommand{\lemmaautorefname}{Lemma}
\renewcommand{\definitionautorefname}{Definition}

\newcommand{\sectionautorefname}{Section}

\def\UrlBreaks{\do\@\do\\\do\/\do\!\do\|\do\;\do\>\do\]%
 \do\,\do\?\do\'\do+\do\=\do\#}
\providecommand{\urlstyle}[1]{}
\urlstyle{same}
\providecommand{\doi}[1]{\href{http://dx.doi.org/#1}{\nolinkurl{doi:#1}}}
\usepackage{hyperref}
\def\UrlBreaks{\do\@\do\\\do\/\do\!\do\|\do\;\do\>\do\]%
 \do\,\do\?\do\'\do+\do\=\do\#}
\providecommand{\urlstyle}[1]{}
\urlstyle{same}
\providecommand{\doi}[1]{\href{http://dx.doi.org/#1}{\nolinkurl{doi:#1}}}
\begin{document}

\title{Multiply-Recursive Upper Bounds with \mbox{Higman's Lemma}}
\author{Sylvain Schmitz \and Philippe Schnoebelen}
\date{{\small
  LSV, ENS Cachan \& CNRS, Cachan, France\\
\href{mailto:sylvain.schmitz@lsv.ens-cachan.fr,phs@lsv.ens-cachan.fr}{\{schmitz,phs\}@lsv.ens-cachan.fr}}}
\maketitle
\begin{abstract}
We develop a new analysis for the length of controlled bad sequences in
well-quasi-orderings based on Higman's Lemma. This leads to tight
multiply-recursive upper bounds that readily apply to several verification
algorithms for well-structured systems.
\medskip\par\noindent\textbf{Keywords.}\hskip\labelsep
  Higman's Lemma; Well-structured systems; Verification;\linebreak Complexity of
  algorithms.
\end{abstract}
\thispagestyle{firstpage}

\section{Introduction}\label{sec-intro}

\paragraph{Well-quasi-orderings} (wqo's) are an important tool in logic and computer
science~\cite{kruskal72}. They are the key ingredient to a large number of
decidability (or finiteness, regularity, \ldots) results. In
constraint solving, automated deduction, program analysis, and many more
fields, wqo's usually appear under the guise of specific tools, like
Dickson's Lemma (for tuples of integers), Higman's Lemma (for words and their
subwords), Kruskal's Tree Theorem and its variants (for finite trees
with embeddings), and recently the Robertson-Seymour Theorem (for
graphs and their minors). In program verification, wqo's are the basis
for \emph{well-structured systems}~\cite{wqo,FinSch-wsts,henzinger2005}, a
generic framework for infinite-state systems.

\paragraph{Complexity.}
Wqo's are seldom used in complexity analysis. In order to extract
complexity upper bounds for an algorithm whose termination proof
rests on Dickson's or Higman's Lemma, one must be able to bound the length
of so-called ``controlled bad sequences'' (see \autoref{def-control}). Here
the available results are not very well known in computer science, and
their current packaging does not make them easy to read and apply. For
applications like the complexity of lossy channel
systems~\cite{CS-lics08} that rely on Higman's Lemma over $\Upgamma_p^\ast$ (the words over a
$p$-letter alphabet), what we really need is something like:
\begin{ltheorem}
\label{theo-in-introduction}
Let $\badd{\Upgamma_p^\ast}(n)$ be the maximal length of bad sequences
$w_0,w_1,w_2,\ldots$ over $\Upgamma_p^\ast$ with $p\geq 2$ s.t.\
$\len{w_i}<g^i(n)$ for $i=0,1,2,\ldots$ If the \emph{control function}
$g$ is primitive-recursive, then the \emph{length function}
$\badd{\Upgamma_p^\ast}$ is bounded by a function in
$\FGH{\omega^{p-1}}$.\footnote{Here the functions $F_\alpha$ are the
  ordinal-indexed levels of the Fast-Growing Hierarchy~\cite{fast},
  with multiply-recursive complexity starting at level
  $\alpha=\omega$, i.e., Ackermannian complexity, and stopping just
  before level $\alpha=\omega^\omega$, i.e., hyper-Ackermannian
  complexity.  The function classes $\FGH{\alpha}$ denote their
  elementary-recursive closure.}
\end{ltheorem}
\noindent
Unfortunately, the literature contains no such clear statement (see
the comparison with existing work below). 

\paragraph{Our Contribution.}
We provide a new and \emph{self-contained} proof of
the \ltheoremautorefname{}, a fundamental result that (we think) deserves a wide audience.
The exact statement we prove, \autoref{theo-main} below, is rather general: it is
parameterized by the control function $g$ and accomodates various
combinations of $\Upgamma_p^\ast$ sets without losing precision.
For this we significantly extend the setting we developed for Dickson's
Lemma~\cite{FFSS-arxiv10}: We rely on iterated residuations with a simple
but explicit algebraic framework for handling wqo's and their residuals in
a compositional way. Our computations can be kept relatively simple by
means of a fully explicit notion of ``normed reflection'' that captures the
over-approximations we use, all the while enjoying good algebraic
properties.
We also show \emph{how to apply} the \ltheoremautorefname{} by deriving precise
multiply-recursive upper bounds, parameterized by the alphabet size, for
the complexity of lossy channel systems and the Regular Post Embedding
Problem (see \autoref{sec-apps}).

\paragraph{Comparison with Existing Work.}
(Here, and for easier comparison, we assume that the control function $g$ is the successor
function.)

For $\Nat^k$ (i.e., Dickson's Lemma), \citeauthor{clote} gives an
explicit upper bound at level $\FGH{k+6}$ extracted from complex
Ramsey-theoretical results, hence hardly
self-contained~\cite{clote}. This is a simplification over an earlier
analysis by \citeauthor{mcaloon}, which leads to a uniform upper bound
at level $\FGH{k+1}$, but gives no explicit statement nor asymptotic
analysis~\cite{mcaloon}.  Both analyses are based on large intervals
and extractions, and \citeauthor{mcaloon}'s is technically quite
involved.  With D.\ and S.\ Figueira, we improved this to an explicit
and tight $\FGH{k}$~\cite{FFSS-arxiv10}.

For $\Upgamma_p^\ast$ (Higman's Lemma), \citeauthor{cichon98} exhibit a
reduction method, deducing bounds (for tuples of) words on $\Upgamma_p$ from
bounds on the $\Upgamma_{p-1}$ case~\cite{cichon98}. Their decomposition is
clear and self-contained, with the control function made explicit. It ends
up with some inequalities, collected
in~\cite[\sectionautorefname~8]{cichon98}, from which it is not clear
what precisely are the upper bounds one can extract.
Following this, \citeauthor{touzet2002} claims a bound of
$F_{\omega^p}$~\cite[\theoremautorefname~1.2]{touzet2002} with an
analysis based on iterated residuations but the proof (given
in~\cite{THESE-Touzet97}) is incomplete. %

Finally, \citeauthor{weiermann94} gives an $\FGH{\omega^{p-1}}$-like
bound for
$\Upgamma_p^\ast$~\cite[\corollaryautorefname~6.3]{weiermann94} for
sequences produced by term rewriting systems, but his analysis
is considerably more involved (as can be expected since it applies to
the more general Kruskal Theorem) and one cannot easily extract an
explicit proof for his \corollaryautorefname~6.3.

Regarding lower bounds, it is known that $F_{\omega^{p-1}}$ is
essentially tight~\citep{lowhigman}.

\paragraph{Outline of the Paper.}
All basic notions are recalled in \autoref{sec-nwqo}, leading to the
Descent Equation~\eqref{eq-descent}. Reflections in an algebraic
setting are defined in \autoref{sec-algebra}, then transfered in an
ordinal-arithmetic setting in \autoref{sec-otypes}.  We prove the Main
Theorem in \autoref{sec-hierarchies}, before illustrating its uses in
\autoref{sec-apps}.  
\ifthenelse{\boolean{proceedings_version}}{%
All the proofs missing from this extended abstract can be found at
\url{http://arxiv.org/abs/1103.4399}.
}{%
An appendix contains all the details omitted from the main text.
}

\section{Normed Wqo's and Controlled Bad Sequences}
\label{sec-nwqo}

We recall some basic notions of wqo-theory~\citep[see e.g.][]{kruskal72}. A
\emph{quasi-ordering} (a ``qo'') is a relation $(A;\leq)$ that is reflexive
and transitive. As usual, we write $x<y$ when $x\leq y$ and $y\not\leq x$,
and we denote structures $(A;P_1,\ldots,P_m)$ with just the support set $A$
when this does not lead to ambiguities. Classically, the substructure
\emph{induced} by a subset $X\subseteq A$ is $(X;P_1{}_{|X},\ldots,P_m{}_{|X})$
where, for a predicate $P$ over $A$, $P_{|X}$ is its trace over $X$.

A qo $A$ is a \emph{well-quasi-ordering} (a ``wqo'') if every
infinite sequence $x_0,x_1,x_2,\ldots$ contains an infinite increasing
subsequence $x_{i_0}\leq x_{i_1}\leq x_{i_2}\cdots$ Equivalently, a qo is a
wqo if it is well-founded (has no infinite strictly decreasing sequences)
and contains no infinite antichains (i.e., set of pairwise incomparable
elements). Every induced substructure of a wqo is a wqo.

\paragraph{Wqo's With Norms.}
A \emph{norm function} over a set $A$ is a mapping
$\len{.}_A:A\rightarrow\Nat$ that provides every element of $A$ with a
positive integer, its \emph{norm}, capturing a notion of size. For
$n\in\Nat$, we let $A_{<n} \egdef \{x\in A~|~\len{x}_A<n\}$ denote the
subset of elements with norm below $n$. The norm function is said to be
\emph{proper} if $A_{<n}$ is finite for every $n$.
\begin{definition}
\label{def-nwqo}
A \emph{normed wqo} (a ``nwqo'') is a wqo $(A;\leq_A,\len{.}_A)$
equipped with a proper norm function.
\end{definition}

\noindent
There are no special conditions on norms, except being proper. In
particular no connection is required between the ordering of elements and
their norms. In applications, norms are related to natural complexity
measures.
\begin{example}[Some Basic Wqo's]
\label{ex-basic-wqos}
The set of natural numbers $\Nat$ with the usual ordering is the
smallest infinite wqo.
For every $p\in\Nat$, we single out two $p$-element
wqo's: $\seg{p}$ is the $p$-element initial segment of $\Nat$, i.e., the
set $\{0,1,2,\ldots,p-1\}$ ordered linearly, while $\Upgamma_p$ is the
$p$-letter alphabet $\{a_1,\ldots,a_p\}$ where distinct letters are
unordered.
We turn them into nwqo's by fixing the following:
\begin{align}
\label{eq-lens}
\len{k}_\Nat = \len{k}_{\seg{p}}   & \egdef k \:,
&
\len{a_i}_{\Upgamma_p}		   & \egdef 0 \:.
\end{align}
\end{example}

We write $A\equiv B$ when the two nwqo's $A$ and $B$ are \emph{isomorphic}
structures. For all practical purposes, isomorphic nwqo's can be
identified, following a standard practice that significantly simplifies the
notational apparatus we develop in \autoref{sec-algebra}.
For the moment, we only want to stress that, in particular, \emph{norm
  functions must be preserved} by nwqo isomorphisms.
\begin{example}[Isomorphism Between Basic Nwqo's]
On the positive side, $\seg{0}\equiv\Upgamma_0$ and also
$\seg{1}\equiv\Upgamma_1$ since $\len{a_1}_{\Upgamma_1}
= 0 = %
\len{0}_{\seg{1}}$.
By contrast $\seg{2}\not\equiv\Upgamma_2$: not only these two have
non-isomorphic order relations, they also have different norm functions.
\end{example}

\paragraph{Good, Bad, and Controlled Sequences.}
A sequence $\xxx=x_0,x_1,x_2,\ldots$ over a qo
is \emph{good} if $x_i\leq x_j$ for some positions $i<j$. It is
\emph{bad} otherwise. \emph{Over a wqo}, all infinite sequences
are good (equivalently, all bad sequences are finite).

We are interested in the maximal length of bad sequences for a given wqo.
Here, a difficulty is that, in general, bad sequences can be arbitrarily
long and there is no finite maximal length. However, in our applications we
are only interested in bad sequences generated by some algorithmic method,
i.e., bad sequences whose complexity is controlled in some way.

\begin{definition}[Control Functions and Controlled Sequences]
\label{def-control}
~\\
A \emph{control function} is a mapping $g:\Nat\rightarrow \Nat$.
For a \emph{size} $n\in\Nat$, a sequence $\xxx=x_0,x_1,x_2,\ldots$ over
a nwqo $A$ is \emph{$(g,n)$-controlled} $\equivdef$
\begin{equation*}
	       \forall i=0,1,2,\ldots:\; \len{x_i}_A < g^i(n)
	       \;=\; \obracew{g(g(\ldots g}{i \text{ times}}(n))) \: .
\end{equation*}
\end{definition}
\noindent
Why $n$ is called a ``size'' appears with
\autoref{prop-descent-equation} and its proof. A pair $(g,n)$ is just
called a \emph{control} for short. We say that a sequence $\xxx$ is
\emph{$n$-controlled} (or just \emph{controlled}), when $g$ (resp.\ $g$ and
$n$) is clear from the context. Observe that the empty sequence is always a
controlled sequence.

\ifthenelse{\boolean{proceedings_version}}{%
\begin{proposition}
}{%
\begin{proposition}[See App.~\ref{app-prop-L-finite}]
}
\label{prop-L-finite}
Let $A$ be a nwqo and $(g,n)$ a control.
There exists a finite maximal length $\bad\in\Nat$ for $(g,n)$-controlled
bad sequences over $A$.
\end{proposition}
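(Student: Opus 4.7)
The plan is to apply K\"onig's Lemma to the tree of all $(g,n)$-controlled bad sequences over $A$, ordered by the prefix relation. Formally, define $T$ to be the set of finite $(g,n)$-controlled bad sequences over $A$ (including the empty sequence as the root), with edges going from a sequence $x_0,\ldots,x_{i-1}$ to its one-element extensions $x_0,\ldots,x_{i-1},x_i$ that are themselves $(g,n)$-controlled and bad. The maximal length we are after is exactly the height of $T$, so it suffices to show $T$ is finite.

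First I would check that $T$ is finitely branching. A child of a node at depth $i$ is obtained by appending some $x_i\in A$ with $\len{x_i}_A < g^i(n)$, and moreover such that the extended sequence remains bad. Regardless of the latter constraint, the set of candidate $x_i$ is included in $A_{<g^i(n)}$, which is finite because the norm on $A$ is \emph{proper} (part of \autoref{def-nwqo}). Hence each node has finitely many children.

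Next I would show $T$ has no infinite branch. An infinite branch would describe an infinite sequence $x_0,x_1,x_2,\ldots$ over $A$ all of whose finite prefixes are bad, i.e., an infinite bad sequence. But $A$ is a wqo, so every infinite sequence over $A$ is good, contradiction.

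By K\"onig's Lemma, a finitely branching tree with no infinite branch is finite; thus $T$ is finite, and its height $\bad \in \Nat$ is the desired maximal length. There is no real obstacle here: the proof is a direct packaging of the two defining properties of an nwqo (the properness of the norm giving finite branching, and the wqo property ruling out infinite branches) through K\"onig's Lemma. The only point that deserves a line of commentary is that the empty sequence belongs to $T$, so the set of controlled bad sequences is nonempty and $\bad$ is well-defined.
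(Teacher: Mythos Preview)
Your proof is correct and follows essentially the same approach as the paper: build the prefix tree of $(g,n)$-controlled bad sequences, use properness of the norm for finite branching and the wqo property to rule out infinite branches, then invoke K\H{o}nig's Lemma. The paper's argument is identical in structure, only more tersely stated.
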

We write $\bad_{A,g}(n)$ for this maximal length, a number that depends
on all three parameters: $A$, $g$ and $n$. However, for complexity
analysis, the relevant information is how, for given $A$ and $g$, the
\emph{length function} $\bad_{A,g}:\Nat\rightarrow\Nat$ behaves asymptotically,
hence our choice of notation. Furthermore, $g$ is a parameter that
remains fixed in our analysis and applications, hence it is usually left
implicit. \textbf{From now on we assume a fixed control function $g$} and
just write $\bad_A(n)$ for $\bad_{A,g}(n)$. We further assume that $g$ is
\emph{smooth} ($\equivdef$ $g(x+1)\geq g(x)+1\geq x+2$ for all $x$),
which is harmless for applications but simplifies computations like
\eqref{eq-eval-bad-basic}.

\paragraph{Residuals Wqo's and a Descent Equation.}
Via residuals one expresses the length function by induction over nwqo's.
\begin{definition}[Residuals]
For a nwqo $A$ and an element $x\in A$, the \emph{residual} $A/x$ is the
substructure (a nwqo) induced by the subset $A/x\egdef\{y\in A~|~x\not\leq
y\}$ of elements that are not above $x$.
\end{definition}
\begin{example}[Residuals of Basic Nwqo's]
\label{ex-residuals}
For all $k<p$ and $i=1,\ldots,p$:
\begin{xalignat}{2}
\label{eq-ex-residuals}
\Nat/k &= \seg{p}/k = \seg{k}
\: ,
&
\Upgamma_{p}/a_i&\equiv \Upgamma_{p-1}
\:.
\end{xalignat}
\end{example}

\ifthenelse{\boolean{proceedings_version}}{%
\begin{proposition}[Descent Equation]
}{%
\begin{proposition}[Descent Equation, See App.~\ref{app-prop-descent-equation}]
}
\label{prop-descent-equation}
\begin{equation}
\label{eq-descent}
       \bad_{A}(n) = \max_{x\in A_{<n}} \bigl\{1 + \bad_{A/x}(g(n))\bigr\}		 \:.
\end{equation}
\end{proposition}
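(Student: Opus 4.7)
\medskip
The plan is to prove the two inequalities separately by translating between bad controlled sequences over $A$ and bad controlled sequences over the residuals $A/x$.

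For the upper bound $\bad_A(n) \leq \max_{x\in A_{<n}}\{1 + \bad_{A/x}(g(n))\}$, I would take a maximal $(g,n)$-controlled bad sequence $\xxx = x_0, x_1, \ldots, x_{k-1}$ realizing $\bad_A(n)$. If $\xxx$ is empty then $A_{<n}$ must be empty as well (otherwise any single element of $A_{<n}$ would form a bad controlled sequence of length $1$), so the equation holds trivially under the convention $\max\emptyset = 0$. Otherwise $x_0 \in A_{<n}$ because $\len{x_0}_A < g^0(n) = n$. The tail $x_1, x_2, \ldots, x_{k-1}$ consists of elements satisfying $x_0 \not\leq x_j$ for all $j \geq 1$ (by badness of $\xxx$), so each $x_j$ lies in $A/x_0$, and the tail is bad in $A/x_0$. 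Reindexing via $y_i = x_{i+1}$ yields $\len{y_i}_A = \len{x_{i+1}}_A < g^{i+1}(n) = g^i(g(n))$, so the tail is a $(g,g(n))$-controlled bad sequence over $A/x_0$ of length $k-1$. Hence $k-1 \leq \bad_{A/x_0}(g(n))$ and $\bad_A(n) = k \leq 1 + \bad_{A/x_0}(g(n))$, which is dominated by the maximum.

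For the lower bound, I would fix an arbitrary $x\in A_{<n}$ and a maximal $(g,g(n))$-controlled bad sequence $y_0, y_1, \ldots, y_{k-1}$ over $A/x$, realizing $\bad_{A/x}(g(n))$. Prepending $x$ produces the sequence $\xxx' = x, y_0, y_1, \ldots, y_{k-1}$ over $A$. It is bad in $A$: the $y_i$ are pairwise incomparable (good pair among the $y_i$'s would contradict badness in $A/x$), and $x \not\leq y_i$ for all $i$ by definition of $A/x$. It is $(g,n)$-controlled: $\len{x}_A < n$ and $\len{y_i}_A < g^i(g(n)) = g^{i+1}(n)$ for each $i$, so that the $(i+1)$-st term of $\xxx'$ satisfies the required bound. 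Thus $\bad_A(n) \geq 1 + k = 1 + \bad_{A/x}(g(n))$ for any $x\in A_{<n}$, and taking the maximum yields the reverse inequality.

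Combining both inequalities gives the Descent Equation. There is no real obstacle here: the proof is essentially a bookkeeping exercise, and the mild point to handle carefully is the shift of the control from $n$ to $g(n)$ when passing from $\xxx$ to its tail, which is exactly where smoothness of $g$ plays no role (it will matter later when we iterate the recursion). The degenerate case where $A_{<n}$ is empty is absorbed by the convention $\max\emptyset = 0$, consistent with the fact that the only controlled sequence is then the empty one.
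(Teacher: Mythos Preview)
Your proof is correct and follows essentially the same approach as the paper: handle the degenerate case $A_{<n}=\emptyset$ via $\max\emptyset=0$, then prove the two inequalities by stripping/prepending the first element and shifting the control from $n$ to $g(n)$. One minor wording slip: in the lower bound you say ``the $y_i$ are pairwise incomparable'', but badness only forbids pairs $y_i\leq y_j$ with $i<j$; the parenthetical you give is the correct statement and is all that is needed.
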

This reduces the $\bad_A$ function to a finite combination of $\bad_{A_i}$'s
where the $A_i$'s are residuals of $A$, hence ``smaller'' sets.
Residuation is well-founded for wqo's: a sequence of successive
residuals $A \supsetneq A/x_0 \supsetneq A/x_0/x_1 \supsetneq A/x_0/x_1/x_2 \supsetneq \cdots$ is
necessarily finite since $x_0,x_1,x_2,\ldots$ must be a bad sequence. Hence
the recursion in the Descent Equation is well-founded and can be used to
evaluate $\bad_A(n)$. This is our starting point for analyzing the
behaviour of length functions.

For example, using induction and Eq.~\eqref{eq-ex-residuals},  the Descent
Equation leads to:%
\begin{align}
\label{eq-eval-bad-basic}
\bad_{\Upgamma_p}(n) &= p \: ,
&
\bad_\Nat(n) &= n \: ,
&
\bad_\seg{p}(n) &= \min (n,p)
\; .
\end{align}

\section{An Algebra of Normed Wqo's}
\label{sec-algebra}

The algebraic framework we now develop has two main
goals. First
 it provides a \emph{notation} for denoting the wqo's  encountered in
algorithmic applications. These wqo's and their norm functions abstract
data structures that are built inductively by combining some basic wqo's.
Second, it supports a \emph{calculus} for the kind of
compositional computations, based on the Descent Equation, we
develop next.%

The
constructions we use in this paper are disjoint sums, cartesian products,
and Kleene stars (with Higman's order). These constructions are classic.
Here we also have to define how they combine the norm functions:
\begin{definition}[Sums, Products, Stars Nwqo's]
\label{def-nwqo-combinators}
The \emph{disjoint sum} $A_1+A_2$ of two nwqos $A_1$ and $A_2$ is
the nwqo given by%
\begin{align}
\label{eq-supp-sum}
A_1+A_2 &= \{\tup{i,x} ~|~ i\in\{1,2\} \text{ and } x\in A_i \}		       \:,
\\
\label{eq-order-sum}
\tup{i,x}\leq_{A_1+A_2}\tup{j,y} &\equivdef i = j \text{ and } x\leq_{A_i}y	\:,
\\
\label{eq-len-sum}
\len{\tup{i,x}}_{A_1+A_2} &\egdef \len{x}_{A_i}		\:.
\shortintertext{%
The \emph{cartesian product} $A_1\times A_2$ of two nwqos $A_1$
and $A_2$ is the nwqo given by
}%
\label{eq-supp-prod}
A_1\times A_2 &= \{\tup{x_1,x_2} ~|~ x_1\in A_1, x_2\in A_2\}		     \:,
\\
\label{eq-order-prod}
\tup{x_1,x_2}\leq_{A_1\times A_2}\tup{y_1,y_2} &\equivdef x_1\leq_{A_1} y_1 \text{ and } x_2\leq_{A_2}y_2     \:,
\\
\label{eq-len-prod}
\len{\tup{x_1,x_2}}_{A_1\times A_2} &\egdef \max(\len{x_1}_{A_1},\len{x_2}_{A_2})	  \:.
\shortintertext{%
The \emph{Kleene star} $A^\ast$ of a nwqo $A$
is the nwqo given by
}%
\label{eq-supp-star}
A^\ast \egdef \text{ all}&\text{ finite lists $(x_1\ldots x_n)$ of elements of $A$}		\:,
\\
\label{eq-order-star}
(x_1\ldots x_n)\leq_{A^\ast} (y_1\ldots y_m)
&\equivdef
\left\{
\begin{array}{l}
x_1\leq_A y_{i_1}
\wedge \cdots  \wedge
x_n\leq_A y_{i_n}
\\
\text{for some } 1\leq i_1<i_2<\cdots<i_n\leq m
\end{array}
\right.%
\:,
\\
\label{eq-len-star}
\len{(x_1\ldots x_n)}_{A^\ast} &\egdef \max(n,\len{x_1}_{A},\ldots,\len{x_n}_{A})	       \:.
\end{align}
\end{definition}

\noindent
It is well-known (and plain) that $A_1+A_2$ and $A_1\times A_2$ are indeed
wqo's when $A_1$ and $A_2$ are. The fact that $A^\ast$ is a wqo when $A$ is,
is a classical result called Higman's Lemma.
We let the reader check that the norm functions defined in Equations
\eqref{eq-len-sum}, \eqref{eq-len-prod}, and \eqref{eq-len-star}, are
proper and turn $A_1+A_2$, $A_1\times A_2$ and $A^\ast$ into nwqo's.
Finally, we note that nwqo isomorphism is a congruence for sum, product and
Kleene star.
\begin{starnotation}[$\bm{0}$ and $\bm{1}$]
We let $\bm{0}$ and $\bm{1}$ be short-hand notations for, respectively,
$\Upgamma_0$ (the empty nwqo) and $\Upgamma_1$ (the singleton nwqo with the $0$
norm).
\end{starnotation}

\noindent
This is convenient for writing down the following algebraic properties:
\ifthenelse{\boolean{proceedings_version}}{%
\begin{proposition}
}{%
\begin{proposition}[See App.~\ref{ax-iso-wqos}]
}
\label{prop-iso-wqos}
The following isomorphisms hold:\vspace*{-.5em}
\begin{xalignat*}{2}
A+ B &\equiv B+A
\: ,
&
A+(B+C) &\equiv (A+B)+C
\: ,
\\
A\times B &\equiv B\times A
\: ,
&
A\times (B\times C) &\equiv (A\times B)\times C
\: ,
\\
\bm{0}+A &\equiv A
\: ,
&
\bm{1}\times A &\equiv A
\: ,
\\
\bm{0}\times A &\equiv\bm{0}
\: ,
&
(A+A') \times B &\equiv (A\times B)+(A'\times B)
\: ,
\\
\bm{0}^\ast &\equiv \bm{1}
\: ,
&
\bm{1}^\ast &\equiv \Nat
\: .
\end{xalignat*}
\end{proposition}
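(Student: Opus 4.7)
The plan is to verify each of the ten isomorphisms by exhibiting a natural bijection and checking that it preserves both the order relation and, crucially, the norm function, since nwqo-isomorphisms must preserve norms (unlike plain wqo-isomorphisms). All of these are routine structural checks whose only mildly delicate point is norm preservation.

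For the commutativity and associativity of $+$ and $\times$, the witnessing bijections just swap components or reassociate triples. Order preservation is immediate from the tagged-pair definition of $\leq_{A+B}$ and the coordinatewise definition of $\leq_{A\times B}$. Norm preservation reduces to the fact that the norm on a disjoint sum is inherited unchanged from the appropriate summand, while the norm on a product is a $\max$, which is commutative and associative on $\Nat$.

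For the identity and absorption laws, $\bm{0}+A\equiv A$ is witnessed by $\tup{2,x}\mapsto x$ since $\bm{0}$ contributes nothing, and $\bm{0}\times A\equiv\bm{0}$ holds because both sides are empty. For $\bm{1}\times A\equiv A$, letting $\star$ denote the unique element of $\bm{1}$, the map $\tup{\star,x}\mapsto x$ preserves norms because $\len{\star}_{\bm{1}}=0$ and hence $\max(0,\len{x}_A)=\len{x}_A$. Distributivity $(A+A')\times B\equiv(A\times B)+(A'\times B)$ is witnessed by $\tup{\tup{i,x},y}\mapsto\tup{i,\tup{x,y}}$, whose verification is direct from the definitions of the tagged sum, the coordinatewise product, and their norms.

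The star identities are equally routine: $\bm{0}^\ast$ contains only the empty list $()$, of norm $\max(0)=0$, which matches the norm of the unique element of $\bm{1}$; and for $\bm{1}^\ast\equiv\Nat$ the bijection sending a length-$n$ list of $\star$'s to $n$ preserves order, since Higman embedding over a single-letter alphabet reduces to length comparison, and preserves norm, since $\max(n,0,\ldots,0)=n=\len{n}_\Nat$ (with the boundary $\max(0)=0$ when $n=0$). I expect no real obstacles: the formalism was deliberately set up, with $\max$-based norms on products and stars, precisely to make all such checks elementary.
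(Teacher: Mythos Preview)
Your proposal is correct and follows essentially the same approach as the paper's own proof: exhibit the obvious bijections, note that order preservation is classical, and verify that norms are preserved using the $\max$-based definitions. The paper's appendix is actually terser than your write-up, checking only the three ``main cases'' $\bm{1}\times A\equiv A$, $\bm{0}^\ast\equiv\bm{1}$, and $\bm{1}^\ast\equiv\Nat$ with exactly the norm computations you give.
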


\noindent
In view of these properties, we freely write $A\cdot k$ and $A^k$ for the
$k$-fold sums and products $A+\cdots+A$ and $A\times \cdots \times A$.
Observe that $A\cdot k\equiv A\times\Upgamma_k$.

\paragraph{Reflecting Normed Wqo's.}
Reflections are the main comparison/abstraction tool we shall use.
They let us simplify instances of the Descent Equation by replacing
all $A/x$ for $x\in A_{<n}$ by a single (or a few)  $A'$ that
is smaller than $A$ but large enough to reflect all considered $A/x$'s.

\begin{definition}
\label{def-reflection}
A \emph{nwqo reflection} is a mapping $h:A\rightarrow
B$ between two nwqo's that satisfies the two following properties:
\begin{align*}
\forall x,y\in A: \;
& h(x)\leq_B h(y) \text{ implies } x\leq_A y		\:,
\\
\forall x\in A: \;
& \len{h(x)}_B\leq \len{x}_A		\:.
\end{align*}
\end{definition}
In other words, a nwqo reflection is an order reflection that is also norm-decreasing (not
necessarily strictly).

We write $h:A\hookrightarrow B$ when $h$ is a nwqo reflection and say that
\emph{$B$ reflects $A$}. This induces a relation between nwqos, written
$A\hookrightarrow B$.

Reflection is transitive since $h:A\hookrightarrow B$ and
$h':B\hookrightarrow C$ entails $h'\circ h:A\hookrightarrow C$. It is also
reflexive, hence reflection is a quasi-ordering. Any nwqo reflects its
substructures since $\textit{Id}:X\hookrightarrow A$ when $X$ is a
substructure of $A$. Thus $\bm{0}\hookrightarrow A$ for any $A$, and
$\bm{1}\hookrightarrow A$ for any non-empty $A$.

\begin{example}
\label{ex-basic-residuals}
Among the basic nwqos from \autoref{ex-basic-wqos}, we note the
following relations (or absences thereof). For any $p\in\Nat$,
$\seg{p}\hookrightarrow\Upgamma_p$, while
$\Upgamma_p\not\hookrightarrow\seg{p}$ when $p\geq 2$.
The reflection of substructures yields $\seg{p}\hookrightarrow\Nat$ and
$\Upgamma_p\hookrightarrow\Upgamma_{p+1}$. Obviously,
$\Nat\not\hookrightarrow\seg{p}$ and
$\Upgamma_{p+1}\not\hookrightarrow\Upgamma_p$.
\end{example}

\noindent
Reflections preserve controlled bad sequences.
Let $h:A\hookrightarrow B$, consider a sequence
$\xxx=x_0,x_1,\ldots,x_l$ over $A$, and write $h(\xxx)$ for
$h(x_0),h(x_1),\ldots,h(x_l)$, a sequence over $B$. Then $h(\xxx)$ is bad
when $\xxx$ is, and $n$-controlled when $\xxx$ is. Hence:
\begin{equation}
\label{eq-majo-prop}
A\hookrightarrow B
\text{ implies }
\bad_{A}(n)\leq\bad_{B}(n) \text{ for all $n$}
\:.
\end{equation}
\noindent
Reflections are compatible with product, sum, and Kleene star.
\ifthenelse{\boolean{proceedings_version}}{%
\begin{proposition}[Reflection is a Preconguence]
}{%
\begin{proposition}[Reflection is a Preconguence, see App.~\ref{app-prop-precongr}]
}
\label{prop-precongr}
\begin{align}
A\hookrightarrow A' \;\text{and}\;
B\hookrightarrow B'
&\;\text{imply}\;
A+ B\hookrightarrow A'+ B' \;\text{and}\;
A\times B\hookrightarrow A'\times B'
\: ,
\\
A\hookrightarrow A'
&\;\text{implies}\;
A^\ast \hookrightarrow A'^\ast
\: .
\end{align}
\end{proposition}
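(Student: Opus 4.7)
The plan is to exhibit, in each case, an explicit candidate reflection built pointwise from the given reflections, and then to verify the two defining properties of \autoref{def-reflection}: order reflection and non-increase of the norm.

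For the sum, given $h_A:A\hookrightarrow A'$ and $h_B:B\hookrightarrow B'$, I would define $h_+:A+B\to A'+B'$ by $h_+(\langle 1,x\rangle)\egdef \langle 1,h_A(x)\rangle$ and $h_+(\langle 2,y\rangle)\egdef \langle 2,h_B(y)\rangle$. Order reflection follows because $\langle i,u\rangle\leq\langle j,v\rangle$ in a disjoint sum forces $i=j$, so if $h_+(\langle i,x\rangle)\leq h_+(\langle j,y\rangle)$ then $i=j$ and $h_A$ (or $h_B$) reflects the component inequality back to $A$ (or $B$). The norm condition reduces, by \eqref{eq-len-sum}, to $\len{h_A(x)}_{A'}\leq\len{x}_A$, which holds by hypothesis, and symmetrically for $B$.

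For the product I would define $h_\times:A\times B\to A'\times B'$ componentwise: $h_\times(\langle x,y\rangle)\egdef\langle h_A(x),h_B(y)\rangle$. If $h_\times(\langle x,y\rangle)\leq_{A'\times B'}h_\times(\langle x',y'\rangle)$, then by \eqref{eq-order-prod} both $h_A(x)\leq h_A(x')$ and $h_B(y)\leq h_B(y')$ hold, so order reflection in each coordinate gives $x\leq x'$ and $y\leq y'$, i.e.\ $\langle x,y\rangle\leq\langle x',y'\rangle$. For the norm, $\len{h_\times(\langle x,y\rangle)}_{A'\times B'}=\max(\len{h_A(x)}_{A'},\len{h_B(y)}_{B'})\leq\max(\len{x}_A,\len{y}_B)=\len{\langle x,y\rangle}_{A\times B}$ by \eqref{eq-len-prod} and monotonicity of $\max$.

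For the Kleene star, given $h:A\hookrightarrow A'$, I would define $h^\ast:A^\ast\to A'^\ast$ by applying $h$ letterwise: $h^\ast((x_1\ldots x_n))\egdef (h(x_1)\ldots h(x_n))$. Crucially this preserves length. If $h^\ast((x_1\ldots x_n))\leq_{A'^\ast} h^\ast((y_1\ldots y_m))$, Higman's condition \eqref{eq-order-star} supplies indices $1\leq i_1<\cdots<i_n\leq m$ with $h(x_k)\leq_{A'} h(y_{i_k})$ for each $k$; reflection of $h$ turns each of these into $x_k\leq_A y_{i_k}$, and the same indices witness $(x_1\ldots x_n)\leq_{A^\ast}(y_1\ldots y_m)$. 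For the norm, \eqref{eq-len-star} gives $\len{h^\ast((x_1\ldots x_n))}_{A'^\ast}=\max(n,\len{h(x_1)}_{A'},\ldots,\len{h(x_n)}_{A'})$; each $\len{h(x_k)}_{A'}\leq\len{x_k}_A$ and the length $n$ is unchanged, so this is bounded by $\max(n,\len{x_1}_A,\ldots,\len{x_n}_A)=\len{(x_1\ldots x_n)}_{A^\ast}$.

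No step is particularly hard; the only point requiring a moment's care is the star case, where one must remember that $h^\ast$ preserves the length of a list, so the component $n$ in the norm \eqref{eq-len-star} matches on both sides, and that Higman's embedding is witnessed by indices that can be transported unchanged through a letterwise reflection.
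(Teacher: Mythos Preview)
Your proof is correct and follows essentially the same approach as the paper: define $h+h'$, $h\times h'$, and $h^\ast$ in the obvious pointwise/componentwise/letterwise way and verify order reflection and norm-decrease directly from the definitions. The paper's appendix only spells out the star case in detail (the other two being deemed immediate), and your argument for that case matches it line for line.
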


\paragraph{Computing and Reflecting Residuals.}
We may now tackle our first main problem: computing residuals $A/x$. This
is done by induction over the structure of $A$.
\ifthenelse{\boolean{proceedings_version}}{%
\begin{proposition}[Inductive Rules For Residuals]
}{%
\begin{proposition}[Inductive Rules For Residuals, see App.~\ref{app-prop-refl-residuals}]
}
\label{prop-refl-residuals}
\begin{align}
\label{eq-residual-sum}
(A+B)/\tup{1,x} = (A/x)+B
\: ,
&\quad
(A+B)/\tup{2,x} = A+(B/x)
\: ,
\\
\label{eq-residual-prod}
(A\times B)/\tup{x,y} &\hookrightarrow
\bigl[(A/x)\times B\bigr]+\bigl[A\times(B/y)\bigr]
\: ,
\\
\label{eq-residual-star}
A^\ast/(x_1\ldots x_n) &\hookrightarrow \Upgamma_n\times A^n\times (A/x_1)^\ast\times
\cdots \times (A/x_n)^\ast\:,\\
\label{eq-residual-finite-star}
\Upgamma_{p+1}^\ast/(x_1\ldots x_n)&\hookrightarrow \Upgamma_n\times (\Upgamma_p^\ast)^{n}\;.
\end{align}
\end{proposition}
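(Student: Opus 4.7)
My approach treats \eqref{eq-residual-sum} and \eqref{eq-residual-prod} by unfolding definitions, while \eqref{eq-residual-star} is the substantive case and hinges on a \emph{greedy subword-embedding} decomposition; \eqref{eq-residual-finite-star} then falls out as a clean specialization of \eqref{eq-residual-star} to the discrete alphabet $\Upgamma_{p+1}$.

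For \eqref{eq-residual-sum}, I unpack the order on $A+B$: by \eqref{eq-order-sum}, $\tup{1,x}\leq\tup{i,z}$ iff $i=1$ and $x\leq_A z$, so the residual is exactly $\{\tup{1,z}\mid z\in A/x\}\cup\{\tup{2,z}\mid z\in B\}$, namely the support of $(A/x)+B$. The induced order and the norm \eqref{eq-len-sum} coincide on both sides, yielding literal equality. The case $\tup{2,x}$ is symmetric. For \eqref{eq-residual-prod}, I introduce the case-split map $h:(A\times B)/\tup{x,y}\to\bigl[(A/x)\times B\bigr]+\bigl[A\times(B/y)\bigr]$ sending $\tup{x',y'}$ to $\tup{1,\tup{x',y'}}$ when $x\not\leq_A x'$, and otherwise to $\tup{2,\tup{x',y'}}$---in which case $y\not\leq_B y'$ must hold, since $\tup{x,y}\not\leq\tup{x',y'}$. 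Order reflection is immediate: any $h(p_1)\leq h(p_2)$ forces equal summand indices, whereupon componentwise product order recovers $p_1\leq_{A\times B}p_2$. Norms are preserved exactly, since the max of the two coordinates is unaffected by the reindexing.

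For \eqref{eq-residual-star}, the heart of the argument is a \emph{greedy subword embedding}. Given a word $w=(y_1\ldots y_m)\in A^\ast/(x_1\ldots x_n)$, I scan $w$ left-to-right and, at step $k$, pick the smallest $i_k>i_{k-1}$ (with $i_0\egdef 0$) satisfying $x_k\leq_A y_{i_k}$; since $w$ is bad w.r.t.\ $(x_1\ldots x_n)$, the procedure fails at some first step $j\in\{1,\ldots,n\}$. This yields the decomposition
\[
    w \;=\; u_1\,z_1\,u_2\,z_2\cdots z_{j-1}\,u_j\,,
\]
with $z_k\egdef y_{i_k}\in A$ for $k<j$ and $u_k$ the infix strictly between $y_{i_{k-1}}$ and $y_{i_k}$ (resp.\ the suffix after $y_{i_{j-1}}$ for $u_j$). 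By minimality of the $i_k$'s, no letter of $u_k$ has $x_k$ below it, hence $u_k\in(A/x_k)^\ast$. The reflection sends $w$ to $\bigl(a_j,\langle z_1,\ldots,z_{j-1},d,\ldots,d\rangle,u_1,\ldots,u_j,\epsilon,\ldots,\epsilon\bigr)$, where the filler $d$ is $z_{j-1}$ when $j>1$, otherwise the first letter of $w$, or any canonical norm-$0$ element of $A$ when $w=\epsilon$ (one is available in every basic nwqo of interest). For order reflection, discreteness of $\Upgamma_n$ forces two comparable images to share a common $j$; the coordinate-wise embeddings $z_k^{(1)}\leq_A z_k^{(2)}$ and $u_k^{(1)}\leq_{(A/x_k)^\ast}u_k^{(2)}$ then splice into a single strictly-increasing sequence of positions witnessing $w^{(1)}\leq_{A^\ast}w^{(2)}$. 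For norm non-increase, every $z_k$ and every letter of every $u_k$ appears in $w$, and $|u_k|\leq m\leq\len{w}_{A^\ast}$.

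Finally, \eqref{eq-residual-finite-star} is obtained by specializing the construction above to $A=\Upgamma_{p+1}$: since the order on $\Upgamma_{p+1}$ is equality, the constraint $x_k\leq z_k$ forces $z_k=x_k$, so the $z_k$'s are fully determined by the denominator $(x_1\ldots x_n)$ and carry no information---the $A^n$ factor can simply be dropped. Combining this with $\Upgamma_{p+1}/x_k\equiv\Upgamma_p$ from \autoref{ex-residuals} lands the reflection in $\Upgamma_n\times(\Upgamma_p^\ast)^n$ as claimed. The main obstacle is the order-reflection step in \eqref{eq-residual-star}: one must verify that the coordinate-wise embeddings genuinely glue into one strictly-increasing witness of the global subword embedding, and that the padding convention in $A^n$ does not inflate the norm above $\len{w}_{A^\ast}$---both resolved by the greedy decomposition and the three-case choice of $d$ above.
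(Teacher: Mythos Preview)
Your argument is correct. For \eqref{eq-residual-sum} and \eqref{eq-residual-prod} you do exactly what the paper does. For \eqref{eq-residual-star} and \eqref{eq-residual-finite-star} you take a genuinely different packaging of the same underlying idea: the paper isolates an auxiliary one-step lemma
\[
A^\ast/(x_1\,x_2\ldots x_n)\hookrightarrow (A/x_1)^\ast\times\bigl[\bm{1}+\uparrow_{\!A}x_1\times(A^\ast/(x_2\ldots x_n))\bigr]
\]
(peel off the leftmost match of $x_1$, or record that there is none) and then inducts on $n$, whereas you unroll the whole greedy leftmost-embedding in one shot and pad the missing coordinates explicitly. Your direct construction is arguably more transparent once stated, but it forces you to manage the padding element $d$ and the empty-word corner case by hand; the paper's inductive route hides this bookkeeping inside the ``$\bm{1}+\cdots$'' summand and the step $\bm{1}\hookrightarrow A^n\times\prod_{i\geq 2}(A/x_i)^\ast$, which incidentally relies on the same implicit hypothesis (a norm-$0$ element in $A$) that you flag. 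For \eqref{eq-residual-finite-star}, the paper runs a second, lighter induction exploiting $\uparrow_{\Upgamma_{p+1}}x\equiv\bm{1}$, while you observe directly that in the discrete case $z_k=x_k$ is forced, so the $A^n$ coordinate is redundant for order reflection and can be omitted from the map; this is a cleaner shortcut. One phrasing nit: ``the $A^n$ factor can simply be dropped'' reads as if you are post-composing with a projection (which would not be a reflection); what you actually do---and what works---is define a \emph{new} reflection that never records the $z_k$'s, and your justification (they are determined by $j$ and the denominator) is exactly the right one.
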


\noindent
Equation~\eqref{eq-residual-finite-star} is a refinement of
\eqref{eq-residual-star} in the case of finite alphabets%
.
Since it provides reflections instead of isomorphisms,
\autoref{prop-refl-residuals} is not meant to support exact
computations of $A/x$ by induction over the structure of $A$. More to
the point, it yields over-approximations that are sufficiently precise
for our purposes while bringing important simplifications when we have
to reflect (the max of) all $A/x$ for all $x\in A_{<n}$.

\iffalse
It would be possible, by slightly extending our algebraic framework, to
write formulae that compute precisely the support set of an arbitrary
$(A\times B)/\tup{x,y}$ or $(A^\ast)/(x_1\ldots x_n)$. However these formulae
must choose between one of two evils. Either they do not preserve the norms
of elements, and then they just provide a reflection, not an isomorphism.
Or, being more complex, they preserve the norms of elements and then bring
no simplification at all (what do you expect from isomorphism?).
\fi
%
%
%

%
%
%
%
%

%

%

\section{Reflecting Residuals in Ordinal Arithmetic}
\label{sec-otypes}

We now introduce an \emph{ordinal} notation for nwqo's. The purpose is
twofold.  Firstly, the ad-hoc techniques we use for evaluating,
reflecting, and comparing residual nwqo's are more naturally stated
within the language of ordinal arithmetic.  Secondly, these ordinals
will be essential for bounding $\bad_A$ using functions in
subrecursive hierarchies.  For these developments, we restrict
ourselves to \emph{exponential} nwqo's, i.e., nwqo's obtained from
finite $\Upgamma_p$'s with sums, products, and \emph{Kleene star
  restricted to the $\Upgamma_p$'s}.  Modulo isomorphism,
$\Nat^k\equiv\prod_{i=1}^k\Upgamma_1^\ast$ is exponential.

\paragraph{Ordinal Terms.}
We use Greek letters like $\alpha,\beta,\ldots$ to denote ordinal
terms in Cantor Normal Form (CNF) built using $0$, addition, and
$\omega$-exponentiation (we restrict ourselves to ordinals
$<\ezero$).  A term $\alpha$ has the general form
$\alpha=\omega^{\beta_1}+\omega^{\beta_2}+\cdots+\omega^{\beta^m}$
with $\beta_1\geq\beta_2\geq \cdots \geq\beta_m$ (ordering defined
below) and where we distinguish between three cases: $\alpha$ is $0$
if $m=0$, $\alpha$ is a \emph{successor} if ($m>0$ and) $\beta_m=0$,
$\alpha$ is a \emph{limit} if $\beta_m\not=0$ (in the following,
$\lambda$ will always denote a limit, and we write $\alpha+1$ rather
than $\alpha+\omega^0$ for a successor).  We say that $\alpha$ is
principal (additive) if $m=1$.

Ordering among our ordinals is defined inductively by
\begin{equation}
\label{eq-def-ordering}
\alpha < \alpha'
\equivdef
\left\{\begin{array}{l}
\text{$\alpha=0$ and $\alpha'\not=0$, or}
\\
\text{$\alpha=\omega^\beta+\gamma$, $\alpha'=\omega^{\beta'}+\gamma'$
  and}
\left\{\begin{array}{l}
\text{$\beta<\beta'$, or}
\\
\text{$\beta=\beta'$ and $\gamma<\gamma'$.}
\end{array}\right.%
\end{array}\right.%
\end{equation}
We let $\CNF[\alpha]$ denote the set of ordinal terms ${<}\alpha$.

For $c\in\Nat$, $\omega^\beta\cdot c$ denotes the $c$-fold addition
$\omega^\beta + \cdots + \omega^\beta$.  We sometimes write terms
under a ``strict'' form $\alpha= \omega^{\beta_1}\cdot
c_1+\omega^{\beta_2}\cdot c_2+\cdots+\omega^{\beta^m}\cdot c_m$ with
$\beta_1>\beta_2>\cdots>\beta_m$, where the $c_i$'s, called
\emph{coefficients}, must be $>0$.

Recall the definitions of the \emph{natural sum} $\alpha\oplus\alpha'$
and \emph{natural product} $\alpha\otimes\alpha'$ of two terms in
$\CNF[\ezero]$:
\begin{align*}
  \sum_{i=1}^m\omega^{\beta_i}\oplus\sum_{j=1}^n\omega^{\beta'_j}&\eqdef\sum_{k=1}^{m+n}\omega^{\gamma_k}\:,&
  \sum_{i=1}^m\omega^{\beta_i}\otimes\sum_{j=1}^n\omega^{\beta'_j}&\eqdef\bigoplus_{i=1}^m\bigoplus_{j=1}^n\omega^{\beta_i\oplus\beta'_j}
\:,
\end{align*}
where $\gamma_1\geq\cdots\geq\gamma_{m+n}$ is a rearrangement of
$\beta_1,\ldots,\beta_m,\beta'_1,\ldots,\beta'_n$.
For $\alpha\in\CNF$, the decomposition
$\alpha=\sum_{i=1}^m\omega^{\beta_i}$ uses $\beta_i$'s that are in
$\CNF[\omega^\omega]$, i.e., of the form
$\beta_i=\sum_{j=1}^{k_i}\omega^{p_{i,j}}$ (with each
$p_{i,j}<\omega$) so that $\omega^{\beta_i}$ is
$\bigotimes_{j=1}^{k_i}\omega^{\omega^{p_{i,j}}}$. A term
$\omega^{\omega^{p}}$ is called a \emph{principal multiplicative}.

We map exponential nwqo's to ordinals in $\CNF$ using their
\emph{maximal order type}~\citep{dejongh77}.  Formally $o(A)$ is
defined by
\begin{align}
\label{eq-odef-1}
  o(\Upgamma_p)&\eqdef p \:, 
& o(\Upgamma_0^\ast)&\eqdef\omega^0 \:,
& o(\Upgamma_{p+1}^\ast)&\eqdef\omega^{\omega^p} \:,
\end{align}\vspace*{-1em}
\begin{align}
\label{eq-odef-2}
 o(A+B)&\eqdef o(A)\oplus o(B) \:,
&o(A\times B)&\eqdef o(A)\otimes o(B) \:.
\end{align}
Conversely, there is a \emph{canonical exponential nwqo} $C(\alpha)$ for
each $\alpha$ in $\CNF$:
\begin{align}
\label{eq-Cdef}
C\Bigl(\omega^{\beta_1} +\cdots +\omega^{\beta_m}\Bigr)
=
	C\Bigl(\bigoplus_{i=1}^m\bigotimes_{j=1}^{k_i}\omega^{\omega^{p_{i,j}}}\Bigr)
\egdef
		 \sum_{i=1}^m\prod_{j=1}^{k_i}\Upgamma_{(p_{i,j}+1)}^\ast
\: .
\end{align}
Then, $o$ and $C$ are bijective inverses (modulo isomorphism of nwqo's),
compatible with sums and products\ifthenelse{\boolean{proceedings_version}}{%
.
}{%
 (see App.~\ref{app-comments}).
}
This
correspondence equates between terms that, on one side, denote partial
orderings with norms, and on the other side, ordinals in $\CNF$.

\paragraph{Derivatives.}
We aim to replace the ``all $A/x$ for $x\in A_{<n}$'' by a computation
of ``some derived $\alpha'\in\partial_n\alpha$'' where $\alpha=o(A)$,
see \autoref{theo-deriv-refl} below. For this purpose, the definition
of derivatives is based on the inductive rules in
\autoref{prop-refl-residuals}.

Let $n>0$ be some norm. We start with principal ordinals and define
\begin{align}
  \label{eq-dn}
  D_n\!\!\left(\omega^{\omega^p}\right)
  &\eqdef\begin{cases}
    n-1&\text{if $p=0$,}\\
    \omega^{(\omega^{p-1}\cdot (n-1))}\cdot (n-1)&\text{otherwise.}
  \end{cases}
\\
  \label{eq-Dn}
  D_n\left(\omega^{\omega^{p_1}+\cdots+\omega^{p_k}}\right)
  &\eqdef\bigoplus_{j=1}^k\left(D_n\!\!\left(\omega^{\omega^{p_j}}\right)\otimes\bigotimes_{\ell\neq
    j}\omega^{\omega^{p_\ell}}\right) \:.
\end{align}
Now, with any $\alpha\in\CNF$, we associate the set of its \emph{derivatives}
$\partial_n\alpha$ with
\begin{align}
\label{eq-partialn}
\partial_n\Bigl(\sum^m_{i=1}\omega^{\beta_i}\Bigr)
&\egdef
\Bigl\{
D_n\bigl(\omega^{\beta_i}\bigr)\oplus\sum_{\ell\not=i}\omega^{\beta_\ell}
~\Big\lvert~ i = 1, \ldots, m
\Bigr\} \: .
\end{align}

\noindent
This yields, for example, and assuming $p,k>0$:
\begin{align}
\label{eq-deriv-1}
\!\!\!\!D_n(1) & = 0
,
&
\!\!\!\!D_n(\omega) & = n-1
,
&
\!\!\!\!D_n\bigl(\omega^{\omega^{p}\cdot k}\bigr) &= \omega^{[\omega^{p}\cdot (k-1)+\omega^{p-1}\cdot
    (n-1)]}\cdot k(n-1)
\,,
\end{align}\vspace{-1em}
\begin{align}
\label{eq-deriv-2}
\!\!\!\partial_n 0 & =\emptyset,
&
\!\!\!\!\partial_n 1 & = \{ 0 \},
&
\!\!\!\!\partial_n \omega & = \{ n-1 \},
&
\!\!\!\!\partial_n (\omega^\beta\cdot (k+1))
& = \{\omega^\beta\cdot k \oplus D_n(\omega^\beta)\}.\!\!
\end{align}
Thus $\partial_n \alpha$ can be a singleton even when $\alpha$ is not
principal, e.g., $\partial_n (p+1)=\{p\}$. 
We sometimes write $\alpha\mathrel{\partial_n}\alpha'$ instead
of $\alpha'\in\partial_n \alpha$, seeing
 $\partial_n$  as
a relation. 
Note that $\partial_n\alpha \subseteq\CNF[\alpha]$\ifthenelse{\boolean{proceedings_version}}{%
,
}{%
 (see
App.~\ref{app-deriv-is-wf}), 
}
hence
$\partial\egdef\bigcup_{n<\omega}\mathrel{\partial_n}$ is
well-founded.

\ifthenelse{\boolean{proceedings_version}}{%
\begin{theorem}[Reflection by Derivatives]
}{%
\begin{theorem}[Reflection by Derivatives, see App.~\ref{app-theo-deriv-refl}]
}
\label{prop-deriv-major}	%
\label{theo-deriv-refl}
Let $x\in A_{<n}$ for some exponential $A$.
Then there exists $\alpha'\in\partial_n o(A)$ s.t.\ $A/x\hookrightarrow C(\alpha')$.
\end{theorem}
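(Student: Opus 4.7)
My plan is a structural induction on $\alpha = o(A)$ in $\CNF$. Since $o$ and $C$ are mutually inverse modulo isomorphism, I may take $A \equiv C(\alpha)$ throughout; the induction then splits along the three clauses \eqref{eq-dn}, \eqref{eq-Dn}, \eqref{eq-partialn} defining $\partial_n$: non-trivial additive sums, principal $\omega^\beta$ with $\beta$ having at least two summands (handled by products), and the atomic case $\alpha = \omega^{\omega^p}$ (handled via Kleene star). In each case I invoke the matching clause of \autoref{prop-refl-residuals} to rewrite $A/x$, then use the induction hypothesis together with the precongruence \autoref{prop-precongr} to assemble the required reflection into some $C(\alpha')$ with $\alpha' \in \partial_n \alpha$.

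If $\alpha = \omega^{\beta_1} + \cdots + \omega^{\beta_m}$ with $m \geq 2$, then $A \equiv A_1 + A_2$ where $A_1 = C(\omega^{\beta_1})$ and $A_2 = C(\sum_{i \geq 2}\omega^{\beta_i})$; the element $x$ lies in one summand, say $x = \tup{1, x'}$ with $\len{x'}_{A_1} < n$ by \eqref{eq-len-sum}. Equation \eqref{eq-residual-sum} yields $A/x = (A_1/x') + A_2$ exactly. The IH supplies $\gamma' = D_n(\omega^{\beta_1}) \in \partial_n \omega^{\beta_1}$ with $A_1/x' \hookrightarrow C(\gamma')$, and \autoref{prop-precongr} then gives $A/x \hookrightarrow C(\gamma' \oplus \sum_{i \geq 2}\omega^{\beta_i})$, whose ordinal lies in $\partial_n \alpha$ by \eqref{eq-partialn}.

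If $\alpha = \omega^\beta$ is principal with $\beta = \sum_{j=1}^k \omega^{p_j}$ and $k \geq 2$, then $A = \prod_{j=1}^k \Upgamma_{p_j+1}^\ast$ by \eqref{eq-Cdef}, and $x = (x_1,\ldots,x_k)$ satisfies $\max_j \len{x_j} \leq \len{x}_A < n$ by \eqref{eq-len-prod}. Iterating the binary rule \eqref{eq-residual-prod} produces
\[
A/x \;\hookrightarrow\; \sum_{j=1}^k \Bigl(\prod_{\ell < j} \Upgamma_{p_\ell+1}^\ast\Bigr) \times \bigl(\Upgamma_{p_j+1}^\ast/x_j\bigr) \times \Bigl(\prod_{\ell > j} \Upgamma_{p_\ell+1}^\ast\Bigr)\:.
\]
Applying the IH to each factor $\Upgamma_{p_j+1}^\ast$ (which has $o$-value $\omega^{\omega^{p_j}} < \alpha$) replaces $\Upgamma_{p_j+1}^\ast/x_j$ by $C(D_n(\omega^{\omega^{p_j}}))$; comparing the $o$-value of the resulting expression with \eqref{eq-Dn} recovers exactly $D_n(\omega^\beta)$, the unique element of $\partial_n\alpha$.

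The base case is $\alpha = \omega^{\omega^p}$, i.e., $A = \Upgamma_{p+1}^\ast$, with $x = (x_1\ldots x_k)$ of length $k \leq n-1$ by \eqref{eq-len-star}. Equation \eqref{eq-residual-finite-star} reflects $A/x$ into $\Upgamma_k \times (\Upgamma_p^\ast)^k$. For $p = 0$, $\Upgamma_0^\ast \equiv \bm{1}$ collapses this to $\Upgamma_k \hookrightarrow \Upgamma_{n-1} \equiv C(n-1) = C(D_n(\omega))$. For $p \geq 1$, I exploit that $\Upgamma_p^\ast$ contains the empty word $\varepsilon$ of norm $0$: the map $(u_1,\ldots,u_k) \mapsto (u_1,\ldots,u_k,\varepsilon,\ldots,\varepsilon)$ is a norm-preserving order reflection $(\Upgamma_p^\ast)^k \hookrightarrow (\Upgamma_p^\ast)^{n-1}$, and composed with $\Upgamma_k \hookrightarrow \Upgamma_{n-1}$ via \autoref{prop-precongr} it yields $\Upgamma_k \times (\Upgamma_p^\ast)^k \hookrightarrow \Upgamma_{n-1} \times (\Upgamma_p^\ast)^{n-1}$, whose $o$-value is $\omega^{\omega^{p-1}(n-1)} \cdot (n-1) = D_n(\omega^{\omega^p})$. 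The main delicate point will be this bookkeeping---matching the iterated binary product reflection in the principal case with the natural-sum/natural-product shape of \eqref{eq-Dn}, together with the norm-preserving padding-by-empty-words trick in the Kleene-star base case.
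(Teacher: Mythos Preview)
Your approach is essentially the same as the paper's: both argue by a case analysis on the canonical form of $A$ (equivalently, on the CNF shape of $\alpha=o(A)$), invoking \eqref{eq-residual-sum} for sums, the iterated \eqref{eq-residual-prod} for products, and \eqref{eq-residual-finite-star} for the Kleene-star atoms, then matching against the defining clauses \eqref{eq-partialn}, \eqref{eq-Dn}, \eqref{eq-dn} of $\partial_n$. The paper presents this as four nested cases (finite $A$, a single $\Upgamma_{p+1}^\ast$, a product of such, a sum of products) rather than a formal induction on $\alpha$, but the content is identical.

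One small gap in your case split: your three cases (sum with $m\geq 2$; principal with $k\geq 2$ exponents; atomic $\omega^{\omega^p}$ with $k=1$) omit the principal term $\alpha=\omega^0=1$, i.e., $A\equiv\bm{1}$, where $\beta=0$ has $k=0$ summands. This case is needed as a base for your sum-case recursion whenever some $\beta_i=0$ (it is what the paper's Case~1 covers). It is of course trivial---$A/x=\bm{0}\hookrightarrow C(0)=C(D_n(1))$---but should be stated. Similarly, in the sum case you only treat $x=\tup{1,x'}$; the case $x=\tup{2,x'}$ is symmetric but requires observing that if the IH on $A_2$ yields some $\alpha''\in\partial_n o(A_2)$, then $\omega^{\beta_1}\oplus\alpha''$ still lies in $\partial_n\alpha$ by the shape of \eqref{eq-partialn}.
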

\noindent
Combining with equations
\eqref{eq-descent} and \eqref{eq-majo-prop}, we obtain:
\begin{equation}
\label{eq-main-inequality}
  \bad_{C(\alpha)}(n)\leq
  \max_{\alpha'\in\partial_{n}\alpha}\bigl\{1+\bad_{C(\alpha')}(g(n))\bigr\}\;.
\end{equation}
\iffalse%
\begin{corollary}[Main Inequality]\label{th-main-ineq}%
\begin{equation*}%
%
  \bad_{C(\alpha)}(n)\leq
  \max_{\alpha'\in\partial_{n}\alpha}\bigl\{1+\bad_{C(\alpha')}(g(n))\bigr\}\;.
\end{equation*}
\end{corollary}
\begin{proof}Combine \autoref{prop-deriv-major} with equations
\eqref{eq-descent} and \eqref{eq-majo-prop}.\end{proof}
\fi

%
%
%
%
%

%

\section{Classifying $\bad$ using Subrecursive
  Hierarchies}\label{sec-hierarchies}
For $\alpha$ in
$\mathrm{CNF}(\omega^{\omega^\omega})$, define
\begin{equation}\label{eq:ubound}
  \uboundd{\alpha}(n)\eqdef\max_{\alpha'\in\partial_{n}\alpha}\left\{1+\uboundd{\alpha'}(g(n))\right\} \:.
\end{equation}
(Recall that $\partial$ is well-founded, thus
\eqref{eq:ubound} is well-defined). Comparing with
\eqref{eq-main-inequality}, we see that $\ubound_\alpha$ bounds the
length function: $\ubound_\alpha(n)\geq\badd{C(\alpha)}(n)$.

This defines an ordinal-indexed family of functions
$(\uboundd{\alpha})_{\alpha\in\mathrm{CNF}(\omega^{\omega^\omega})}$
similar to some classical subrecursive hierarchies, with the added
twist of the $\max$ operation---see \citep{buchholtz94,moser03} for
somewhat similar hierarchies. This is a real issue and one cannot
replace a ``$\max_{\alpha\in\ldots}\{\ubound_{\alpha}(x)\}$'' with
``$\ubound_{\sup\{\alpha\in\ldots\}}(x)$'' since $\ubound_{\alpha}$ is
not always bounded by $\ubound_{\alpha'}$ when $\alpha<\alpha'$.
E.g., $\ubound_{n+2}(n)=n+2>\ubound_{\omega}(n)=n+1$.

\paragraph{Subrecursive Hierarchies}
have been introduced as generators of classes
of functions.  For instance, writing $\FGH{\alpha}$ for the class of
functions elementary-recursive in the function $F_{\alpha}$ of the
\emph{fast growing hierarchy}, we can characterize the set of
primitive-recursive functions as $\bigcup_{k<\omega}\FGH{k}$, or that of
multiply-recursive functions as
$\bigcup_{\beta<\omega^{\omega}}\FGH{\beta}$~\citep{fast}.

Let us introduce (slight generalizations of) several classical
hierarchies from \citep{fast,cichon98}.  Those hierarchies are
defined through assignments of \emph{fundamental sequences}
$(\lambda_x)_{x<\omega}$ for limit ordinals $\lambda<\ezero$, verifying $\lambda_x<\lambda$ for all $x$ and
$\lambda=\sup_x\lambda_x$.  A standard assignment is defined by:
\begin{align}
\label{eq-fund}
  (\gamma+\omega^{\beta+1})_x&\eqdef\gamma+\omega^\beta\cdot (x+1)\,,&
  (\gamma+\omega^\lambda)_x&\eqdef\gamma+\omega^{\lambda_x}\;,
\end{align}
where $\gamma$ can be $0$.  Note that, in particular,
   $\omega_x=x+1$.  Given an assignment of fundamental sequences, one
   can define the ($x$-indexed) \emph{predecessor} $P_x(\alpha)<\alpha$
   of an ordinal $\alpha\not=0$ as
\begin{align}
\label{eq-pred}
 P_x(\alpha+1)&\eqdef\alpha\,,&P_x(\lambda)&\eqdef P_x(\lambda_x)\;.
\end{align}
Given a fixed smooth control function $h$, the \emph{Hardy hierarchy} $(h^\alpha)_{\alpha<\ezero}$ is
then defined by
\begin{align}\label{eq-hardy-h}
  \!\!h^0(x)&\eqdef x,&\!\!h^{\alpha+1}(x)&\eqdef
  h^\alpha(h(x)),&\!\!h^\lambda(x)&\eqdef h^{\lambda_{x}}(x)\,.
  \shortintertext{A closely related hierarchy is the \emph{length
      hierarchy} $(h_{\alpha})_{\alpha<\ezero}$ defined by}\label{eq-length-h}
  \!\!h_0(x)&\eqdef 0,&\!\!h_{\alpha+1}(x)&\eqdef
  1+h_{\alpha}(h(x)),&\!\!h_{\lambda}(x)&\eqdef
  h_{\lambda_{x}}(x)\,.
  \shortintertext{Last of all, the \emph{fast growing hierarchy} 
$(f_{\alpha})_{\alpha<\ezero}$ is
    defined through}\label{eq-fast-h}
  \!\!f_0(x)&\eqdef h(x),&\!\!f_{\alpha+1}(x)&\eqdef
  f_{\alpha}^{\omega_x}(x),&\!\!f_{\lambda}&\eqdef
  f_{\lambda_{x}}(x)\,.
\end{align}
Standard versions of these hierarchies are usually defined by setting $h$
as the successor function, in which case they are denoted $H^\alpha$,
$H_\alpha$, and $F_\alpha$ resp.
\ifthenelse{\boolean{proceedings_version}}{%
\begin{lemma}
}{%
\begin{lemma}[\citep{cichon83,cichon98} or App.~\ref{ax-hierarchies}] 
}
\label{lem-hierarchies}
For all $\alpha\in\CNF$ and $x\in\Nat$,\hfill%
  \begin{enumerate}
  \item\label{lem-hierarchies-1}
	$h_{\alpha}(x)=1+h_{P_{x}(\alpha)}(h(x))$ when $\alpha>0$,
  \item\label{lem-hierarchies-2}
	$h_{\alpha}(x)\leq h^\alpha(x)-x$,
  \item\label{lem-hierarchies-3} 
	$h^{\omega^\alpha\cdot r}(x)=f^r_{\alpha}(x)$ for all $r<\omega$,
  \item\label{lem-hierarchies-4}
	if $h$ is eventually bounded by $F_{\gamma}$, then
        $f_{\alpha}$ is eventually bounded by $F_{\gamma+\alpha}$.
  \end{enumerate}
\end{lemma}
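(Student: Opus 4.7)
The plan is to prove each of the four items by transfinite induction on $\alpha$, unfolding the definitions of $h^\alpha$, $h_\alpha$, $f_\alpha$ from \eqref{eq-hardy-h}--\eqref{eq-fast-h}, invoking the fundamental-sequence assignment \eqref{eq-fund} and the predecessor relation \eqref{eq-pred} at each step, and freely using the smoothness of $h$ and monotonicity of the hierarchy functions in their numerical argument.

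For (1), the successor case $\alpha = \alpha' + 1$ is immediate from the definitions, since both $h_\alpha(x) = 1 + h_{\alpha'}(h(x))$ and $P_x(\alpha) = \alpha'$; the limit case $\alpha = \lambda$ will follow by the inductive hypothesis applied at $\lambda_x < \lambda$, using $h_\lambda(x) = h_{\lambda_x}(x)$ and the identity $P_x(\lambda_x) = P_x(\lambda)$ built into \eqref{eq-pred}. For (2), I would start from $h_0(x) = 0 = h^0(x) - x$ and, in the successor step, combine the inductive bound $h_{\alpha'}(h(x)) \leq h^{\alpha'}(h(x)) - h(x)$ with smoothness $h(x) \geq x + 1$ to get $h_{\alpha'+1}(x) \leq h^{\alpha'}(h(x)) - x = h^{\alpha'+1}(x) - x$; the limit step unfolds both sides along $\lambda_x$.

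For (3), I would first settle the case $r = 1$ by induction on $\alpha$, then lift to all $r < \omega$ via the compositional identity $h^{\alpha + \beta}(x) = h^\alpha(h^\beta(x))$ (itself a routine induction on $\beta$ using the standard $(\alpha + \lambda)_x = \alpha + \lambda_x$). At $r=1$: the base $\alpha = 0$ matches $h^r(x) = f_0^r(x)$ since $f_0 = h$; for a successor $\alpha' + 1$, the fundamental sequence gives $(\omega^{\alpha'+1})_x = \omega^{\alpha'} \cdot (x+1)$, whence
\begin{equation*}
  h^{\omega^{\alpha'+1}}(x) = h^{\omega^{\alpha'} \cdot (x+1)}(x) = f_{\alpha'}^{x+1}(x) = f_{\alpha'}^{\omega_x}(x) = f_{\alpha'+1}(x)
\end{equation*}
using the induction hypothesis at the finite exponent $x+1$ and $\omega_x = x + 1$; for a limit $\lambda$, both sides collapse to their $\lambda_x$-indexed values.

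The hard part will be (4), which I would attack by transfinite induction on $\alpha$, reading ``eventually bounded'' as $\forall^{\infty} x : f_\alpha(x) \leq F_{\gamma + \alpha}(x)$. The successor case is immediate from (3) and monotonicity: $f_{\alpha'+1}(x) = f_{\alpha'}^{x+1}(x) \leq F_{\gamma + \alpha'}^{x+1}(x) = F_{\gamma + \alpha' + 1}(x)$ once $x$ exceeds the inductive threshold, each iterate staying above that threshold by smoothness. The limit case is the technical crux: the standard assignment gives $(\gamma + \lambda)_x = \gamma + \lambda_x$ and hence $F_{\gamma + \lambda}(x) = F_{\gamma + \lambda_x}(x)$, which pairs with $f_\lambda(x) = f_{\lambda_x}(x)$; but the inductive bound at $\lambda_x$ only kicks in beyond a threshold that itself depends on $\lambda_x$ and thereby on $x$. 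Extracting a single threshold uniform in $x$ would use the Bachmann-style monotonicity of the fundamental-sequence assignment \eqref{eq-fund} (so that $\lambda_x$ increases with $x$ and $F_{\gamma + \lambda_x}$ grows accordingly in its ordinal index), absorbing the threshold into the dominance relation along the sequence, following the classical analyses of \citep{fast,cichon98}.
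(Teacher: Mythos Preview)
Your treatment of items~(1)--(3) is essentially the paper's: transfinite induction on $\alpha$, with the compositional identity $h^{\gamma+\alpha}=h^\gamma\circ h^\alpha$ underpinning item~(3). One minor wrinkle: in the successor step of item~(3) at $r=1$ you write ``using the induction hypothesis at the finite exponent $x+1$,'' but the induction hypothesis only gives $h^{\omega^{\alpha'}}=f_{\alpha'}$; you also need the compositional identity to split $h^{\omega^{\alpha'}\cdot(x+1)}$ into $(h^{\omega^{\alpha'}})^{x+1}$ before applying the hypothesis. You do mention that identity, so this is a presentation issue rather than a gap. The paper also proves item~(2) slightly more uniformly by invoking item~(1) and the predecessor $P_x(\alpha)$ in a single step, rather than separating successor and limit cases, but your case split is equally valid.

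For item~(4), however, you have manufactured a difficulty that the paper sidesteps entirely. You set up the induction so that the ``eventually'' threshold may depend on $\alpha$, and then worry that at a limit $\lambda$ the threshold for $\lambda_x$ depends on $x$, proposing to rescue this via Bachmann-style monotonicity. That rescue is left vague, and it is not clear it can be made to work: nothing rules out the threshold $T(\lambda_x)$ growing faster than $x$. The paper's move is much simpler: fix the threshold $x_0$ \emph{once}, from the hypothesis that $h(x)\leq F_\gamma(x)$ for all $x\geq x_0$, and then prove by transfinite induction on $\alpha$ the \emph{uniform} statement ``for all $x\geq x_0$, $f_\alpha(x)\leq F_{\gamma+\alpha}(x)$.'' The base case is the hypothesis on $h$; the successor case uses $f_\alpha(y)\geq y$ so that iterates stay above $x_0$ and one can chain $f_\alpha^n(x)\leq F_{\gamma+\alpha}^n(x)$; the limit case is then immediate since the same $x_0$ works for $\lambda_x$ and $(\gamma+\lambda)_x=\gamma+\lambda_x$. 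No threshold-juggling is needed.
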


\paragraph{Bounding the Length Function.}
Item~\ref{lem-hierarchies-1} of \autoref{lem-hierarchies} shows that
$\uboundd{\alpha}$ and $h_\alpha$ have rather similar expressions,
based on derivatives for $\uboundd{\alpha}$ and predecessors for $h_\alpha$;
they are in fact closely related:
\ifthenelse{\boolean{proceedings_version}}{%
\begin{proposition}
}{%
\begin{proposition}[See App.~\ref{ax-hardy}]
}
\label{prop-hardy}
  For all $\alpha$ in $\CNF$, there is a constant $k$ s.t.\ for all $n>0$,
  $\uboundd{\alpha,g}(n)\leq h_{\alpha}(kn)$ where $h(x)\eqdef x\cdot g(x)$.
\end{proposition}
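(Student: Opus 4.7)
The plan is to prove the bound by transfinite induction on $\alpha \in \CNF$, matching the recursive definition $\uboundd{\alpha,g}(n) = 1 + \uboundd{\alpha'}(g(n))$ for some maximizing $\alpha' \in \partial_n \alpha$ (cf.\ (4.5)) against the length-hierarchy recursion $h_\alpha(x) = 1 + h_{P_x(\alpha)}(h(x))$ supplied by \autoref{lem-hierarchies}.\ref{lem-hierarchies-1}. The key ingredient is a bridge comparing one derivative step at norm $n$ with one predecessor step at an inflated norm $cn$.

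The main technical step would be a comparison lemma: \emph{for each $\alpha \in \CNF$ there is a constant $c$ such that, for every $n > 0$ and every $\alpha' \in \partial_n \alpha$, we have $\alpha' \leq P_{cn}(\alpha)$.} I would prove this by structural induction on $\alpha$, matching (4.7)--(4.9) against the standard fundamental sequences (5.1). For $\alpha=\omega^{\omega^p}$ with $p\geq 1$, the calculation is direct:
\[
  D_n(\omega^{\omega^p}) \;=\; \omega^{\omega^{p-1}\cdot(n-1)}\cdot(n-1) \;<\; \omega^{\omega^{p-1}\cdot(n-1)+1} \;\leq\; \omega^{\omega^{p-1}\cdot(n+1)} \;=\; P_{n}(\omega^{\omega^p}).
\]
For a general principal $\omega^\beta$ with $\beta = \omega^{p_1}+\cdots+\omega^{p_k}$, the summation rule (4.8) combines one ``shrunk'' exponent with the remaining ones, which matches---up to finitely many iterated predecessor steps, absorbed into $c$---the shape of the fundamental-sequence descent into $\omega^\beta$. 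Finally, for $\alpha = \sum_i \omega^{\beta_i}$ in CNF, (4.9) touches exactly one summand just as a predecessor step touches the rightmost one; if the derivative picks a non-rightmost index, one iterates a bounded (in $\alpha$) number of predecessor steps, again absorbed into $c$.

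With the comparison lemma, the main inductive step proceeds as follows. Let $\alpha' \in \partial_n\alpha$ attain the max in (4.5). The comparison lemma gives $\alpha' \leq P_{cn}(\alpha)$, and monotonicity of the length hierarchy in its ordinal argument (a standard Bachmann-type property of the canonical fundamental sequences under a smooth control) yields $h_{\alpha'}(y) \leq h_{P_{cn}(\alpha)}(y)$ for all $y$. The induction hypothesis applied to $\alpha' < \alpha$ gives $\uboundd{\alpha'}(g(n)) \leq h_{\alpha'}(k' g(n))$ for some $k'$. Choosing $k \geq c$ large enough that $k' g(n) \leq kn \cdot g(kn) = h(kn)$ for all $n > 0$ (a finite adjustment for small $n$ is absorbed by enlarging $k$), monotonicity in the numerical argument assembles everything into
\[
  \uboundd{\alpha}(n) \;=\; 1 + \uboundd{\alpha'}(g(n)) \;\leq\; 1 + h_{P_{kn}(\alpha)}(h(kn)) \;=\; h_\alpha(kn).
\]

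The main obstacle is the comparison lemma for principal ordinals $\omega^\beta$ with $\beta$ itself a nontrivial sum. The derivative rule (4.8) produces several summands via the natural product, each carrying a coefficient growing linearly in $n$, whereas a single predecessor step only decreases the last exponent of $\beta$. Bounding this aggregate by $P_{cn}(\omega^\beta)$ requires tracking how many iterated fundamental-sequence steps suffice to subsume all cross terms and how the linear-in-$n$ coefficient blowup is swallowed by enlarging $c$. This ``multi-step derivative versus single-step predecessor'' bookkeeping is the delicate heart of the argument, and is precisely why the proposition must state the bound in terms of the inflated control function $h(x) = x\cdot g(x)$ rather than $g$ alone.
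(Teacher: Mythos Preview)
Your overall architecture---compare one derivative step with one predecessor step and close the recursion via \autoref{lem-hierarchies}.\ref{lem-hierarchies-1}---is exactly the paper's. But two steps in your induction do not go through as written, and both are repaired in the paper by a single invariant you do not identify: \emph{leanness} (an ordinal is $k$-lean if all CNF coefficients are $\leq k$).

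First, your induction hypothesis gives ``$\uboundd{\alpha'}(g(n)) \leq h_{\alpha'}(k'g(n))$ for some $k'$'', but $k'$ depends on $\alpha'$, and $\alpha' \in \partial_n\alpha$ depends on $n$. As $n$ grows, the derivatives range over infinitely many ordinals with unbounded coefficients, so $k'$ is not bounded uniformly in $n$ and you cannot ``absorb it by enlarging $k$''. The paper strengthens the statement being proved to: \emph{if $\alpha$ is $k$-lean then $\uboundd{\alpha}(n)\leq h_\alpha(kn)$}. This closes the loop because one shows (\autoref{coro-partial-lean}) that every $\alpha'\in\partial_n\alpha$ is $kn$-lean, so the IH applied to $\alpha'$ gives exactly $\uboundd{\alpha'}(g(n))\leq h_{\alpha'}(kn\cdot g(n))$, and $kn\cdot g(n)\leq kn\cdot g(kn)=h(kn)$ by monotonicity of $g$.

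Second, your appeal to ``monotonicity of the length hierarchy in its ordinal argument'' is false in general: $\alpha'\leq\beta$ does \emph{not} imply $h_{\alpha'}(y)\leq h_\beta(y)$ (e.g.\ $h_{n+2}(n)>h_\omega(n)$). What holds is $\alpha'\dde[y]\beta \Rightarrow h_{\alpha'}(y)\leq h_\beta(y)$, where $\dde[y]$ is the pointwise-at-$y$ ordering. The paper obtains this via leanness again: $\alpha'$ being $kn$-lean and $\alpha'<\alpha$ give $\alpha'\dde[kn]P_{kn}(\alpha)$ (\autoref{lem-Px-dd}), hence $\alpha'\dde[x]P_{kn}(\alpha)$ for all $x\geq kn$, which is precisely what is needed at $x=h(kn)$. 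Your comparison lemma ``$\alpha'\leq P_{cn}(\alpha)$'' is thus too weak even if you prove it; you need the pointwise version, and the clean way to get it is to control coefficients, not to chase fundamental sequences case by case.
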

Proposition~\ref{prop-hardy} translates for $n,p>0$ into an
\begin{equation}\label{eq-bad-hardy}
  \badd{\Upgamma_{p}^\ast,g}(n)\leq
  h_{\omega^{\omega^{p-1}}}((p-1)n)
\;\;\;\text{ for } h(x)\eqdef x\cdot g(x)
\end{equation}
upper bound on bad $(g,n)$-controlled sequences in $\Upgamma_p^\ast$.
We believe \eqref{eq-bad-hardy} answers a
wish expressed by \citeauthor{cichon98} in their
conclusion~\citep{cichon98}: ``an appropriate bound should be given by
the function $h_{\omega^{\omega^{p-1}}}$, for some reasonable $h$.''
\medskip

It remains to translate the bound of \autoref{prop-hardy} into a more
intuitive and readily usable one.  Combined with items 2--4 of
\autoref{lem-hierarchies}, \autoref{prop-hardy} allows us to state a
fairly general result in terms of the $(\FGH{\alpha})_\alpha$ classes
in the two most relevant cases (of which both the
\ltheoremautorefname{} given in the introduction and, if $\gamma\geq
2$, the $\FGH{\gamma+k}$ bound given for $\Nat^k$
in~\citep{FFSS-arxiv10}, are consequences):
\begin{theorem}[Main Theorem]
\label{theo-main}
Let $g$ be a smooth control function eventually bounded by a function
in $\FGH{\gamma}$, and let $A$ be an exponential nwqo with maximal
order type $<\omega^{\beta+1}$.  Then $\badd{A,g}$ is bounded by a
function in
\begin{itemize}
\item $\FGH{\beta}$ if $\gamma<\omega$ (e.g.\ if $g$ is
  primitive-recursive) and $\beta\geq\omega$,
\item $\FGH{\gamma+\beta}$ if $\gamma\geq 2$ and $\beta<\omega$.
\end{itemize}
\end{theorem}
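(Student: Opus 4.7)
The plan is to chain \autoref{prop-hardy} with items~\ref{lem-hierarchies-2}--\ref{lem-hierarchies-4} of \autoref{lem-hierarchies}. Writing $\alpha\eqdef o(A)$, the fact that $A$ is exponential gives $A\equiv C(\alpha)$, and the remark just after \eqref{eq:ubound} yields $\badd{A,g}(n)=\badd{C(\alpha),g}(n)\leq\uboundd{\alpha,g}(n)$. \autoref{prop-hardy} then furnishes a constant $k$ such that $\uboundd{\alpha,g}(n)\leq h_{\alpha}(kn)$ for $h(x)\eqdef x\cdot g(x)$, so the whole task reduces to locating $h_{\alpha}$ with $\alpha<\omega^{\beta+1}$ inside the fast growing hierarchy.

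Since $\alpha<\omega^{\beta+1}=\omega^{\beta}\cdot\omega$, pick $r<\omega$ with $\alpha\leq\omega^{\beta}\cdot r$. Combining item~\ref{lem-hierarchies-2} of \autoref{lem-hierarchies} with the standard monotonicity of $h^{(\cdot)}(x)$ in its ordinal index under the fundamental sequences \eqref{eq-fund} (for smooth $h$ and large enough $x$) gives $h_{\alpha}(kn)\leq h^{\alpha}(kn)\leq h^{\omega^{\beta}\cdot r}(kn)$, and item~\ref{lem-hierarchies-3} rewrites the last expression as $f^{r}_{\beta}(kn)$. Thus we have reduced everything to understanding a fixed iterate of $f_{\beta}$.

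It remains to classify $f^{r}_{\beta}$. First I bound $h(x)=x\cdot g(x)$: if $\gamma<\omega$ then $g$ is primitive-recursive and so is $h$, hence $h$ is eventually dominated by some $F_{m}$ with $m<\omega$; if $\gamma\geq 2$ then $\FGH{\gamma}$ is closed under multiplication, so $h$ is eventually dominated by $F_{\gamma}$. Item~\ref{lem-hierarchies-4} then lifts these bounds, yielding $f_{\beta}$ eventually bounded by $F_{m+\beta}=F_{\beta}$ in the first case (where $\beta\geq\omega$ absorbs $m$) and by $F_{\gamma+\beta}$ in the second. Since the classes $\FGH{\cdot}$ are closed under finite composition, the $r$-fold iterate $f^{r}_{\beta}$ stays in the same class, giving the two stated bounds. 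The main obstacle is justifying the monotonicity step $h^{\alpha}\leq h^{\omega^{\beta}\cdot r}$ in the ordinal index, which is standard but delicate for the Hardy hierarchy under the assignment \eqref{eq-fund}; everything else is careful bookkeeping in ordinal arithmetic and an appeal to the closure properties of $\FGH{\cdot}$.
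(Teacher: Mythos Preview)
Your overall route---\autoref{prop-hardy}, then items~\ref{lem-hierarchies-2}--\ref{lem-hierarchies-4} of \autoref{lem-hierarchies}---is exactly the paper's, but two of your steps do not go through as written.

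First, in the case $\gamma\geq 2$, $\beta<\omega$, you assert that $h(x)=x\cdot g(x)$ is eventually dominated by $F_{\gamma}$ because $\FGH{\gamma}$ is closed under multiplication. Closure under multiplication only gives you that $h$ is eventually bounded by \emph{some function in} $\FGH{\gamma}$, and such a function may grow like $F_{\gamma}^{p}$ for arbitrary $p$, not like $F_{\gamma}$. Item~\ref{lem-hierarchies-4} therefore yields only $F_{(\gamma+1)+\beta}$, which lands you in $\FGH{\gamma+\beta+1}$, one level too high. The paper avoids this loss with a sharpening of item~\ref{lem-hierarchies-4} (\autoref{lem-fgh} in the appendix): for finite $\beta$ one argues by induction on $\beta$ using that $f_{\beta}$ is bounded by some $F_{\gamma+\beta}^{p}$ (from~\eqref{eq-fast-p}), so $f_{\beta+1}(x)=f_{\beta}^{\omega_x}(x)\leq F_{\gamma+\beta}^{px}(x)\leq F_{\gamma+\beta+1}(px)$, which stays in $\FGH{\gamma+\beta+1}$.

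Second, in the case $\gamma<\omega$, $\beta\geq\omega$, you write $F_{m+\beta}=F_{\beta}$ on the grounds that $\beta\geq\omega$ ``absorbs'' $m$. The hierarchies here are indexed by ordinal \emph{terms}, not by the ordinals they denote (the paper warns about this explicitly in the appendix), and $m+\beta$ is not the term $\beta$; for instance $F_{1+\omega}(2)\neq F_{\omega}(2)$. The paper instead proves $F_{m+\beta}(x)\leq F_{\beta}(x+m)$ by a separate induction (\autoref{lem-prim-rec}), which is enough to stay in $\FGH{\beta}$.

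Finally, the monotonicity step $h^{\alpha}(kn)\leq h^{\omega^{\beta}\cdot r}(kn)$ that you flag as delicate is handled in the paper precisely via the leanness machinery: $\alpha$ is $k$-lean, hence $kn$-lean for $n\geq 1$, so \autoref{lem-Px-dd} gives $\alpha\dd[kn]\omega^{\beta}\cdot r$ and the analogue of~\eqref{eq-dd-hardy} for $h^{(\cdot)}$ applies. So your instinct that this is the subtle point is right, but it is the same tool already used inside the proof of \autoref{prop-hardy}.
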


\section{Refined Complexity Bounds for Verification Problems}
\label{sec-apps}

This section provides two \emph{examples} where our Main Theorem leads
to precise multiply-recursive complexity upper bounds for problems
that were known to be decidable but not primitive-recursive. Our
choice of examples is guided by our close familiarity with these
problems (in fact, they have been our initial motivation for looking
at subrecursive hierarchies) and by their current role as master
problems for showing Ackermann complexity lower bounds in several
areas of verification. (A more explicit vademecum for potential users
of the Main Theorem can be found in~\cite{FFSS-arxiv10}.)

\paragraph{Lossy Channel Systems.}
The wqo associated with a lossy channel system $S=(Q,M,C,\Delta)$ is
the set $A_S\egdef Q\times (M^\ast)^C$ of its configurations, ordered
with embedding (see details in~\cite{CS-lics08}). Here $Q$ is a set of
$q$ control locations, $M$ is a size-$m$ message alphabet and $C$ is a
set of $c$ channels. Hence, we obtain $A_S \equiv
q\cdot\bigl(\Upgamma_m^\ast\bigr)^c$. For such lossy
systems~\cite{phs-rp2010}, reachability, safety and termination can be
decided by algorithms that only need to explore bad sequences over
$A_S$. In particular, $S$ has a non-terminating run from configuration
$s_\text{init}$ iff it has a run of length
$\bad_{A_S}(\len{s_\text{init}})$, and the shortest run (if one
exists) reaching $s_\text{final}$ from $s_\text{init}$ has length at
most $\bad_{A_S}(\len{s_\text{final}})$. Here the sequences (runs of
$S$, forward or backward) are controlled with $g=\textit{Succ}$.  Now,
since $o(A_S)=\omega^{(\omega^{m-1}\cdot c)}\cdot q$,
\autoref{theo-main} gives an overall complexity at level
$\FGH{\omega^{(m-1)}\cdot c}$, which is the most precise upper bound
so far for lossy channel systems.

Regarding lower bounds, the construction in~\cite{CS-lics08} proves a
$\FGH{\omega^{K}}$ lower bound for systems using $m=K+2$ different
symbols, $c=2$ channels, and a quadratic $q\in O(K^2)$ number of
states. If emptiness tests are allowed (an harmless extension for
lossy systems, see~\cite{phs-rp2010}) one can even get rid of the
$\texttt{\#}$ separator symbol in that construction (using more
channels instead) and we end up with $m=K+1$ and $c=4$. Thus the
demonstrated upper and lower bounds are very close, and tight when
considering the recursivity-multiplicity level.

\paragraph{$\PEPr$, the Regular Post Embedding Problem,}
is an abstract problem that relaxes Post's Correspondence Problem by
replacing the equality ``$u_{i_1}\dots u_{i_n}=v_{i_1}\dots v_{i_n}$'' with
embedding ``$u_{i_1}\dots u_{i_n}\leq_{\Upgamma^\ast} v_{i_1}\dots
v_{i_n}$'' (all this under a ``$\exists i_1,\ldots,i_n$ in some regular
$R$'' quantification). It was introduced in~\cite{CS-fsttcs07} where
decidability was shown thanks to Higman's Lemma. Non-trivial reductions
between $\PEPr$ and lossy channel systems exist. Due to its abstract
nature, $\PEPr$ is a potentially interesting master problem for proving
hardness at multiply-recursive and hyper-Ackermannian, i.e.,
$\FGH{\omega^\omega}$, levels (see refs in~\cite{CS-countpep}).

A pumping lemma was proven in~\cite{CS-countpep}, which relies on the
$\bad_{A}$ function, and from which we can now derive more precise
complexity upper bounds. Precisely, the proof of \lemmaautorefname~7.3
in~\cite{CS-countpep} shows that if a $\PEPr$ instance admits a
solution $\sigma=i_1\ldots i_n$ longer than some bound $H$ then that
solution is not the shortest. Here $H$ is defined as
$2\cdot\bad_{(\Upgamma^\ast \cdot n)}(0)$ for a $n$ that is at most
exponential in the size of the instance. Since the control function is
linear, \autoref{theo-main} yields an $\FGH{\omega^{p-1}}$
complexity upper bound for $\PEPr$ on a $p$-letter alphabet (and a
hyper-Ackermannian $\FGH{\omega^\omega}$ when the alphabet is not
fixed).  This motivates a closer consideration of lower bounds (left
as future work, e.g., by adapting~\cite{CS-lics08}).

\ifthenelse{\boolean{proceedings_version}}{%
}{%
\paragraph{{\boldmath $F_{\omega^\omega}$}-Complete Problems}
Thanks to the lower bounds proved by \citet{CS-lics08} and our upper
bounds, reachability in lossy channel systems and $\PEPr$ are two
examples of problems complete for $F_{\omega^\omega}$.  This class
also includes several recently considered decision problems:
\begin{itemize}
\item model-checking for metric temporal logic~\citep{mtl},
\item universality for 1-clock timed automata~\citep{ata},  
\item emptiness for 1-register alternating automata over totally
  ordered data domains~\citep{FHL10}, and
\item reachability of finite concurrent programs with weak shared
  memory~\citep{wmreach}.
\end{itemize}
}

\section{Concluding Remarks}\label{sec-concl}
We proved a general version of the Main Theorem promised in the
introduction. Our proof relies on two main components: an algebraic
framework for normed wqo's and normed reflections on the one hand,
leading on the other hand to descending relations between ordinals
that can be captured in subrecursive hierarchies. This setting
accommodates all ``exponential'' wqo's, i.e., finite combinations of
$\Upgamma_p^\ast$'s.  This lets us derive upper bounds for controlled bad
sequences when using Higman's Lemma on finite alphabets.

We hope that our framework will extend smoothly beyond exponential
wqo's and may also accept additional wqo constructions like powersets,
multisets, and perhaps trees.

\bibliographystyle{../natbibsrt}
\bibliography{../journals,../mcaloon,higman}

\appendix\renewcommand{\refname}{Additional References}

\newpage

\appendix
\pagestyle{myheadings}
\pagenumbering{roman}
\markboth{{Technical appendices.}}{{Technical appendices.}}

The following appendices provide the proofs missing from the main text
(Appendices~\ref{ax-nwqo} and~\ref{ax-otypes}), and further material
not required for the main developments: Appendix~\ref{app-comments}
provides some technical comments on the relationships with the
literature, and Appendix~\ref{ax-hierarchies} proposes the full proofs
of a few simple results from the literature we rely on, but for which
the proofs are not easily found in print (at least we do not know
where to find them).

\section{Proofs for Normed Wqo's and Reflections}
\label{ax-nwqo}

\subsection{Proof of \propositionautorefname~\ref{prop-L-finite}}
\label{app-prop-L-finite}

Since any prefix of a finite $n$-controlled bad sequence is $n$-controlled
and bad, these finite sequences ordered by the prefix ordering form a tree
$T$ with the empty sequence as its root. Now $T$ is finitely branching
since $\len{.}_A$ is proper. Furthermore, $T$ has no infinite branches
since $A$ is wqo. Hence, by K\H{o}nig's Lemma, there are only finitely many
branches in $T$, in particular finitely many maximal $n$-controlled bad sequences
over $A$, and a finite maximal length for them exists.

\subsection{Proof of \propositionautorefname~\ref{prop-descent-equation} (Descent Equation)}
\label{app-prop-descent-equation}

If $A_{<n}$ is empty, then $\bad_A(n)=0$ (the only controlled sequence
over $A$ is the empty sequence) while $\max\emptyset = 0$ by definition.
If $A_{<n}$ is not empty, we prove the two directions
of \eqref{eq-descent} independently.\medskip

\noindent
``$(\leq)$'': Write $\bad$ for $\bad_A(n)$ and let
$\xxx=x_0,x_1,x_2,\ldots,x_{\bad-1}$ be a maximal $n$-controlled bad
sequence over $A$. The suffix sequence $\xxx'=x_1,x_2,\ldots,x_{\bad-1}$ is
bad, is $g(n)$-controlled, and is over $A/x_0$: hence $\bad - 1\leq
\bad_{A/x_0}(g(n))$. Now $x_0\in A_{<n}$ (since $\xxx$ is controlled) and
we deduce one half of \eqref{eq-descent}.
\medskip

\noindent
``$(\geq)$'': Pick any $x\in A_{<n}$ and write $\bad'$ for
$\bad_{A/x}(g(n))$. This length is witnessed by a maximal $g(n)$-controlled
bad sequence, of the form $\xxx=x_1,\ldots,x_{\bad'}$. Since $\xxx$ is over
$A/x$, the sequence $\yyy\egdef x.\xxx$ over $A$, obtained by prefixing
$\xxx$ with $x$, is bad. It is also $n$-controlled. Hence $\bad_A(n)\geq
1+\bad'$, which concludes the proof.

\subsection{Proof of \propositionautorefname~\ref{prop-iso-wqos}}
\label{ax-iso-wqos}

These isomorphisms are classic for wqo's. For nwqo's, \emph{one must check
  that they preserve norms}.

We consider the main cases:
\begin{description}
\item[$\bm{1}\times A\equiv A$:] this
relies on
\[
\len{\tup{a_1,x}}_{\bm{1}\times A}
\equalby{\eqref{eq-len-prod}}
\max (\len{a_1}_{\bm{1}},\len{x}_A)
\equalby{\eqref{eq-lens}}
\max(0,\len{x}_A)
=
\len{x}_A		\: .
\]
\item[$\bm{0}^\ast\equiv\bm{1}$:] the only element
of $\bm{0}^\ast$ is $()$, the empty list. Norms are preserved:
\[
\len{()}_{\bm{0}^\ast}
\equalby{\eqref{eq-len-star}}
\max(0)
= 0
\:\equalby{\eqref{eq-lens}}\:
\len{a_1}_{\bm{1}}		      \: .
\]
\item[$\bm{1}^\ast\equiv\Nat$:] relies on an isomorphism that links the number $k$
in $\Nat$ with the unique length-$k$ list in $\bm{1}^\ast$. This preserves
norms:
\[
\len{(\obracew{a_1\ldots a_1}{k\text{ times}})}_{\bm{1}^\ast}
\equalby{\eqref{eq-len-star}}
\max(k,\len{a_1}_{\bm{1}},\ldots,\len{a_1}_{\bm{1}})
\equalby{\eqref{eq-lens}}
\max(k,0,\ldots,0)
= k
\equalby{\eqref{eq-lens}}
\len{k}_{\Nat}			  \: .
\]
\end{description}

\subsection{Proof of \propositionautorefname~\ref{prop-precongr}}
\label{app-prop-precongr}

Assuming $h:A\hookrightarrow A'$
and $h':B\hookrightarrow B'$, one immediately deduces
that $h+h':A+B\hookrightarrow A'+B'$, that $h\times h':A\times
B\hookrightarrow A'\times B'$, and that $h^\ast:A^\ast\hookrightarrow A'^\ast$
(assuming the obvious definitions for $h+h'$, $h\times h'$ and $h^\ast$).

Let us check, for example, that	 $h^\ast$ preserves non-comparability:
\begin{align*}
& h^\ast(x_1 \ldots x_n) \leq_{A'^\ast} h^\ast(y_1 \ldots y_m)
\\
\Leftrightarrow\;
& (h(x_1) \ldots h(x_n)) \leq_{A'^\ast} (h(y_1) \ldots h(y_m))
\tag{by def.\ of $h^\ast$}
\\
\Leftrightarrow\;
& h(x_1)\leq_{A'}h(y_{i_1}) \:\wedge\: \cdots \:\wedge\: h(x_n)\leq_{A'}h(y_{i_n})
\tag{by def.\ of $\leq_{A'^\ast}$}
\\
\shortintertext{for some $1\leq i_1<i_2<\cdots<i_n\leq m$,}
\Rightarrow\;
& x_1\leq_{A}y_{i_1} \:\wedge\: \cdots \:\wedge\:
x_n\leq_{A}y_{i_n}
\tag{since $h$ is a reflection}
\\
\Leftrightarrow\;
& (x_1 \ldots x_n)\leq_{A^*} (y_1 \ldots y_m)
\:.
\end{align*}
And that $h^\ast$ is norm-decreasing:
\begin{align*}
\len{h^\ast(x_1\ldots x_n)}_{A'^*}
& =
\len{(h(x_1)\ldots h(x_n))}_{A'^*}
\tag{by def.\ of $h^\ast$}
\\
& =
\max(n,\len{h(x_1)}_{A'},\ldots, \len{h(x_n)}_{A'})
\tag{by \eqref{eq-len-star}}
\\
& \leq
\max(n,\len{x_1}_{A},\ldots, \len{x_n}_{A})
\tag{since $h$ is norm-decreasing}
\\
& =
\len{(x_1\ldots x_n)}_{A^*}
\:.
\tag{by \eqref{eq-len-star}}
\end{align*}

\noindent
Using $\bm{0}\hookrightarrow A$ and $(A\equiv\bm{0})\vee(\bm{1}\hookrightarrow
A)$, we deduce for all $k\in\Nat$:
\begin{align}
\label{eq-mono-refl}
A\cdot k&\hookrightarrow A\cdot (k+1)
\:,
&
A^{k+1}&\hookrightarrow A^{k+2} \:.
\end{align}

\subsection{Proof of \propositionautorefname~\ref{prop-refl-residuals}}
\label{app-prop-refl-residuals}

The reflections in \eqref{eq-residual-sum} are in fact equalities and are obvious.
\medskip

\noindent
For
\eqref{eq-residual-prod}, the reflection relies on
\[
	\tup{x,y}\not\leq_{A\times B}\tup{x',y'}
\text{ iff }
	\bigl(x\not\leq_A x' \text{ or } y\not\leq_B y' \bigr)
\: ,
\]
which is just a rewording of \eqref{eq-order-prod}. It only provides a reflection, not an
isomorphism, because the ``or'' is not exclusive.
\medskip

For \eqref{eq-residual-star} and \eqref{eq-residual-finite-star},
we first observe the obvious equality
\begin{align}
\tag{$\dagger$}
\label{eq-residual-star1}
(A^\ast)/() &= \bm{0}			\: ,
\end{align}
that applies to empty lists. For non-empty lists, the following
lemma will be useful:
\begin{align}
\tag{$\star$}
\label{eq-residual-star2}
A^\ast/(x_1 \, x_2\ldots x_n) & \hookrightarrow
(A/x_1)^\ast\times \Bigl[\bm{1} + \uparrow_{\!A}x_1\times (A^\ast/(x_2\ldots x_n)) \Bigr]
\: ,
\end{align}
where $\uparrow_{\!A} x \:\eqdef\: \{y\in A\mid x\leq_A y\}$ denotes
the \emph{upward closure} of an element $x$ of $A$, seen as a
substructure of $A$.

\begin{proof}[Proof of \eqref{eq-residual-star2}]
We let $X\egdef (A/x_1)^\ast$, $Y\egdef
A^\ast/(x_2\ldots x_n)$, and exhibit a nwqo reflection to $X+X\times \uparrow_Ax_1\times
Y$, which is isomorphic to the target in \eqref{eq-residual-star2}. Write
$u$ for $(x_1\ldots x_n)$ and consider some $v=(y_1 \ldots y_m)\in A^\ast$.
Then $(x_1\ldots x_n)\leq_{A^\ast}(y_1\ldots y_m)$ is equivalent to
\begin{equation}
\label{eq-left-embed}
\tag{$\ddagger$}
\exists p\in\{1,\ldots,m\} \text{ s.t.\ }
\left\{
\begin{array}{l}
x_1\leq_A y_p
\\
(x_2\ldots x_n)\leq_{A^\ast}(y_{p+1}\ldots y_m)
\\
\forall 1\leq i<p:\; x_1\not\leq_A y_i
\end{array}
\right.%
\end{equation}
The third condition, ``$x_1\not\leq_A y_i$ for all $i<p$'', states that $p$ is the
leftmost position in $v$ where $x_i$ can be embedded. By negating
\eqref{eq-left-embed}, we see that $u\not\leq_{A^\ast} v$ iff there is no $p$
with $x_1\leq_A y_p$, or there is a leftmost one but $(x_2\ldots
x_n)\not\leq_{A^\ast}(y_{p+1}\ldots y_m)$. Therefore, any $v\in A^\ast/u$ (i.e.,
any $v$ s.t.\ $u\not\leq_{A^\ast}v$) is either a list in $(A/x_1)^\ast$, or can
be decomposed as a triple $\tup{(y_1\ldots y_{p-1}), y_p, (y_{p+1}\ldots y_m)}$
belonging to $(A/x_1)^\ast\times \uparrow_{\!A}x_1 \times (A^\ast/(x_2\ldots x_n))$. This
provides the required $h:A^\ast/u\rightarrow X+X\times \uparrow_{\!A}x_1\times Y$.

We now	check that $h$ is an order-reflection. For this assume that
$h(v)\leq_{X+X\times A\times Y} h(v')$. This requires that $v$ and
$v'$ are mapped to the same summand, and leads to two cases. If they
map to $X$, the left-hand summand, then $h(v)=v$, $h(v')=v'$, and
$h(v)\leq_{(A/x_1)^\ast}h(v')$, i.e., $h(v)\leq_{(A/x_1)^\ast}h(v')$,
implies $v\leq_{A^\ast} v'$. If they map to $X\times
\uparrow_{\!A}x_1\times Y$, then $h(v)$ is some $\tup{v_1,y,v_2}$,
$h(v')$ is some $\tup{v'_1,y',v'_2}$, and $h(v)\leq h(v')$ implies
$v_1\leq_X v'_1$, $y\leq_{\uparrow_{\!A}x_1} y'$, and $v_2\leq_Y
v'_2$. Now, since $X$, $Y$, and $\uparrow_{\!A}x_1$, are substructures
of $A^\ast$ and $A$, we deduce $v_1\leq_{A^\ast}v'_1$,
$v_2\leq_{A^\ast}v'_2$, and $y\leq_A y'$. Since $v$ and $v'$ are
exactly $v_1.y.v_2$ and, respectively, $v'_1.y'.v'_2$, we deduce
$v\leq_{A^\ast} v'$ from the compatibility of embedding with
concatenation.

Finally, we let the reader check that $h$ is norm-decreasing, and
observe that the reason why Eq.~\eqref{eq-residual-star2} is not an
isomorphism is because the norms are not preserved in the
decomposition $v\mapsto\tup{v_1,y,v_2}$.
\end{proof}
\begin{proof}[Proof of \eqref{eq-residual-star}]
We now prove \eqref{eq-residual-star} by induction on $n$. The base
case, $n=0$, is provided by \eqref{eq-residual-star1} since
$\Upgamma_0\times A^0\equiv\bm{0}$.
For the inductive case we assume $n>0$, which implies $A\not=\bm{0}$.
By ind.\ hyp., $A^\ast/(x_2\ldots x_n)\hookrightarrow
\Upgamma_{n-1}\times A^{n-1}\times\prod_{i=2}^n (A/x_i)^\ast$.
Replacing in \eqref{eq-residual-star2}, we deduce
\begin{align*}
A^\ast/(x_1\ldots x_n)
& \,\hookrightarrow\, (A/x_1)^\ast\times\Bigl[\bm{1}
\:+\:
\uparrow_{\!A}x_1\times
\Upgamma_{n-1}\times A^{n-1}\times\prod_{i=2}^n(A/x_i)^\ast
\Bigr]
\\
& \,\hookrightarrow\, (A/x_1)^\ast \times
\Bigl[\bm{1} \:+\: \Upgamma_{n-1}\times A^{n}\times\prod_{i=2}^n(A/x_i)^\ast\Bigr]
\tag{by $\uparrow_{\!A}x\hookrightarrow A$}
\\
\shortintertext{and since
$\bm{1}\hookrightarrow A^{n}\times\prod_{i=2}^n(A/x_i)^\ast$ (recall that $A\not=\bm{0}$):}
& \,\hookrightarrow\, (A/x_1)^\ast \times
\Bigl[(\bm{1}+\Upgamma_{n-1})\times A^{n}\times\prod_{i=2}^n(A/x_i)^\ast\Bigr]
\\
& \,\equiv\, \Upgamma_n\times A^{n}\times\prod_{i=1}^n(A/x_i)^\ast
\: .\qedhere
\end{align*}
\end{proof}

\begin{proof}[Proof of \eqref{eq-residual-finite-star}]
By induction on $n$. The base case, $n=0$, is provided by
Eq.~\eqref{eq-residual-star1} since
$\Upgamma_0\times(\Upgamma_p^\ast)^0\equiv\bm{0}$.

For the inductive case, $n>0$, we first simplify
\eqref{eq-residual-star2} using $\Upgamma_{p+1}/x \equiv \Upgamma_{p}$
from Eq.~\eqref{eq-ex-residuals}, and noting that
$\uparrow_{\Upgamma_{p+1}}x \equiv \bm{1}$ since it amounts to the
singleton $\{x\}$. This yields
\begin{align*}
    \Upgamma_{p+1}^\ast/(x_1 \, x_2\ldots x_n) & \hookrightarrow
    \Upgamma_{p}^\ast\times\bigl(\bm{1}+\Upgamma_{p+1}^\ast/(x_2\ldots x_n)\bigr)
    \: .
\shortintertext{Using the ind.\ hyp., $\Upgamma_{p+1}^\ast/(x_2\ldots x_n)\hookrightarrow \Upgamma_{n-1}\times(\Upgamma_p^\ast)^{n-1}$, one obtains}
\Upgamma_{p+1}^\ast/(x_1\ldots x_n)
&\hookrightarrow\Upgamma_{p}^\ast\times\bigl(\bm{1} + \Upgamma_{n-1}\times(\Upgamma_{p}^\ast)^{n-1}
\bigr)\\
&\:\equiv\,\Upgamma_{p}^\ast+\Upgamma_{n-1}\times(\Upgamma_{p}^\ast)^{n}\\
&\hookrightarrow (\Upgamma_{p}^\ast)^n + \Upgamma_{n-1}\times(\Upgamma_{p}^\ast)^n
\tag{by Eq.~\eqref{eq-mono-refl}}
\\
&\:\equiv\,\Upgamma_{n}\times(\Upgamma_{p}^\ast)^n\:.\qedhere
\end{align*}
\end{proof}

\section{Proofs for Ordinals and Subrecursive Hierarchies}
\label{ax-otypes}

\subsection{Well-Foundedness of Derivatives}
\label{app-deriv-is-wf}

This requires basic monotonicity properties of natural sums and
products:
\begin{align}
\label{eq-mono-nat-sum}
\alpha<\alpha' &\text{ implies }
\alpha\oplus\beta < \alpha'\oplus\beta
\: ,
\\
\label{eq-mono-nat-prod}
\alpha<\alpha' \:\wedge\: 0<\beta &\text{ implies }
\alpha\otimes \beta<\alpha'\otimes\beta
\: ,
\\
\shortintertext{and a direct consequence of Eq.~\eqref{eq-def-ordering}, the defining property of principal ordinals:}
\label{eq-defn-principals}
\Bigl(\bigoplus_{i=1}^n\alpha_i\Bigr)<\omega^\beta
&\text{ iff }
\alpha_1<\omega^\beta
\:\wedge\:
\cdots
\:\wedge\:
\alpha_n<\omega^\beta
\:.
\end{align}

We can now prove that $\alpha'\in\partial_n \alpha$ implies
$\alpha'<\alpha$.

One first checks that $D_n(\alpha)<\alpha$ for all principal ordinals. This
is immediate in the case of multiplicative principals: see Eq.~\eqref{eq-dn}.
For the more general case $\alpha=\bigotimes_i\omega^{\omega^{p_i}}$,
Eq.~\eqref{eq-Dn} gives $D_n(\alpha)$ as a sum of terms that are individually
smaller than $\alpha$ thanks to Eq.~\eqref{eq-mono-nat-prod}. One
concludes with Eq.~\eqref{eq-defn-principals}.

Finally, knowing that $D_n(\omega^\beta)<\omega^\beta$,
Eq.~\eqref{eq-partialn} combined with Eq.~\eqref{eq-mono-nat-sum}
proves $\alpha'<\alpha$ when $\alpha'\in\partial_n\alpha$.

\subsection{Proof of \theoremautorefname~\ref{theo-deriv-refl}}
\label{app-theo-deriv-refl}

We want to prove that, for $x\in A_{<n}$, $A/x$ can be reflected in
$C(\alpha')$ for some $\alpha'\in\partial_n o(A)$.

We write $A$ in canonical form $A\equiv
\sum_{i=1}^m\prod_{j=1}^{k_i}\Upgamma^\ast_{p_{i,j}+1}$, as \textit{per}
Eq.~\eqref{eq-Cdef}, and consider special cases first:
\begin{description}
\item[Case 1, $A$ is finite (i.e., $k_i=0$ for all $i$):]
Then $A\equiv\sum_{i=1}^m\bm{1}\equiv\Upgamma_m$. If $m=0$, i.e., $A\equiv\bm{0}$, the
claim holds vacuously since there is no $x\in A$. If $m>0$ then
$o(A)\equalby{\eqref{eq-odef-1}}m$, $\partial_n m
\equalby{\eqref{eq-deriv-2}} \{m-1\}$, $C(m-1)\equalby{\eqref{eq-Cdef}}\Upgamma_{m-1}$ and we
conclude with Eq.~\eqref{eq-ex-residuals}: $\Upgamma_m/x\equiv\Upgamma_{m-1}$.

\item[Case 2, $A$ is some $\Upgamma_{p+1}^\ast$ (i.e., $m=1=k_1$):]

Then $o(A)\equalby{\eqref{eq-odef-1}}\omega^{\omega^{p}}$. By
Eq.~\eqref{eq-dn}, $\partial_n o(A)$ gives a single $\alpha'$ that is $n-1$
if $p=0$, and $\omega^{(\omega^{p-1}\cdot(n-1))}\cdot(n-1)$ if $p>0$. In the first case,
$C(\alpha')=\Upgamma_{n-1}$. In the second case
\[
C\Bigl(\omega^{(\omega^{p-1}\cdot(n-1))}\cdot(n-1)\Bigr) \equalby{\eqref{eq-Cdef}}
(\Upgamma_p^\ast)^{n-1}\cdot(n-1) \equiv (\Upgamma_{p}^\ast)^{n-1}\times
\Upgamma_{n-1}\:.
\]
Thus, in view of $\Upgamma_0^\ast\equiv\bm{1}$, we can write
$C(\alpha')\equiv (\Upgamma_{p}^\ast)^{n-1}\times \Upgamma_{n-1}$ even
when $p=0$. On the other hand, Eq.~\eqref{eq-residual-finite-star}
yields $A/x\hookrightarrow A'\egdef
(\Upgamma_{p}^\ast)^{\len{x}}\times\Upgamma_{\len{x}}$. But since
$\len{x}_A < n$, one deduces $A'\hookrightarrow C(\alpha')$ from the
algebraic properties of reflection.

\item[Case 3, $A$ is a product $\prod_{i=1}^k\Upgamma_{p_i+1}^\ast$, i.e.,
  $m=1\leq k_1$:]
Then
\begin{align}
o(A)\equalby{\eqref{eq-odef-2}} \bigotimes_{i=1}^k o(\Upgamma_{p_i+1}^\ast)
\equalby{\eqref{eq-odef-1}} \bigotimes_{i=1}^k \omega^{\omega^{p_i}}
&= \omega^{(\omega^{p_1}\oplus\cdots\oplus\omega^{p_k})}
\: .
\\
\shortintertext{Hence $\partial_n o(A)$ gives a single $\alpha'=D_n(o(A))$ and}
\label{eq-aux2}
C(\alpha')
\equalby{\eqref{eq-Dn}}
C\Bigl[\bigoplus_{i=1}^k\Bigl(\:\obracew{D_n\bigl(\omega^{\omega^{p_i}}\bigr)}{\beta_i=}\otimes\obracew{\bigotimes_{\ell\neq i}\omega^{\omega^{p_\ell}}}{\alpha'_i=}\:\Bigr)
\Bigr]
&\equiv \sum_{i=1}^k C(\beta_i)\times C(\alpha'_i)
\: .
\end{align}
Now $C(\alpha'_i)\equiv A_i\egdef\prod_{\ell\not=i}\Upgamma^\ast_{p_\ell+1}$
since $C$ is the inverse of $o$.
On the other hand, $x\in A$ must have the form $\tup{x_1,\ldots,x_k}$. With
Eq.~\eqref{eq-residual-prod} we see that
\begin{equation}
\label{eq-aux3}
A/x\hookrightarrow
\sum_{i=1}^k
\Bigl((\Upgamma_{p_i+1}^\ast)/x_i\times\prod_{l\not=i}\Upgamma_{p_l+1}^\ast\Bigr)
=\sum_{i=1}^k
\Bigl((\Upgamma_{p_i+1}^\ast)/x_i\times C(\alpha'_i)\Bigr)
\:.
\end{equation}
We saw (Case 2) that $\Upgamma^\ast_{p_i+1}/x_i\hookrightarrow C(\beta_i)$.
Combining with Eq.~\eqref{eq-aux2} and~\eqref{eq-aux3}, we conclude that
$A/x\hookrightarrow C(\alpha')$.

\item[Case 4, $A$ is $\sum_{i=1}^m\prod_{j=1}^{k_i}\Upgamma_{p_{i,j}+1}^\ast$ with $m>1$:]
We write $A=\sum_{i=1}^m A_i$, so that $o(A)=\bigoplus_{i=1}^m o(A_i)$ and
\[
\partial_n o(A) = \Bigl\{ D_n(o(A_i))\oplus \bigoplus_{\ell\not=i}o(A_\ell)
~\Bigl\lvert~ i=1,\ldots,m\Bigr\}	    \:.
\]
On the other hand, we know that $x$ is $\tup{i,x'}$ for some
$i\in\{1,\ldots,m\}$. With Eq.~\eqref{eq-residual-sum}, we deduce that
$A/x\hookrightarrow A_i/x' + \sum_{\ell\not=i}A_\ell$. By picking
$\alpha'=D_n(o(A_i)) \oplus \bigoplus_{\ell\not=i}o(A_\ell)$, we
deduce $A_i/x' + \sum_{\ell\not=i}A_l\hookrightarrow C(\alpha')$ since
$A_\ell \equiv C(o(A_\ell))$ and $A_i/x'\hookrightarrow D_n(o(A_i))$
as we saw with Case~3.
\end{description}

\subsection{Lean Ordinals and Pointwise Ordering}

We present some intermediate results before we can prove
\autoref{prop-hardy}.

A key issue with hierarchies like $(h_\alpha)_{\alpha<\ezero}$ and
$(f_\alpha)_{\alpha<\ezero}$ is that, in general, $\alpha<\alpha'$,
does not imply $h_\alpha(x)\leq h_{\alpha'}(x)$ or $f_\alpha(x)\leq
f_{\alpha'}(x)$. Such monotonicity w.r.t.\ $\alpha$ only holds
``eventually'', or for ``sufficiently large $x$''. This
issue appears very quickly since just proving monotonicity in the $x$
argument \emph{requires} some
monotonicity in the $\alpha$ index in the case where $\alpha$ is a
limit.

Indeed, we will use some of that monotonicity w.r.t.\ $\alpha$ in
order to handle the
``$\max_{\alpha'\in\partial_n\alpha}\ubound_{\alpha'}(\ldots)$'' in
\eqref{eq:ubound} and majorize it by some
``$\ubound_{\max(\partial_n\alpha)}(\ldots)$''.

\paragraph{A Refined Ordering.}
In order to deal with these issues, a standard solution goes through a
ternary relation between $x$, $\alpha$ and $\alpha'$, as we now
explain.

For each $x$ in $\Nat$, define a relation $\dd$ between ordinals,
called ``\emph{pointwise-at-$x$ ordering}''
in~\cite{cichon98}, as the smallest transitive relation s.t.\ for all
$\alpha$, $\lambda$:
\begin{xalignat}{2}
  \label{eq-dd-def}
			\alpha&\dd\alpha+1\;,
  &
		      \lambda_{x}&\dd\lambda\;.
\end{xalignat}

The inductive definition of $\dd$ implies
\begin{equation}
\label{eq-dd-inductive}
\alpha'\dd\alpha
\text{ iff }
\left\{
\begin{array}{l}
\text{$\alpha = \beta+1$ is a successor and $\alpha'\dde\beta$, or}
\\
\text{$\alpha = \lambda$ is a limit and $\alpha'\dde\lambda_x$.}
\end{array}\right.%
\end{equation}

Obviously $\dd$ is a restriction of $<$, the linear ordering of
ordinals. For example, $x+1=\omega_x\dd\omega$ but $x+2\not\dd\omega$.
The $\dd$ relations are linearly ordered themselves, and $<$, can be
recovered in view of:
\begin{equation}\label{eq-hierarchy-dd}
   \dd[0] \;\subset\; \cdots \;\subset\;\dd[x]\; \subset \; \dd[x+1]
   \;\subset\; \cdots \;\subset \Bigl(\bigcup_{x\in\Nat}\dd\Bigr) = \; <
\:.
\end{equation}
\noindent
More precisely, we prove in \autoref{ssec-ax-P-y-dd} the following
results when $\omega_x=x+1$, from which the inclusions in
\eqref{eq-hierarchy-dd} follow:
\begin{align}
\label{eq-dd-zero}
  &0 \dde\alpha
\;,
\\
\label{eq-dd-sum}
 \alpha'\dd\alpha \text{ implies }  &\gamma+\alpha'\dd\gamma+\alpha
\;,
\\
\label{eq-dd-power}
 \alpha'\dd\alpha \text{ implies }  &\omega^{\alpha'}\dd\omega^{\alpha}
\;,
\\
\label{eq-dd-limit}
 x<y \text{ implies }  &\lambda_x\dd[y]\lambda_y
\;.
\end{align}
With this, one can show \citep[see][\theoremautorefname~2, or Appendix~\ref{sub-pointwise}]{cichon98} that, for smooth $h$:
\begin{align}
  \label{eq-mono-hardy}
  x<y&\text{ implies }h_\alpha(x)\leq h_\alpha(y)\;,\\
  \label{eq-dd-hardy}
  \alpha'\dd[x]\alpha&\text{ implies }h_{\alpha'}(x)\leq
  h_{\alpha}(x)\;.
\end{align}

\paragraph{Lean Ordinals.}
Now, in order to use Eq.~\eqref{eq-dd-hardy} in the analysis of
$\ubound$, we need to show that $\alpha'\dd\alpha$ when
$\alpha'\in\partial_n\alpha$. This is handled through a notion of
\emph{lean} ordinals, as we now explain.

Let $k$ be in $\Nat$. We say that an ordinal $\alpha$ in
$\CNF[\ezero]$ is \emph{$k$-lean} if it only uses coefficients $\leq
k$, or, more formally, when it is written under the strict form
$\alpha=\omega^{\beta_1}\cdot c_1+\cdots+\omega^{\beta_m}\cdot c_m$
with $c_i\leq k$ and, inductively, with $k$-lean $\beta_i$, this for
all $i=1,...,m$.
Observe that only $0$ is $0$-lean, and that if $\alpha$ is $k$-lean
and $\alpha'$ is $k'$-lean, then $\alpha\oplus\alpha'$ is
$(k+k')$-lean.
\medskip

\noindent
Leanness is a fundamental tool when it comes to understanding the
$\dd$ relation:
\begin{lemma}[see \autoref{ssec-ax-P-y-dd}]
\label{lem-Px-dd}
Let  $\alpha$ be $x$-lean. Then 
\begin{equation}
\label{eq-Px-dd}
\alpha<\gamma
\;\;\text{iff}\;\;
\alpha\dde P_x(\gamma)
\;\Bigl[\:\text{also: iff}\;\;
\alpha\dd\gamma
\;\;\text{iff}\;\;
\alpha\leq P_x(\gamma)
\:\Bigr]
\:.
\end{equation}
\end{lemma}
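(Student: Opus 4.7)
I label the four conditions as (i) $\alpha<\gamma$, (ii) $\alpha\dde P_x(\gamma)$, (iii) $\alpha\dd\gamma$, and (iv) $\alpha\leq P_x(\gamma)$. The easy implications go first: since $\dd\,\subseteq\,<$ by~\eqref{eq-hierarchy-dd} and $\dde\,\subseteq\,\leq$, one gets (iii)$\Rightarrow$(i) and (ii)$\Rightarrow$(iv) for free, while (iv)$\Rightarrow$(i) is just $P_x(\gamma)<\gamma$. For (ii)$\Rightarrow$(iii), I would first record the auxiliary fact $P_x(\gamma)\dd\gamma$ for every $\gamma>0$, proved by a short induction on $\gamma$: the successor step is immediate from~\eqref{eq-dd-def}, while the limit step combines the ind.\ hyp.\ $P_x(\lambda_x)\dde\lambda_x$ with $\lambda_x\dd\lambda$ by transitivity. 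Given this, $\alpha\dde P_x(\gamma)\dd\gamma$ yields $\alpha\dd\gamma$, i.e., (iii). Only (i)$\Rightarrow$(ii) remains to close the cycle.

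The proof of (i)$\Rightarrow$(ii) proceeds by transfinite induction on $\gamma$, with $\alpha$ a fixed $x$-lean ordinal. The successor case $\gamma=\beta+1$ gives $P_x(\gamma)=\beta$ and $\alpha\leq\beta$; if $\alpha=\beta$ we are done, otherwise $\alpha<\beta$ and the ind.\ hyp.\ supplies $\alpha\dde P_x(\beta)$, to which we append $P_x(\beta)\dd\beta$ to obtain $\alpha\dde\beta=P_x(\gamma)$. For the limit case $\gamma=\lambda$, since $P_x(\lambda)=P_x(\lambda_x)$ and $\lambda_x<\lambda$, it suffices to reduce to $\alpha<\lambda_x$ and then invoke the ind.\ hyp.\ at $\lambda_x$. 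This reduction is the heart of the argument and is packaged as a sublemma: \emph{for every limit $\lambda$ and every $x$-lean $\alpha<\lambda$, $\alpha<\lambda_x$.}

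The sublemma is proved by induction on the CNF of $\lambda=\gamma+\omega^\eta$ (with $\eta>0$ the last exponent). When $\eta=\beta+1$, $\lambda_x=\gamma+\omega^\beta\cdot(x+1)$; the case $\alpha\leq\gamma$ is trivial, and otherwise $\alpha=\gamma+\rho$ with $0<\rho<\omega^{\beta+1}$, so the strict CNF of $\alpha$ contains a term $\omega^\beta\cdot c$ whose coefficient $c$ is $\leq x$ by $x$-leanness of $\alpha$; hence $\rho<\omega^\beta\cdot(x+1)$ and $\alpha<\lambda_x$. When $\eta$ is a limit, $\lambda_x=\gamma+\omega^{\eta_x}$; writing $\alpha=\gamma+\rho$ with $\rho<\omega^\eta$, the leading exponent $\rho_1$ of $\rho$ satisfies $\rho_1<\eta$ and inherits $x$-leanness from $\alpha$. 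The sub-inductive hypothesis applied at the strictly smaller limit $\eta$ then yields $\rho_1<\eta_x$, whence $\rho<\omega^{\eta_x}$ and $\alpha<\lambda_x$. The main obstacle is exactly this limit-exponent case: one must verify that $x$-leanness descends to the components of the decomposition $\alpha=\gamma+\rho$ (the coefficients of $\rho$'s CNF are bounded by the corresponding coefficients of $\alpha$'s), and that the inductive measure on $\lambda$ is well-founded; after this bookkeeping, the two cases of the sublemma and the outer induction fit together routinely.
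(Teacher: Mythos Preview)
Your proof is correct and follows essentially the same route as the paper's: an induction on $\gamma$, where the limit case reduces to showing that an $x$-lean $\alpha<\lambda$ already satisfies $\alpha<\lambda_x$ (your sublemma), after which the ind.\ hyp.\ at $\lambda_x$ finishes. The only difference is organizational: you package this reduction as a standalone sublemma with its own structural induction and two subcases on the last exponent, whereas the paper proves it inline within the main induction (using the full ind.\ hyp.\ to get $\beta_1\dde P_x(\lambda)$ rather than a separate sublemma) and splits into three subcases (principal with successor exponent, principal with limit exponent, non-principal).
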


\subsection{Bounding $\ubound$: Proof of \propositionautorefname~\ref{prop-hardy}}
\label{ax-hardy}

We first bound the leanness of derivatives
$\alpha'\in\partial_n\alpha$ in function of $n$ and $\alpha$.
\begin{proposition}
\label{coro-partial-lean}
Assume $k,n>0$ and $\alpha\in\CNF$ is $k$-lean. If
$\alpha\mathrel{\partial_n}\alpha'$ then $\alpha'$ is $kn$-lean.
\end{proposition}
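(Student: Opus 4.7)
The plan is to unfold Eq.~\eqref{eq-partialn} so that $\alpha' = D_n(\omega^{\beta_i}) \oplus \sum_{\ell\ne i}\omega^{\beta_\ell}$ for some top-level summand $\omega^{\beta_i}$ of $\alpha$ (written in non-strict CNF). The tail $\sum_{\ell\ne i}\omega^{\beta_\ell}$ is a subterm of $\alpha$ and thus inherits $k$-leanness, so all the real work is concentrated on the principal derivative $D_n(\omega^{\beta_i})$. My aim is to show that its top-level coefficients are bounded by $k(n-1)$ with $kn$-lean sub-exponents, so that natural-summing with the tail yields top coefficients bounded by $k(n-1)+k = kn$ while preserving $kn$-leanness at every deeper level.

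To analyse $D_n(\omega^{\beta_i})$, I would write $\beta_i = \sum_j\omega^{p_j}$ with $p_j<\omega$ (since $\alpha\in\CNF[\omega^{\omega^\omega}]$). The $k$-leanness of $\beta_i$ gives simultaneously that each distinct $p$-value occurs with multiplicity $\le k$ among the $p_j$, and that each $p_j\le k$ itself. Expanding through Eq.~\eqref{eq-Dn} and Eq.~\eqref{eq-dn} and using the natural-product identity $\omega^\alpha\cdot c \otimes \omega^\beta = \omega^{\alpha\oplus\beta}\cdot c$, every summand collapses to $\omega^{\gamma_j}\cdot (n-1)$, where $\gamma_j = \sum_{\ell\ne j}\omega^{p_\ell}$ if $p_j=0$ and $\gamma_j = \omega^{p_j-1}\cdot(n-1) \oplus \sum_{\ell\ne j}\omega^{p_\ell}$ otherwise. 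Inside $\gamma_j$, the coefficient of any $\omega^{p^*}$ is at most the multiplicity of $p^*$ among the surviving $p_\ell$ (at most $k$) plus an extra $n-1$ if $p^* = p_j-1$, giving $\le k+n-1 \le kn$ via $(k-1)(n-1)\ge 0$; the deeper exponents are naturals $\le k\le kn$, so $kn$-leanness propagates all the way down.

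At the top of $D_n(\omega^{\beta_i})$, indices $j, j'$ with $p_j = p_{j'}$ clearly yield the same $\gamma_j$, contributing coefficient $d\cdot(n-1)$ where $d\le k$ is the shared multiplicity. The main obstacle I anticipate is the converse implication: showing that $\gamma_j = \gamma_{j'}$ forces $p_j = p_{j'}$, so that clusters attached to distinct $p$-values do not collapse onto a common top-exponent and thereby inflate its coefficient. I would handle this by cancelling the shared occurrences of $\beta_i$ (natural sum admits cancellation), reducing to an identity $\omega^{p_j-1}\cdot(n-1)\oplus\omega^{p_{j'}} = \omega^{p_{j'}-1}\cdot(n-1)\oplus\omega^{p_j}$ whose leading $\omega$-powers differ whenever $p_j\ne p_{j'}$; the mixed case $p_j=0\ne p_{j'}$ reduces analogously to $\omega^{p_{j'}} = \omega^{p_{j'}-1}\cdot(n-1)+1$, which fails since $\omega > n-1$. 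Once this uniqueness is in hand, each top-coefficient of $D_n(\omega^{\beta_i})$ is bounded by $k(n-1)$, and combining with the $k$-lean tail delivers $\alpha'$ as $kn$-lean.
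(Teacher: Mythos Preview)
Your proof is correct and follows the same strategy as the paper: show that $D_n(\omega^{\beta_i})$ has top coefficients bounded by $k(n-1)$ with sub-exponents whose coefficients are bounded by $k+n-1\le kn$, then natural-sum with the $k$-lean remainder to reach $kn$. The only difference is bookkeeping: the paper writes $\beta_i$ in \emph{strict} form $\sum_j \omega^{p_j}\cdot c_j$ with $p_1>p_2>\cdots$, so that the $j$-th block of $D_n$ already carries top coefficient $c_j(n-1)\le k(n-1)$, and the associated sub-exponents are seen to be pairwise distinct by a one-line leading-term comparison---this sidesteps your cancellation argument for $\gamma_j=\gamma_{j'}\Rightarrow p_j=p_{j'}$, which is the price you pay for working in non-strict form.
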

\begin{proof}
  We first show that $D_n(\omega^\beta)$ is $k(n-1)$-lean
  when $\beta\in\CNF[\omega^\omega]$ is $k$-lean.  For this, write
  $\beta$ under the strict form $\beta=\sum_{i=1}^m\omega^{p_i} \cdot
  c_i$.  Now Eq.~\eqref{eq-Dn} gives:
\begin{align}
\notag
    D_n(\omega^\beta)
    & =\bigoplus_{i=1}^m\Bigl[D_n(\omega^{\omega^{p_i}})\cdot c_i\otimes\omega^{\omega^{p_i}\cdot (c_i-1)}\otimes\bigotimes_{\ell\neq i}\omega^{\omega^{p_\ell}\cdot c_\ell}\Bigr]
\\
\label{eq-new-aux2}
    & = \bigoplus_{i=1}^m \bigl[ \omega^{\beta_i}\cdot c_i(n-1) \bigr]
\:,
\\
\shortintertext{with}
\label{eq-def-beta-i}
  \beta_i
    &\egdef \omega^{p_i}\cdot(c_i-1) \oplus \ubracew{\omega^{p_i-1}\cdot (n-1)}{\text{or $0$ if $p_i=0$}} \oplus \bigoplus_{\ell\not= i}\omega^{p_\ell}\cdot c_\ell
\:.
  \end{align}
  We can assume $n>1$ since otherwise $D_n(\omega^\beta)=0$ and we are
  done. Inspecting Eq.~\eqref{eq-def-beta-i}, we see that the coefficients
  in $\beta_i$ are $c_i-1$, $n-1$, $c_\ell$ for $\ell\not=i$, and can
  even be $(n-1)+c_{i+1}$ in the case where $p_i-p_{i+1}=1$. Since
  $\beta$ is $k$-lean,  these coefficients
  are $\leq k+n-1\leq k(n-1)$.  Hence $\beta_i$ is
  $k(n-1)$-lean.  The same holds of $D_n(\omega^\beta)$
  since all the $c_i(n-1)$'s in Eq.~\eqref{eq-new-aux2} are $\leq
  k(n-1)$, and cannot get combined in view of
  $\beta_m>\beta_{m-1}>\cdots>\beta_1$.
  \medskip

  Now assume $\alpha'\in\partial_n\alpha$ and write $\alpha$ under the
  form $\alpha=\gamma\oplus\omega^\beta$ such that
  $\alpha'=\gamma\oplus D_n(\omega^\beta)$. We just proved that
  $D_n(\omega^\beta)$ is $k(n-1)$-lean, and since $\gamma$ is
  $k$-lean, $\alpha'$ is $(k(n-1)+k)$-lean, i.e., $kn$-lean.
\end{proof}

\begin{corollary}\label{cor-hardy}
  Let $k,n>0$, $\alpha,\alpha'$ be in $\CNF$, and $h$ be a smooth
  function.  If $\alpha$ is $k$-lean and $\alpha'\in\partial_n\alpha$,
  then for all $x\geq kn$,
  \begin{equation*}
	    h_{\alpha'}(x) \leq h_{P_{kn}(\alpha)}(x)
\;.
  \end{equation*}
\end{corollary}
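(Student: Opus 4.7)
The plan is to chain together three facts already established in the paper: the leanness bound on derivatives (\autoref{coro-partial-lean}), the characterization of $\dd$ via lean ordinals (\autoref{lem-Px-dd}), and the monotonicity of the Hardy hierarchy along $\dd$ (Eq.~\eqref{eq-dd-hardy}).

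First, I would invoke \autoref{coro-partial-lean} to conclude that, since $\alpha$ is $k$-lean and $\alpha'\in\partial_n\alpha$, the derivative $\alpha'$ is $kn$-lean. Next, the well-foundedness of derivatives established in Appendix~\ref{app-deriv-is-wf} gives $\alpha'<\alpha$. Combining these two facts, I would apply \autoref{lem-Px-dd} with parameter $kn$: because $\alpha'$ is $kn$-lean and $\alpha'<\alpha$, we obtain
\begin{equation*}
\alpha'\;\dde[kn]\; P_{kn}(\alpha)\;.
\end{equation*}

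Now, for any $x\geq kn$, the hierarchy of pointwise orderings~\eqref{eq-hierarchy-dd} yields the inclusion $\dd[kn]\;\subseteq\;\dd[x]$, hence $\alpha'\dde[x] P_{kn}(\alpha)$ as well. Applying \eqref{eq-dd-hardy} (which requires $h$ to be smooth, exactly our assumption) then gives
\begin{equation*}
h_{\alpha'}(x)\;\leq\; h_{P_{kn}(\alpha)}(x)\;,
\end{equation*}
with equality holding trivially if $\alpha'=P_{kn}(\alpha)$. This is the desired conclusion.

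The only subtle point is making sure the inclusion $\dd[kn]\subseteq\dd[x]$ really does extend to the reflexive closure, but this is immediate since $\dde$ just adds the diagonal. The rest is bookkeeping: no explicit computation with ordinal normal forms is needed, since the heavy lifting (bounding the coefficients produced by $D_n$) has already been carried out inside \autoref{coro-partial-lean}, and the translation from $<$ to $\dd$ via leanness has been packaged in \autoref{lem-Px-dd}. The main conceptual ingredient is recognizing that leanness is exactly the right parameter to control the passage between the raw $<$-bound on derivatives and the refined pointwise bound needed to feed into the Hardy hierarchy.
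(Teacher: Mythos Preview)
Your proof is correct and follows essentially the same route as the paper's: invoke \autoref{coro-partial-lean} for $kn$-leanness of $\alpha'$, use \autoref{lem-Px-dd} to obtain $\alpha'\dde[kn]P_{kn}(\alpha)$, lift to $\dde[x]$ via \eqref{eq-hierarchy-dd}, and conclude with \eqref{eq-dd-hardy}. You are slightly more explicit than the paper in stating the hypothesis $\alpha'<\alpha$ needed for \autoref{lem-Px-dd}, but otherwise the arguments coincide.
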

\begin{proof}
  Since $\alpha$ is $k$-lean, then $\alpha'$ is $kn$-lean by
  \autoref{coro-partial-lean}, hence $\alpha'\dde[kn]P_{kn}(\alpha)$
  by \autoref{lem-Px-dd} and thus $\alpha'\dde P_{kn}(\alpha)$ by
  \eqref{eq-hierarchy-dd}. One concludes by Eq.~\eqref{eq-dd-hardy}.
\end{proof}

We can now bound $\ubound_\alpha(n)$. Let $h(x)\egdef x\cdot g(x)$ and
note that $h$ is smooth since $g$ is. The following claim proves
\autoref{prop-hardy}, e.g., by choosing $k$ as the leanness of
$\alpha$.
\begin{claim}
  Let $n>0$ and $\alpha$ in $\CNF$.  If $\alpha$ is $k$-lean, then
  \begin{equation*}
    \uboundd{\alpha}(n)\leq h_{\alpha}(kn)\;.
  \end{equation*}
\end{claim}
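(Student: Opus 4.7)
I would proceed by well-founded induction on $\alpha$ under the relation $\partial=\bigcup_{n<\omega}\partial_n$, which was observed to be well-founded on $\CNF$ after \eqref{eq-deriv-2}. The inductive statement must keep both $k$ and $n$ universally quantified, so that on a descent to a derivative $\alpha'$ the hypothesis can be re-instantiated with new parameters depending on $k$ and $n$.

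The base case is trivial: if $\alpha=0$ then $\partial_n 0=\emptyset$, so by definition $\uboundd{0}(n)=\max\emptyset=0=h_0(kn)$.

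For the inductive step, fix $\alpha>0$ that is $k$-lean and some $n>0$, and pick any $\alpha'\in\partial_n\alpha$. Proposition~\ref{coro-partial-lean} gives that $\alpha'$ is $kn$-lean, so the induction hypothesis may be applied with leanness $kn$ and argument $g(n)$, yielding
\[
\uboundd{\alpha'}(g(n))\;\leq\;h_{\alpha'}(kn\cdot g(n))\;=\;h_{\alpha'}(h(kn))\;.
\]
Smoothness of $g$ together with $kn\geq 1$ gives $g(kn)\geq kn+1\geq 2$, hence $h(kn)=kn\cdot g(kn)\geq kn$, which is exactly the hypothesis needed to apply Corollary~\ref{cor-hardy}; that corollary then bounds $h_{\alpha'}(h(kn))$ by $h_{P_{kn}(\alpha)}(h(kn))$. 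Adding~$1$ on both sides and using the Hardy-style unfolding $h_\alpha(kn)=1+h_{P_{kn}(\alpha)}(h(kn))$ from Lemma~\ref{lem-hierarchies}(\ref{lem-hierarchies-1}) turns this into $1+\uboundd{\alpha'}(g(n))\leq h_\alpha(kn)$. Taking the maximum over $\alpha'\in\partial_n\alpha$ on the left-hand side closes the induction, since by \eqref{eq:ubound} that maximum is exactly $\uboundd{\alpha}(n)$.

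The key obstacle, already addressed by the preceding lemmas, is the three-way alignment of indices: the leanness $kn$ produced by Proposition~\ref{coro-partial-lean}, the index $kn$ at which one reads off the predecessor $P_{kn}(\alpha)$ in Corollary~\ref{cor-hardy}, and the argument $kn$ fed to $h_\alpha$ in the Hardy-style recurrence. The target bound $h_\alpha(kn)$ is deliberately formulated at the point $kn$ rather than at $n$ so that these three indices coincide; once this matching is observed the inductive step reduces to chaining the three inequalities above.
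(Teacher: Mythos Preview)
Your approach is essentially identical to the paper's, but the displayed chain contains one slip: you write $kn\cdot g(n)=h(kn)$, whereas by definition $h(kn)=kn\cdot g(kn)$, and these differ as soon as $k>1$. The paper inserts the missing step: since $\alpha>0$ forces $k>0$, one has $n\leq kn$, hence $g(n)\leq g(kn)$ by monotonicity of $g$, and then $h_{\alpha'}(kn\cdot g(n))\leq h_{\alpha'}(kn\cdot g(kn))=h_{\alpha'}(h(kn))$ by monotonicity of $h_{\alpha'}$ in its argument (equation~\eqref{eq-mono-hardy}). You clearly know the correct value of $h(kn)$, since you use it two lines later; with that one ``$=$'' replaced by ``$\leq$'' and the monotonicity justification added, your argument is line-for-line the paper's proof.
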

\begin{proof}
  By induction on $\alpha$.  If $\alpha=0$, then
  $\uboundd{0}(n)=0=h_{0}(kn)$.  Otherwise, $k>0$ and there exists
  $\alpha'$ in $\partial_{n}\alpha$ s.t.\
  $\uboundd{\alpha}(n)=1+\uboundd{\alpha'}(g(n))$.  Observe that
  $\alpha'<\alpha$, and that by \autoref{coro-partial-lean}, $\alpha'$
  is $kn$-lean, which means that we can use the induction hypothesis:
  \begin{align*}
    \uboundd{\alpha}(n)&=1+\uboundd{\alpha'}(g(n))\\
    &\leq 1+h_{\alpha'}\!\left(kn\cdot g(n)\right)\tag{by ind.\
      hyp.}\\
    &\leq 1+h_{\alpha'}\!\left(kn\cdot g(kn)\right)\tag{by
      monotonicity of $g$ and $h_{\alpha'}$, since $k>0$}\\
    &=1+h_{\alpha'}\!\left(h(kn)\right)\tag{by def.\ of $h$}\\
    &\leq 1+h_{P_{kn}(\alpha)}\!\left(h(kn)\right)\tag{by
      \autoref{cor-hardy} since $h(x)\geq x$}\\
    &=h_{\alpha}(kn)\;.\tag{by \autoref{lem-hierarchies}}
  \end{align*}
\end{proof}
Note that a close inspection of the proof of
\autoref{coro-partial-lean} would allow to refine this bound to
\begin{equation*}
  \ubound_\alpha(n)\leq h_\alpha\!\left((k-1)n-k+1\right)
\end{equation*}
at the expense of readability.  See \autoref{ax:comps} for detailed
comparisons with other bounds in the literature.

\subsection{Proof of \theoremautorefname~\ref{theo-main}}

\begin{proof}[Proof sketch for \autoref{theo-main}]
  Observe that $h(x)\eqdef x\cdot g(x)$ is in $\FGH{\max(\gamma,2)}$.
  We apply \autoref{prop-hardy}, items 2--3 of
  \autoref{lem-hierarchies}, and \autoref{lem-fgh} to show that
  $\badd{A,g}$ is bounded above by a function in
  \begin{itemize}
  \item $\FGH{\beta}$ if $\gamma<\omega$ and $\beta\geq\omega$, and in
  \item $\FGH{\max(\gamma,2)+\beta}$ if $\beta<\omega$.
  \end{itemize}
  See \autoref{sub:ext-grz} for details on the $(\FGH{\alpha})_\alpha$
  hierarchy.
\end{proof}

\section{Subrecursive Hierarchies}
\label{ax-hierarchies}
In a few occasions (namely \autoref{lem-hierarchies},
\eqref{eq-mono-hardy}, and \eqref{eq-dd-hardy}), we refer to results
stated \emph{without proof} by \citet{cichon83} and \citet{cichon98}.
As we do not know where to find these proofs (they are certainly too
trivial to warrant being published in full), we provide these missing
proofs in this appendix, which might still be helpful for readers
unaccustomed to subrecursive hierarchies.  The appendix is also the
occasion of checking that minor variations we have made in the
definitions of the hierarchies are harmless, and of proving useful
results on lean ordinal terms.

\subsection{Ordinal Terms}
\renewcommand{\CNF}[1][\ezero]{\ensuremath\mathrm{CNF}(#1)}
We work as is most customary with the set $\Omega$ of \emph{ordinal
  terms} following the abstract syntax
\begin{equation*}
  \alpha::= 0\mid \omega^\alpha\mid \alpha+\alpha
\end{equation*}
for ordinals below $\ezero$.  We write $1$ for $\omega^0$ and
$\alpha\cdot n$ for $\obracew{\alpha+\cdots+\alpha}{n\text{ times}}$.
We work modulo associativity
($(\alpha+\beta)+\gamma=\alpha+(\beta+\gamma)$) and idempotence
($\alpha+0=\alpha=0+\alpha$) of $+$.  An ordinal term $\alpha$ of form
$\gamma+1$ is called a \emph{successor ordinal}.  Otherwise, if not $0$, it is a
\emph{limit ordinal}, usually denoted $\lambda$.

Each ordinal term $\alpha$ denotes a unique ordinal
$\mathop{ord}(\alpha)$ (by interpretation into ordinal arithmetic,
with $+$ denoting direct sum),
from which we can define a well-ordering on terms by
$\alpha'\leq\alpha$ if
$\mathop{ord}(\alpha')\leq\mathop{ord}(\alpha)$. Note that the mapping
of terms to ordinals is not injective, so the
ordering on terms is not antisymmetric.

Ordinal terms can be in Cantor Normal Form (CNF), i.e.\ sums
\begin{equation*}
  \alpha = \omega^{\beta_1}+\cdots+\omega^{\beta_m}
\end{equation*}
with $\alpha>\beta_1\geq\cdots\geq\beta_m\geq 0$ with each $\beta_i$
in CNF itself.  We also use at times the \emph{strict} form
\begin{equation*}
  \alpha = \omega^{\beta_1}\cdot c_1+\cdots+\omega^{\beta_m}\cdot c_m
\end{equation*}
where $\alpha>\beta_1>\cdots>\beta_m\geq 0$ and
$\omega>c_1,\ldots,c_m>0$ and each $\beta_i$ in strict form---we call
the $c_i$'s \emph{coefficients}.  Terms $\alpha$ in CNF are in
bijection with their denoted ordinals $\mathop{ord}(\alpha)$.  We write
$\CNF[\alpha]$ for the set of ordinal terms $\alpha'<\alpha$ in CNF;
thus $\CNF$ is a subset of $\Omega$ in this view.

When working with terms in $\CNF$, the ordering has a syntactic
characterization as
\begin{equation}
\alpha < \alpha'
\Leftrightarrow
\left\{\begin{array}{l}
\text{$\alpha=0$ and $\alpha'\not=0$, or}
\\
\text{$\alpha=\omega^\beta+\gamma$, $\alpha'=\omega^{\beta'}+\gamma'$
  and}
\left\{\begin{array}{l}
\text{$\beta<\beta'$, or}
\\
\text{$\beta=\beta'$ and $\gamma<\gamma'$.}
\end{array}\right.%
\end{array}\right.%
\tag{see \eqref{eq-def-ordering}}
\end{equation}

\subsection{Predecessors and Pointwise Ordering}
\label{ssec-ax-P-y-dd}

\paragraph{Fundamental Sequences.}
Subrecursive hierarchies are defined through assignments of
\emph{fundamental sequences} $(\lambda_x)_{x<\omega}$ for limit
ordinal terms $\lambda$ in $\Omega$, verifying $\lambda_x<\lambda$ for
all $x$ and $\lambda=\sup_x\lambda_x$.  One way to obtain families of
fundamental sequences is to fix a particular sequence $\omega_x$ for
$\omega$ and to define
\begin{align}\tag{see \ref{eq-fund}}
  (\gamma+\omega^{\beta+1})_{x}&\eqdef\gamma+\omega^\beta\cdot\omega_x\,,&
  (\gamma+\omega^\lambda)_{x}&\eqdef\gamma+\omega^{\lambda_x}\;.
\end{align}
We assume $\omega_x$ to be the value in $x$ of some monotone function
$s$ s.t.\ $s(x)\geq x$ for all $x$, typically $s(x)=x$ or $s(x)=x+1$
(as in the main text).  We will see in \autoref{sub:fund-seq} how
different assignments of fundamental sequences for $\omega$ influence
the hierarchies of functions built from them.

\paragraph{Predecessors.}

Given an assignment of fundamental sequences, one defines the
($x$-indexed) \emph{predecessor} $P_x(\alpha)<\alpha$ of an ordinal
$\alpha\not=0$ in $\Omega$ as
\begin{align}\tag{see \ref{eq-pred}}
 P_x(\alpha+1)&\eqdef\alpha\,,&P_x(\lambda)&\eqdef P_x(\lambda_x)\;.
\end{align}

\renewcommand{\CNF}[1][\ezero]{\ensuremath\mathrm{CNF}(#1)}

\begin{lemma}
Assume $\alpha>0$ in $\Omega$. Then for all $x$
\begin{align}
\label{eq-Px-prop1}
P_x(\gamma+\alpha) &= \gamma+P_x(\alpha)
\: ,
\\
\label{eq-Px-prop2}
\text{if }\omega_x>0\text{ then }P_x(\omega^\alpha) &= \omega^{P_x(\alpha)}\cdot(\omega_x-1)+P_x(\omega^{P_x(\alpha)})
\: .
\end{align}
\end{lemma}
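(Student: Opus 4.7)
The plan is to prove both identities by transfinite induction on $\alpha$, with a case split on whether $\alpha$ is a successor or a limit. In both cases, everything reduces to unfolding the definition of $P_x$ together with that of the fundamental sequences in \eqref{eq-fund}. The well-foundedness of the induction is immediate, since each recursive step replaces $\alpha$ either by $\beta < \alpha$ or by $\lambda_x < \lambda$.

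For \eqref{eq-Px-prop1}, the successor case $\alpha = \beta+1$ is direct from $P_x(\gamma+\beta+1) = \gamma+\beta$. For the limit case $\alpha = \lambda$, the key observation is the auxiliary identity $(\gamma+\lambda)_x = \gamma+\lambda_x$, which I would state as a small preliminary lemma and verify by examining whether the trailing exponent of $\lambda$ is a successor or a limit (the two clauses of \eqref{eq-fund} both respect prefixing by $\gamma$). Given this, $P_x(\gamma+\lambda) = P_x((\gamma+\lambda)_x) = P_x(\gamma+\lambda_x) = \gamma + P_x(\lambda_x) = \gamma + P_x(\lambda)$, where the third equality is the inductive hypothesis applied at $\lambda_x < \lambda$ and the last is the defining clause $P_x(\lambda) = P_x(\lambda_x)$.

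For \eqref{eq-Px-prop2}, the successor case $\alpha = \beta+1$ exploits $\omega_x > 0$ to decompose $(\omega^{\beta+1})_x = \omega^\beta\cdot\omega_x$ as $\omega^\beta\cdot(\omega_x-1)+\omega^\beta$; applying the already-established \eqref{eq-Px-prop1} then gives $P_x(\omega^{\beta+1}) = \omega^\beta\cdot(\omega_x-1) + P_x(\omega^\beta)$, which matches the stated right-hand side because $P_x(\beta+1) = \beta$. For the limit case $\alpha = \lambda$, we have $(\omega^\lambda)_x = \omega^{\lambda_x}$ and $P_x(\omega^\lambda) = P_x(\omega^{\lambda_x})$; the inductive hypothesis applied at $\lambda_x$ yields exactly the required form, using $P_x(\lambda_x) = P_x(\lambda)$ by definition. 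There is no serious obstacle here; the only mild subtlety is the need to establish \eqref{eq-Px-prop1} first (since the successor case of \eqref{eq-Px-prop2} invokes it) and to isolate the auxiliary identity $(\gamma+\lambda)_x = \gamma+\lambda_x$ as a small preliminary step to keep the main induction clean.
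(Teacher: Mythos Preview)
Your proposal is correct and essentially identical to the paper's proof: both parts are proved by transfinite induction on $\alpha$ with the same successor/limit case splits, using the same auxiliary fact $(\gamma+\lambda)_x=\gamma+\lambda_x$ and invoking \eqref{eq-Px-prop1} in the successor case of \eqref{eq-Px-prop2}. The only cosmetic difference is that the paper treats $(\gamma+\lambda)_x=\gamma+\lambda_x$ as immediate from \eqref{eq-fund} rather than isolating it as a separate lemma.
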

\begin{proof}[Proof of \eqref{eq-Px-prop1}]
By induction over $\alpha$.
For the successor case $\alpha=\beta+1$, this goes
\begin{align*}
&P_x(\gamma+\beta+1)
\equalby{\eqref{eq-pred}}
\gamma+\beta
\equalby{\eqref{eq-pred}}
\gamma+P_x(\beta+1)\:.
\\
\shortintertext{For the limit case $\alpha=\lambda$, this goes
}
&P_x(\gamma+\lambda)
\equalby{\eqref{eq-pred}}
P_x((\gamma+\lambda)_x)
\equalby{\eqref{eq-fund}}
P_x(\gamma+\lambda_x)
\equalby{ih}
\gamma+P_x(\lambda_x)
\equalby{\eqref{eq-pred}}
\gamma+P_x(\lambda)\;.\qedhere
\end{align*}
\end{proof}
\begin{proof}[Proof of \eqref{eq-Px-prop2}]
By induction over $\alpha$.
For the successor case $\alpha=\beta+1$, this goes
\begin{align*}
\!\!\!P_x(\omega^{\beta+1})
&\equalby{\eqref{eq-pred}}
P_x((\omega^{\beta+1})_x)
\equalby{\eqref{eq-fund}}
P_x(\omega^{\beta}\cdot \omega_x)
\equalby{\eqref{eq-Px-prop1}}
\omega^{\beta}\cdot(\omega_x-1)+
P_x(\omega^{\beta})
\\
&\equalby{\eqref{eq-pred}}
\omega^{P_x(\beta+1)}\cdot(\omega_x-1)+
P_x(\omega^{P_x(\beta+1)})
\:.
\\
\shortintertext{For the limit case $\alpha=\lambda$, this goes
}
P_x(\omega^{\lambda})
&\equalby{\eqref{eq-pred}}
P_x((\omega^{\lambda})_x)
\equalby{\eqref{eq-fund}}
P_x(\omega^{\lambda_x})
\equalby{ih}
\omega^{P_x(\lambda_x)}\cdot (\omega_x-1)+
P_x(\omega^{P_x(\lambda_x)})
\\
&\equalby{\eqref{eq-pred}}
\omega^{P_x(\lambda)}\cdot (\omega_x-1)+
P_x(\omega^{P_x(\lambda)})\:.\qedhere
\end{align*}
\end{proof}

\paragraph{Pointwise ordering.}
Recall that, for $x\in\Nat$, $\dd$ is the smallest transitive relation
satisfying:
\begin{xalignat}{2}
  \tag{see \ref{eq-dd-def}}
			\alpha&\dd\alpha+1\;,
  &
		      \lambda_{x}&\dd\lambda\;.
\end{xalignat}
In particular, using induction on $\alpha$, one immediately sees that
\begin{align}
\tag{see \ref{eq-dd-zero}}
			    0 &\dde\alpha
\;,
\\
		       P_{x}(\alpha)&\dd\alpha
\:.
\end{align}

\begin{lemma}
For all $\alpha$ in $\Omega$ and all $x$
\begin{align*}
\tag{see \ref{eq-dd-sum}}
 \alpha'\dd\alpha \text{ implies }  &\gamma+\alpha'\dd\gamma+\alpha
\:,
\\
\tag{see \ref{eq-dd-power}}
 \omega_x>0\text{ and }\alpha'\dd\alpha \text{ imply } &\omega^{\alpha'}\dd\omega^\alpha
\:.
\end{align*}
\end{lemma}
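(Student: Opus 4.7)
The plan is to prove both implications by induction on the length of the chain of generating pairs witnessing $\alpha'\dd\alpha$. Since $\dd$ is the smallest transitive relation containing the base relations $\beta\dd\beta+1$ and $\lambda_x\dd\lambda$, the transitivity step comes for free: from $\alpha'\dd\beta\dd\alpha$ the inductive hypothesis produces $\gamma+\alpha'\dd\gamma+\beta\dd\gamma+\alpha$ (and similarly $\omega^{\alpha'}\dd\omega^\beta\dd\omega^\alpha$). So the real content is to settle the two generating cases for each of \eqref{eq-dd-sum} and \eqref{eq-dd-power}, proving \eqref{eq-dd-sum} first so it is available when attacking \eqref{eq-dd-power}.

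For \eqref{eq-dd-sum}, the successor base case $\alpha'=\beta$, $\alpha=\beta+1$ is immediate from the generating relation applied to $\gamma+\beta$, since $\gamma+(\beta+1)=(\gamma+\beta)+1$. For the limit base case $\alpha'=\lambda_x$, $\alpha=\lambda$, I would first record the small calculation $(\gamma+\lambda)_x=\gamma+\lambda_x$: writing $\lambda$ in CNF as $\lambda=\lambda'+\omega^\delta$ with $\omega^\delta$ its last principal term, \eqref{eq-fund} applied to $(\gamma+\lambda'+\omega^\delta)_x$ (grouping $\gamma+\lambda'$ as the prefix) gives the claim in both the successor-$\delta$ and limit-$\delta$ subcases. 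Because $\gamma+\lambda$ is still a limit, this yields $\gamma+\lambda_x=(\gamma+\lambda)_x\dd\gamma+\lambda$.

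For \eqref{eq-dd-power}, I first need the small auxiliary $0\dd\delta$ for every $\delta>0$. This goes by an easy induction on $\delta$ in $\CNF$: the case $\delta=1$ is the generating relation $0\dd 0+1$; for a successor $\delta=\delta'+1$ with $\delta'>0$, transitivity gives $0\dd\delta'\dd\delta$ from the inductive hypothesis; for a limit $\delta$, note $\delta_x>0$ whenever $\omega_x>0$ (the rules in \eqref{eq-fund} never return $0$ in that case), so $0\dd\delta_x\dd\delta$ by the inductive hypothesis. Given this, the limit base case $\alpha'=\lambda_x$, $\alpha=\lambda$ of \eqref{eq-dd-power} is a one-liner: by \eqref{eq-fund}, $\omega^{\lambda_x}=(\omega^\lambda)_x\dd\omega^\lambda$ since $\omega^\lambda$ is a limit. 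For the successor base case $\alpha'=\beta$, $\alpha=\beta+1$, I unfold $(\omega^{\beta+1})_x=\omega^\beta\cdot\omega_x$ via \eqref{eq-fund}. If $\omega_x=1$ this gives $\omega^\beta=(\omega^{\beta+1})_x\dd\omega^{\beta+1}$ directly. Otherwise $\omega_x\geq 2$, so $\omega^\beta\cdot(\omega_x-1)>0$ and the auxiliary gives $0\dd\omega^\beta\cdot(\omega_x-1)$; now \eqref{eq-dd-sum} with $\gamma=\omega^\beta$ produces $\omega^\beta\dd\omega^\beta+\omega^\beta\cdot(\omega_x-1)=\omega^\beta\cdot\omega_x=(\omega^{\beta+1})_x\dd\omega^{\beta+1}$, and transitivity closes the case.

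The main obstacle is precisely this successor base case of \eqref{eq-dd-power}: unlike its limit counterpart, $\omega^\beta$ is not itself the $x$-th element of a fundamental sequence for $\omega^{\beta+1}$ unless $\omega_x=1$, so one must bridge from $\omega^\beta$ to $\omega^\beta\cdot\omega_x$ by inserting an $\omega^\beta\cdot(\omega_x-1)$ summand. This forces the ordering of the proof: \eqref{eq-dd-sum} and the auxiliary $0\dd\delta$ for $\delta>0$ have to be in hand before one can close \eqref{eq-dd-power}, and both must be stated in the generality that allows $\gamma$ and $\delta$ to be arbitrary, not just principal.
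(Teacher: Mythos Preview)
Your proof is correct and follows essentially the same approach as the paper: the key identities $(\gamma+\lambda)_x=\gamma+\lambda_x$ and $(\omega^\lambda)_x=\omega^{\lambda_x}$, together with the bridging $\omega^\beta\dd\omega^\beta\cdot\omega_x$ via \eqref{eq-dd-sum} and the auxiliary $0\dde\delta$, are exactly what the paper uses. The only packaging difference is that the paper inducts on $\alpha$ through the characterization \eqref{eq-dd-inductive} (reducing $\alpha'\dd\alpha$ to $\alpha'\dde\beta$ for a smaller $\beta$), whereas you induct on the length of the derivation and handle only the generating pairs explicitly, letting transitivity absorb the rest; both schemes close the same cases with the same computations.
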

\begin{proof}
All proofs are by induction over $\alpha$ (NB: the case
$\alpha=0$ is impossible).
\\
\textbf{\eqref{eq-dd-sum}:}
For the successor case $\alpha=\beta+1$, this goes through
\begin{align*}
\alpha'\dd\beta+1
\text{ implies }
&\alpha'\dde\beta
\tag{by Eq.~\eqref{eq-dd-inductive}}
\\
\text{ implies }
&\gamma+\alpha'\dde\gamma+\beta
\ddby{\eqref{eq-dd-inductive}}\gamma+\beta+1
\:.
\tag{by ind.\ hyp.}
\\
\shortintertext{For the limit case $\alpha=\lambda$, this goes through
}
\alpha'\dd\lambda
\text{ implies }
&\alpha'\dde\lambda_x
\tag{by Eq.~\eqref{eq-dd-inductive}}
\\
\text{ implies }
&\gamma+\alpha'\dde\gamma+\lambda_x\equalby{\eqref{eq-fund}}(\gamma+\lambda)_x
\ddby{\eqref{eq-dd-def}}\gamma+\lambda
\: .
\tag{by ind.\ hyp.}
\end{align*}
\textbf{\eqref{eq-dd-power}:}
For the successor case $\alpha=\beta+1$, we go through
\begin{align*}
\alpha'\dd\beta+1
\text{ implies }
&\alpha'\dde\beta
\tag{by Eq.~\eqref{eq-dd-inductive}}
\\
\text{ implies }
&\omega^{\alpha'}\dde\omega^\beta=\omega^\beta+0
\tag{by ind.\ hyp.}
\\
\text{ implies }
&\omega^{\alpha'}\dde\omega^\beta + \omega^\beta\cdot(\omega_x-1)
\tag{by Eq.~\eqref{eq-dd-sum} and~\eqref{eq-dd-zero}}
\\
\text{ implies }
&\omega^{\alpha'}\dde\omega^\beta \cdot \omega_x = (\omega^{\beta+1})_x
       \ddby{\eqref{eq-dd-def}}\omega^{\beta+1} \;.
\\
\shortintertext{For the limit case $\alpha=\lambda$, this goes through
}
\alpha'\dd\lambda
\text{ implies }
&\alpha'\dde\lambda_x
\tag{by Eq.~\eqref{eq-dd-inductive}}
\\
\text{ implies }
&\omega^{\alpha'}\dde\omega^{\lambda_x}\equalby{\eqref{eq-fund}}(\omega^\lambda)_x
\ddby{\eqref{eq-dd-def}}\omega^\lambda
\: .
\tag{by ind.\ hyp.}
\end{align*}
\end{proof}

\begin{lemma}\label{lem-dd-xy}
  Let $\lambda$ be a limit ordinal in $\Omega$ and $x<y$ in $\Nat$.
  Then $\lambda_x\dd[y]\lambda_y$, and if furthermore $\omega_x>0$,
  then $\lambda_x\dd\lambda_y$.
\end{lemma}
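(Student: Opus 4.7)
The plan is to prove both clauses simultaneously by transfinite induction on the limit ordinal term $\lambda$, splitting according to the last summand in CNF. Write $\lambda=\gamma+\omega^{\delta}$ with $\delta>0$, and distinguish between \textit{Case A} where $\delta=\beta+1$ is a successor, and \textit{Case B} where $\delta=\mu$ is itself a limit (so $\mu<\lambda$, which justifies the induction).

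In Case A, \eqref{eq-fund} unfolds to $\lambda_x=\gamma+\omega^{\beta}\cdot\omega_x$ and $\lambda_y=\gamma+\omega^{\beta}\cdot\omega_y$. Since the assignment $x\mapsto\omega_x$ is strictly increasing, $\omega_x<\omega_y$ in $\Nat$, and it suffices to climb from $\omega^{\beta}\cdot\omega_x$ to $\omega^{\beta}\cdot\omega_y$ one block of $\omega^{\beta}$ at a time and then prefix $\gamma$ via \eqref{eq-dd-sum}. Each individual step $\omega^{\beta}\cdot k\dd[y]\omega^{\beta}\cdot(k{+}1)$ follows from $0\dd[y]\omega^{\beta}$ (a strict instance of \eqref{eq-dd-zero}, valid because $\omega^{\beta}>0$) combined with \eqref{eq-dd-sum}; transitivity of $\dd[y]$ closes the chain and yields $\lambda_x\dd[y]\lambda_y$. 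For the second clause the very same chain is carried out at level $x$ instead of $y$, which again requires only $\omega^{\beta}>0$; note that in this case the assumption $\omega_x>0$ is in fact not used.

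In Case B, \eqref{eq-fund} gives $\lambda_x=\gamma+\omega^{\mu_x}$ and $\lambda_y=\gamma+\omega^{\mu_y}$, so the problem reduces to the exponent. Applying the induction hypothesis to $\mu$ yields $\mu_x\dd[y]\mu_y$ for the first clause, and $\mu_x\dd\mu_y$ for the second clause (which is where $\omega_x>0$ is inherited from the outer statement). One then passes to $\omega^{\mu_x}\dd[y]\omega^{\mu_y}$, resp. $\omega^{\mu_x}\dd\omega^{\mu_y}$, by invoking \eqref{eq-dd-power}; its side condition $\omega_{(\cdot)}>0$ is satisfied because $y>x\geq 0$ forces $y\geq 1$ hence $\omega_y\geq y\geq 1>0$, and because $\omega_x>0$ is assumed in the second clause. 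A last application of \eqref{eq-dd-sum} to prefix $\gamma$ delivers $\lambda_x\dd[y]\lambda_y$ (resp. $\lambda_x\dd\lambda_y$).

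The only genuinely subtle ingredient is the use of strict monotonicity of $x\mapsto\omega_x$ in Case A, which is the standard assumption on fundamental sequences and which is indispensable: were $\omega_x=\omega_y$ to happen, one would get $\lambda_x=\lambda_y$ in Case A, making the claim vacuously false. Beyond this, the argument is a straightforward mechanical use of the precongruence properties \eqref{eq-dd-sum} and \eqref{eq-dd-power} together with the base fact $0\dde\alpha$; the real content of the lemma is entirely contained in the iteration scheme of Case A and the exponent induction of Case B, and the bookkeeping about $\omega_x>0$ versus $\omega_y>0$ at each level.
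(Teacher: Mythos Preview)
Your proof is correct and follows essentially the same approach as the paper's: induction on $\lambda$ with a case split on whether the last exponent is a successor or a limit, using \eqref{eq-dd-zero} and \eqref{eq-dd-sum} in the former case and the induction hypothesis together with \eqref{eq-dd-power} and \eqref{eq-dd-sum} in the latter. Your explicit discussion of the strict monotonicity of $x\mapsto\omega_x$ is a welcome clarification that the paper glosses over (it writes $\omega_y=\omega_x+z$ with $z\geq 0$, which taken literally would only give $\lambda_x\dde[y]\lambda_y$).
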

\begin{proof}
  By induction over $\lambda$. Write $\omega_y=\omega_x+z$ for some
  $z\geq 0$ by monotonicity and $\lambda=\gamma+\omega^{\alpha}$ with
  $0<\alpha$.

  If $\alpha=\beta+1$ is a successor, then
  $\lambda_x=\gamma+\omega^\beta\cdot
  \omega_x\dde[y]\gamma+\omega^\beta\cdot \omega_x+\omega^\beta\cdot z$ by Eq.~\eqref{eq-dd-sum} since $0\dde[y]\omega^\beta\cdot z$.
 We conclude by noting that $\lambda_y = \gamma+\omega^{\beta}\cdot
 (\omega_x+z)$; the same arguments also show $\lambda_x\dd\lambda_y$.

 If $\alpha$ is a limit ordinal, then $\alpha_x\dd[y]\alpha_y$ by
 ind.\ hyp., hence $\lambda_x=\gamma+\omega^{\alpha_x}\dd[y]
 \gamma+\omega^{\alpha_y}=\lambda_y$ by Eqs.~\eqref{eq-dd-power}
 (applicable since $\omega_y\geq y>x\geq 0$) and~\eqref{eq-dd-sum}.
 If $\omega_x>0$, then the same arguments show
 $\lambda_x\dd\lambda_y$.
\end{proof}
\noindent
Now, using Eq.~\eqref{eq-dd-inductive} together with
\autoref{lem-dd-xy}, we see
\begin{corollary}
\label{cor-ddx-ddy}
Let $\alpha,\beta$ in $\Omega$ and $x,y$ in $\Nat$.  If $x\leq y$ then
$\alpha\dd\beta$ implies $\alpha\dd[y]\beta$.
\end{corollary}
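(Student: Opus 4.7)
The plan is to proceed by well-founded induction on $\beta$, using the inductive characterization of $\dd$ provided by Eq.~\eqref{eq-dd-inductive}. The case $x=y$ is immediate, so I assume $x<y$ throughout, and the base case $\beta=0$ holds vacuously since no $\alpha$ satisfies $\alpha\dd 0$.

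For the successor case $\beta=\gamma+1$, Eq.~\eqref{eq-dd-inductive} gives $\alpha\dde\gamma$. By the induction hypothesis (or trivially when $\alpha=\gamma$) this upgrades to $\alpha\dde[y]\gamma$, and since $\gamma\dd[y]\gamma+1$ holds directly from Eq.~\eqref{eq-dd-def}, transitivity of $\dd[y]$ yields $\alpha\dd[y]\beta$. For the limit case $\beta=\lambda$, Eq.~\eqref{eq-dd-inductive} gives $\alpha\dde\lambda_x$. The induction hypothesis applied to $\lambda_x<\lambda$ yields $\alpha\dde[y]\lambda_x$; combining with $\lambda_x\dd[y]\lambda_y$ from \autoref{lem-dd-xy} (applicable since $x<y$) and $\lambda_y\dd[y]\lambda$ from Eq.~\eqref{eq-dd-def}, a single application of transitivity concludes $\alpha\dd[y]\lambda$.

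The only non-trivial step is the limit case, where an index-$x$ witness $\lambda_x\dd\lambda$ must be converted into a chain using only index-$y$ steps. This is precisely the role of \autoref{lem-dd-xy}, which bridges $\lambda_x$ to $\lambda_y$ inside the coarser $\dd[y]$ relation so that the detour through $\lambda_y$ respects the required index. Everything else is a routine structural unfolding through Eq.~\eqref{eq-dd-inductive}, and no further properties of fundamental sequences are needed.
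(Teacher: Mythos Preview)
Your proof is correct and follows the same approach as the paper, which merely gestures at Eq.~\eqref{eq-dd-inductive} and \autoref{lem-dd-xy} without spelling out the induction. You have written out explicitly the well-founded induction on $\beta$ that the paper leaves implicit, with the limit case handled exactly as intended by bridging $\lambda_x$ to $\lambda_y$ via \autoref{lem-dd-xy}.
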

\noindent
In other words, $\dd \:\subseteq\: \dd[x+1] \:\subseteq\: \dd[x+2]
\:\subseteq\: \cdots$.
\medskip

One sees $\bigl(\bigcup_{x\in\Nat}\dd\bigr) \:=\: <$ over terms in
$\CNF$ as a result of \autoref{lem-Px-dd} that we now set to prove.
Since $\alpha\dde P_x(\gamma)$ directly entails all the other
statements of \autoref{lem-Px-dd}, it is enough to prove:
\begin{claim}
  Let $\alpha,\gamma$ in $\CNF$ and $x$ in $\Nat$.  If $\alpha$ is
  $x$-lean, then
\begin{equation*}
\alpha<\gamma \text{ implies }\alpha\dde
P_x(\gamma)\:.
\end{equation*}
\end{claim}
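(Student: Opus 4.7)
The plan is induction on $\gamma$, with the limit case resting on the following sublemma where $x$-leanness is used: if $\lambda$ is a limit in $\CNF$ and $\alpha\in\CNF$ is $x$-lean with $\alpha<\lambda$, then in fact $\alpha<\lambda_x$ strictly.

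I would prove the sublemma by induction on $\lambda$. Write $\lambda=\omega^{\beta_1}+\cdots+\omega^{\beta_m}$ in CNF with $\beta_m>0$, so that $\lambda_x=\omega^{\beta_1}+\cdots+\omega^{\beta_{m-1}}+(\omega^{\beta_m})_x$ with $(\omega^{\beta_m})_x>0$. The easy cases---where the CNF of $\alpha$ differs from that of $\lambda$ strictly before position $m$, or where $\alpha$ has strictly fewer than $m$ summands---yield immediately $\alpha\leq\omega^{\beta_1}+\cdots+\omega^{\beta_{m-1}}<\lambda_x$. The substantive case is $\alpha=\omega^{\beta_1}+\cdots+\omega^{\beta_{m-1}}+\alpha'$ with $\alpha'<\omega^{\beta_m}$, and it suffices to show $\alpha'<(\omega^{\beta_m})_x$. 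If $\beta_m=\delta+1$ is a successor, then $(\omega^{\beta_m})_x=\omega^\delta\cdot(x+1)$; the $\omega^\delta$-coefficient of $\alpha'$ written in strict form is at most $x$ by leanness, so $\alpha'<\omega^\delta\cdot(x+1)$. If $\beta_m$ is itself a limit, then $(\omega^{\beta_m})_x=\omega^{(\beta_m)_x}$; the largest exponent $\alpha_m$ appearing in $\alpha'$ is $x$-lean (as a sub-exponent of $\alpha$) and satisfies $\alpha_m<\beta_m$, so the sublemma's inner induction hypothesis (applicable since $\beta_m<\lambda$) yields $\alpha_m<(\beta_m)_x$, whence $\alpha'<\omega^{\alpha_m+1}\leq\omega^{(\beta_m)_x}$.

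With the sublemma in hand, the main claim follows by induction on $\gamma$. If $\gamma=\beta+1$, then $P_x(\gamma)=\beta\geq\alpha$: either $\alpha=\beta$ and $\alpha\dde P_x(\gamma)$ by reflexivity, or $\alpha<\beta$, and the IH applied to $\beta<\gamma$ gives $\alpha\dde P_x(\beta)\dd\beta=P_x(\gamma)$. If $\gamma=\lambda$ is a limit, the sublemma gives $\alpha<\lambda_x<\lambda$, and the IH applied to $\lambda_x$ yields $\alpha\dde P_x(\lambda_x)=P_x(\gamma)$.

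The main obstacle is the sublemma, and more precisely its successor subcase, as this is the one spot where $x$-leanness is directly cashed in (via the coefficient bound on $\omega^\delta$); the limit subcase merely propagates leanness to a sub-exponent and recurses. Everything else reduces to lexicographic comparison of CNFs and routine invocations of the already established properties of $P_x$, fundamental sequences, and $\dd$ (in particular \eqref{eq-dd-inductive}, \eqref{eq-dd-sum}, and the inclusion $\dd\,\subseteq\,<$).
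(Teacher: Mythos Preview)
Your proof is correct and follows essentially the same approach as the paper's: both proceed by induction on $\gamma$, with the limit case reducing to the key fact that an $x$-lean $\alpha<\lambda$ satisfies $\alpha<\lambda_x$, established via the same case analysis (successor vs.\ limit last exponent, with leanness bounding the coefficient in the former and a recursive descent on the exponent in the latter). The only difference is organizational: you isolate this fact as a standalone sublemma with its own induction on $\lambda$, whereas the paper proves it inline within the main induction on $\gamma$, invoking the main induction hypothesis itself (on $\lambda<\omega^\lambda$ and on $\omega^\beta<\gamma$) to handle the recursive subcases.
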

\begin{proof}
  If $\alpha=0$, we are done so we assume $\alpha>0$ and hence $x>0$,
  thus $\alpha=\sum_{i=1}^m\omega^{\beta_i}\cdot c_i$ with $m>0$ and
  $\omega_x\geq x>0$.  Working with terms in CNF allows us to employ
  the syntactic characterization of $<$ given in
  \eqref{eq-def-ordering}.

  We prove the claim by induction on $\gamma$,
  considering two cases:
  \begin{enumerate}
  \item if $\gamma = \gamma'+1$ is a successor then $\alpha<\gamma$
    implies $\alpha\leq\gamma'$, hence
    $\alpha\ddeby{ih}\gamma'\equalby{\eqref{eq-pred}}P_x(\gamma)$.
  \item if $\gamma$ is a limit, we claim that $\alpha<\gamma_x$, from
    which we deduce
    $\alpha\ddeby{ih}P_x(\gamma_x)\equalby{\eqref{eq-pred}}P_x(\gamma)$.
    We consider three subcases for the claim:
    \begin{enumerate}
    \item if $\gamma=\omega^\lambda$ with $\lambda$ a limit, then
      $\alpha<\gamma$ implies $\beta_1<\lambda$, hence
      $\beta_1\ddeby{ih}P_x(\lambda_x)<\lambda_x$ (since $\beta_1$ is
      $x$-lean). Thus
      $\alpha<\omega^{\lambda_x}=(\omega^\lambda)_x=\gamma_x$.
    \item if $\gamma=\omega^{\beta+1}$ then $\alpha<\gamma$ implies
      $\beta_1<\beta+1$, hence $\beta_1\leq\beta$. Now $c_1\leq x$
      since $\alpha$ is $x$-lean, hence $\alpha <
      \omega^{\beta_1}\cdot(c_1+1) \leq \omega^{\beta_1}\cdot (x+1)
      \leq \omega^{\beta}\cdot (x+1) = (\omega^{\beta+1})_x=\gamma_x$.
    \item if $\gamma=\gamma'+\omega^\beta$ with $0<\gamma',\beta$,
      then either $\alpha\leq\gamma'$, hence $\alpha <
      \gamma'+(\omega^\beta)_x = \gamma_x$, or $\alpha > \gamma'$, and
      then $\alpha$ can be written as $\alpha=\gamma'+\alpha'$ with
      $\alpha'<\omega^\beta$. In that case $\alpha' \ddeby{ih}
      P_x(\omega^\beta) \equalby{\eqref{eq-pred}}
      P_x((\omega^\beta)_x) < (\omega^\beta)_x$, hence $\alpha =
      \gamma'+\alpha' \ltby{\eqref{eq-def-ordering}}
      \gamma'+(\omega^\beta)_x \equalby{\eqref{eq-fund}}
      (\gamma'+\omega^\beta)_x = \gamma_x$.\qedhere
    \end{enumerate}
  \end{enumerate}
\end{proof}

\subsection{Ordinal Indexed Functions}
\label{ssec-ax-FGH}

Let us recall several classical hierarchies from
\citep{cichon83,cichon98}.  Let us fix a unary \emph{control function}
$h:\Nat\to\Nat$; we will see later in \autoref{sub:cont-func} how
hierarchies with different control functions can be related.

\paragraph{Inner Iteration Hierarchies.}

We define the hierarchy $(h^\alpha)_{\alpha\in\Omega}$ by
\begin{align}\tag{see \ref{eq-hardy-h}}
  h^0(x)	&\eqdef x,&
  h^{\alpha+1}(x)&\eqdef h^\alpha(h(x)),&
  h^\lambda(x)	 &\eqdef h^{\lambda_{x}}(x).
\intertext{An example of inner iteration hierarchy is the \emph{Hardy
    hierarchy} $(H^\alpha)_{\alpha\in\Omega}$ defined for $h(x)=x+1$:}
  H^0(x)	&\eqdef x,&
  H^{\alpha+1}(x)&\eqdef H^\alpha(x+1),&
  H^\lambda(x)	 &\eqdef H^{\lambda_x}(x).
\end{align}

\paragraph{Inner and Outer Iteration Hierarchies.}

Again for a unary $h$, we can define a variant
$(h_\alpha)_{\alpha\in\Omega}$ of the inner iteration
hierarchies called the \emph{length hierarchy} by \citet{cichon98} and
defined by
\begin{align}
  \tag{see \ref{eq-length-h}}
  h_0(x)	&\eqdef 0,&
  h_{\alpha+1}(x)&\eqdef 1+h_{\alpha}(h(x)),&
  h_{\lambda}(x) &\eqdef h_{\lambda_{x}}(x).
\intertext{As before, for the successor function $h(x)=x+1$, this
  yields}
  H_0(x)	&\eqdef0,&
  H_{\alpha+1}(x)&\eqdef 1+H_\alpha(x+1),&
  H_\lambda(x)	 &\eqdef H_{\lambda_x}(x).
\end{align}
Those hierarchies are the most closely related to the hierarchies of
functions we define for the length of bad sequences.

\paragraph{Fast Growing Hierarchies.}
Last of all, the \emph{fast growing hierarchy}
$(f_{\alpha})_{\alpha\in\Omega}$ is defined through
\begin{align}\tag{see \ref{eq-fast-h}}
  f_0(x)	&\eqdef h(x),&
  f_{\alpha+1}(x)&\eqdef f_{\alpha}^{\omega_x}(x),&
  f_{\lambda}	 &\eqdef f_{\lambda_{x}}(x),
\intertext{while its standard version (for $h(x)=x+1$) is defined by}
  F_0(x)	&\eqdef x+1,&
  F_{\alpha+1}(x)&\eqdef F_\alpha^{\omega_x}(x),&
  F_\lambda(x)	 &\eqdef F_{\lambda_x}(x).
\end{align}

Lemma~\ref{lem-hierarchies} and a few other properties of these
hierarchies can be proved by rather simple induction arguments.
\begin{lemma}[\autoref{lem-hierarchies}.\ref{lem-hierarchies-1}]
  For all $\alpha>0$ in $\Omega$ and $x$,
  \begin{equation*}
    h_{\alpha}(x)=1+h_{P_{x}(\alpha)}(h(x))\;.
  \end{equation*}
\end{lemma}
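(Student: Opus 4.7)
The plan is to proceed by transfinite induction on $\alpha > 0$, following the two-case structure given by the inductive definitions of both $h_\alpha$ and $P_x$. Since the right-hand side $1 + h_{P_x(\alpha)}(h(x))$ mirrors the shape of the defining clause $h_{\alpha+1}(x) = 1 + h_\alpha(h(x))$, I expect the successor case to be essentially a direct unfolding of definitions, with the limit case requiring one extra step that uses the fact that $P_x$ and the $\lambda \mapsto \lambda_x$ clause for $h_\lambda$ are defined compatibly.

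First I would handle the successor case $\alpha = \beta+1$. Here $P_x(\beta+1) = \beta$ by \eqref{eq-pred}, and $h_{\beta+1}(x) = 1 + h_\beta(h(x))$ by \eqref{eq-length-h}; substituting gives $h_\alpha(x) = 1 + h_{P_x(\alpha)}(h(x))$ with no further work. For the limit case $\alpha = \lambda$, the definition \eqref{eq-length-h} gives $h_\lambda(x) = h_{\lambda_x}(x)$, and since $\lambda_x < \lambda$ I can apply the induction hypothesis (provided $\lambda_x > 0$, see below) to obtain $h_{\lambda_x}(x) = 1 + h_{P_x(\lambda_x)}(h(x))$. Invoking the defining clause $P_x(\lambda) = P_x(\lambda_x)$ from \eqref{eq-pred} then rewrites this to $1 + h_{P_x(\lambda)}(h(x))$, completing the induction.

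The only delicate point, and the main thing to verify, is that the induction hypothesis is legitimately applicable at $\lambda_x$, which requires $\lambda_x > 0$. Inspecting the assignment \eqref{eq-fund}, whenever $\lambda = \gamma + \omega^{\beta+1}$ one has $\lambda_x = \gamma + \omega^\beta \cdot \omega_x$, which is positive as soon as $\omega_x > 0$, and whenever $\lambda = \gamma + \omega^{\lambda'}$ with $\lambda'$ a limit, $\lambda_x = \gamma + \omega^{\lambda'_x}$ is automatically positive. So under the paper's standing convention that $\omega_x \geq 1$ (e.g.\ the choice $\omega_x = x+1$ explicitly used in the main text), $\lambda_x > 0$ holds uniformly and the induction goes through without exception.
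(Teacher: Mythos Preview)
Your proof is correct and follows exactly the same transfinite induction as the paper: unfold the definitions in the successor case, and in the limit case pass through $h_{\lambda_x}(x)$, apply the induction hypothesis, then use $P_x(\lambda)=P_x(\lambda_x)$. Your explicit check that $\lambda_x>0$ (so the induction hypothesis applies) is a point the paper's proof silently takes for granted; it is a welcome bit of extra care rather than a different argument.
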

\begin{proof}
  By transfinite induction over $\alpha$.  For a successor ordinal
  $\alpha+1$,
  $h_{\alpha+1}(x)=1+h_{\alpha}(h(x))=1+h_{P_x(\alpha+1)}(h(x))$.  For
  a limit ordinal $\lambda$, $h_{\lambda}(x)=h_{\lambda_x}(x)$ is equal
  to $1+h_{P_x(\lambda_x)}(h(x))$ by ind.\ hyp.\ since
  $\lambda_x<\lambda$, which is the same as $1+h_{P_x(\lambda)}(h(x))$
  by definition.
\end{proof}
The same argument shows that for all $\alpha>0$ in $\Omega$ and $x$,
\begin{align}
  \label{eq-pred-hardy}
  h^\alpha(x)&=h^{P_x(\alpha)}(h(x))=h^{P_x(\alpha)+1}(x)\;,\\
  \label{eq-pred-fast}
  f_\alpha(x)&=f^{\omega_x}_{P_x(\alpha)}(x)=f_{P_x(\alpha)+1}(x)\;.
\end{align}

\begin{lemma}[\autoref{lem-hierarchies}.\ref{lem-hierarchies-2}]
  Let $h(x)>x$.	 Then for all $\alpha$ in $\Omega$ and $x$,
  \begin{equation*}
    h_{\alpha}(x)\leq h^\alpha(x)-x\;.
  \end{equation*}
\end{lemma}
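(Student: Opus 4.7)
The plan is to proceed by transfinite induction on $\alpha$, unpacking the definitions of $h_\alpha$ and $h^\alpha$ in each of the three cases (zero, successor, limit) and exploiting the hypothesis $h(x)>x$ only where it is actually needed.

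The base case $\alpha = 0$ is immediate, since $h_0(x) = 0$ and $h^0(x)-x = x-x = 0$. For the limit case $\alpha = \lambda$, both hierarchies are defined through the same fundamental sequence lookup: $h_\lambda(x) = h_{\lambda_x}(x)$ and $h^\lambda(x) = h^{\lambda_x}(x)$. Since $\lambda_x < \lambda$, the induction hypothesis applies and yields $h_{\lambda_x}(x) \leq h^{\lambda_x}(x) - x$, which is exactly what we need.

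The only case where the hypothesis on $h$ plays a role is the successor case $\alpha = \beta+1$. Here $h_{\beta+1}(x) = 1 + h_\beta(h(x))$, and by induction hypothesis applied at the point $h(x)$, we have $h_\beta(h(x)) \leq h^\beta(h(x)) - h(x) = h^{\beta+1}(x) - h(x)$. Therefore $h_{\beta+1}(x) \leq 1 + h^{\beta+1}(x) - h(x)$, and the desired inequality $h_{\beta+1}(x) \leq h^{\beta+1}(x) - x$ reduces to $1 + x \leq h(x)$. Since $h$ takes values in $\Nat$, the strict inequality $h(x) > x$ is equivalent to $h(x) \geq x+1$, which closes the case.

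The only subtle point is ensuring that $h^\alpha(x) - x$ is well-defined as a non-negative integer along the way, i.e., that $h^\alpha(x) \geq x$; this is itself a quick side induction (trivial at $0$, preserved at successors since $h^{\beta+1}(x) = h^\beta(h(x)) \geq h(x) \geq x$ by the assumption, and preserved at limits by the same fundamental-sequence identification as above). No real obstacle arises; the whole proof is a direct unfolding of definitions once one notices that integrality of $h$ sharpens $h(x)>x$ into $h(x)\geq x+1$, which is precisely what is required to absorb the extra $+1$ in the successor step.
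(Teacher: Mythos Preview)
Your proof is correct and follows essentially the same transfinite induction as the paper. The only organizational difference is that the paper collapses the successor and limit cases into a single step by invoking the predecessor identity $h_\alpha(x)=1+h_{P_x(\alpha)}(h(x))$ (Lemma~\ref{lem-hierarchies}.\ref{lem-hierarchies-1}) together with $h^\alpha(x)=h^{P_x(\alpha)}(h(x))$, whereas you treat successor and limit separately; your explicit remark that $h^\alpha(x)\geq x$ ensures the subtraction is meaningful is a nice point the paper leaves implicit.
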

\begin{proof}
  By induction over $\alpha$.  For $\alpha=0$,
  $h_0(x)=0=x-x=h^0(x)-x$.  For $\alpha>0$,
  \begin{align*}
    h_{\alpha}(x)
    &=1+h_{P_x(\alpha)}(h(x))\tag{by \autoref{lem-hierarchies}.\ref{lem-hierarchies-1}}\\
    &\leq 1+h^{P_x(\alpha)}(h(x))-h(x)\tag{by ind.\ hyp.\ since
      $P_x(\alpha)<\alpha$}\\
    &\leq h^{P_x(\alpha)}(h(x))-x\tag{since $h(x)>x$}\\
    &=h^\alpha(x)-x\;.\tag{by \eqref{eq-pred-hardy}}
\end{align*}
\end{proof}
Using the same argument, one can check that in particular for
$h(x)=x+1$,
\begin{equation}\label{eq-length-hardy}
  H_\alpha(x)=H^\alpha(x)-x\;.
\end{equation}

\begin{lemma}[\citep{cichon83}]\label{lem-hardy-sum}
  For all $\alpha,\gamma$ in $\Omega$, and $x$,
  \begin{equation*}
    h^{\gamma+\alpha}(x)=h^\gamma\!\left(h^\alpha(x)\right).
  \end{equation*}
\end{lemma}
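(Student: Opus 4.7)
The plan is to prove the identity by transfinite induction on $\alpha$, with $\gamma$ and $x$ held as arbitrary parameters. The base case $\alpha = 0$ is immediate: $h^{\gamma+0}(x) = h^{\gamma}(x) = h^{\gamma}(h^{0}(x))$ since $h^{0}$ is the identity by definition.

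For the successor step, suppose $\alpha = \beta + 1$. Then $\gamma + \alpha = (\gamma+\beta)+1$, so by the defining clause of $h^{\cdot}$ in \eqref{eq-hardy-h}, $h^{\gamma+\beta+1}(x) = h^{\gamma+\beta}(h(x))$. Applying the induction hypothesis at $\beta$ with the argument $h(x)$ gives $h^{\gamma+\beta}(h(x)) = h^{\gamma}(h^{\beta}(h(x))) = h^{\gamma}(h^{\beta+1}(x))$, which is the desired identity.

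The limit step is where the bookkeeping matters. Suppose $\alpha = \lambda$ is a limit. Then $\gamma + \lambda$ is itself a limit (or, if $\gamma = 0$, equals $\lambda$), and the key fact is the compatibility identity $(\gamma+\lambda)_{x} = \gamma + \lambda_{x}$ built into the standard assignment of fundamental sequences \eqref{eq-fund}. Unfolding the limit clause of \eqref{eq-hardy-h} and using this identity, I get $h^{\gamma+\lambda}(x) = h^{(\gamma+\lambda)_{x}}(x) = h^{\gamma+\lambda_{x}}(x)$. Since $\lambda_{x} < \lambda$, the induction hypothesis applies to $\lambda_{x}$ and yields $h^{\gamma+\lambda_{x}}(x) = h^{\gamma}(h^{\lambda_{x}}(x)) = h^{\gamma}(h^{\lambda}(x))$, again by the defining clause of $h^{\lambda}$.

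The only mild obstacle is ensuring that the assignment of fundamental sequences is compatible with left addition in the sense $(\gamma+\lambda)_{x} = \gamma + \lambda_{x}$; this is precisely how \eqref{eq-fund} was defined, so no additional work is required. No hypothesis on $h$ (such as smoothness) is used, so the identity holds for any control function.
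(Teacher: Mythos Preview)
Your proof is correct and follows essentially the same approach as the paper: transfinite induction on $\alpha$, with the same three cases and the same use of the compatibility $(\gamma+\lambda)_x = \gamma + \lambda_x$ from \eqref{eq-fund} in the limit step. Your additional remark that no smoothness hypothesis on $h$ is needed is also accurate.
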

\begin{proof}
  By transfinite induction on $\alpha$.	 For $\alpha=0$,
  $h^{\gamma+0}(x)=h^\gamma(x)=h^\gamma(h^0(x))$.  For a successor
  ordinal $\alpha+1$,
  $h^{\gamma+\alpha+1}(x)=h^{\gamma+\alpha}(h(x))\equalby{ih}h^\gamma\!\left(h^\alpha(h(x))\right)=h^\gamma\!\left(h^{\alpha+1}(x)\right)$.
  For a limit ordinal $\lambda$,
  $h^{\gamma+\lambda}(x)=h^{(\gamma+\lambda)_x}(x)=h^{\gamma+\lambda_x}(x)\equalby{ih}h^\gamma\!\left(h^{\lambda_x}(x)\right)=h^\gamma\!\left(h^\lambda(x)\right)$.
\end{proof}

\begin{lemma}[\autoref{lem-hierarchies}.\ref{lem-hierarchies-3}]
  For all $\beta$ in $\Omega$, $r<\omega$, and $x$,
  \begin{equation*}
    h^{\omega^\beta\cdot r}(x)=f^r_{\beta}(x)\;.
  \end{equation*}
\end{lemma}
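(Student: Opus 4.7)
The plan is to proceed by transfinite induction on $\beta$, with an inner induction on $r$ in each case. The base case $\beta=0$ is trivial: $h^{\omega^0\cdot r}(x)=h^r(x)$ and, since $f_0=h$, also $f_0^r(x)=h^r(x)$.

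For the inductive step, I first handle $r=1$. In the successor case $\beta=\beta'+1$, the fundamental-sequence rule \eqref{eq-fund} gives $(\omega^{\beta'+1})_x=\omega^{\beta'}\cdot\omega_x$, hence by the limit clause in the definition of the inner iteration hierarchy,
\[
  h^{\omega^{\beta'+1}}(x)=h^{(\omega^{\beta'+1})_x}(x)=h^{\omega^{\beta'}\cdot\omega_x}(x)\;.
\]
Applying the outer induction hypothesis on $\beta'$ (with the finite exponent $\omega_x$) yields $f_{\beta'}^{\omega_x}(x)$, which by definition of the fast growing hierarchy equals $f_{\beta'+1}(x)$. In the limit case $\beta=\lambda$, one instead uses $(\omega^\lambda)_x=\omega^{\lambda_x}$ and the outer induction hypothesis on $\lambda_x<\lambda$ (with $r=1$) to obtain $h^{\omega^\lambda}(x)=h^{\omega^{\lambda_x}}(x)=f_{\lambda_x}(x)=f_\lambda(x)$.

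For the step from $r$ to $r+1$, the key tool is \autoref{lem-hardy-sum}: writing $\omega^\beta\cdot(r+1)=\omega^\beta\cdot r+\omega^\beta$, one obtains
\[
  h^{\omega^\beta\cdot(r+1)}(x)=h^{\omega^\beta\cdot r}\bigl(h^{\omega^\beta}(x)\bigr)\;.
\]
The inner induction hypothesis on $r$ (with the same $\beta$) rewrites the outer $h^{\omega^\beta\cdot r}$ as $f_\beta^r$, while the already-established $r=1$ case rewrites the inner $h^{\omega^\beta}$ as $f_\beta$. This gives $f_\beta^r(f_\beta(x))=f_\beta^{r+1}(x)$, closing the induction.

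The only mildly delicate point is the bookkeeping of the two nested inductions: the outer induction is transfinite on $\beta\in\Omega$ while the inner one is finite on $r$, and in the successor step of $\beta$ the outer induction hypothesis is invoked at the finite (and $x$-dependent) exponent $\omega_x$. Once the use of \autoref{lem-hardy-sum} to peel off one block $\omega^\beta$ at a time is in place, no further obstacle remains.
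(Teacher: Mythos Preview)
Your proof is correct and follows essentially the same approach as the paper: transfinite induction on $\beta$ for the $r=1$ case, with \autoref{lem-hardy-sum} handling the extension to arbitrary $r$. The only difference is organizational---the paper factors the reduction to $r=1$ out front (observing once that $h^{\omega^\beta\cdot r}=(h^{\omega^\beta})^r$ by \autoref{lem-hardy-sum}), whereas you nest an inner induction on $r$ inside each $\beta$ case; the underlying computations are identical.
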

\begin{proof}
In view of \autoref{lem-hardy-sum} and $h^0=f^0=\textit{Id}_\Nat$, it
is enough to prove $h^{\omega^\beta}=f_\beta$, i.e., the $r=1$ case.
We proceed by induction over $\beta$.
\begin{description}
\item[{\boldmath For the base case}.] $h^{\omega^0}(x)=h^1(x)\equalby{\eqref{eq-fast-h}}f_0(x)$.
\item[{\boldmath For a successor $\beta+1$}.] $h^{\omega^{\beta+1}}(x) \equalby{\eqref{eq-hardy-h}}
h^{(\omega^{\beta+1})_x}(x) = h^{\omega^\beta\cdot\omega_x}(x)
\equalby{ih} f_\beta^{\omega_x}(x)
\equalby{\eqref{eq-fast-h}}f_{\beta+1}(x)$.
\item[{\boldmath For a limit $\lambda$}.] $h^{\omega^\lambda}(x) \equalby{\eqref{eq-hardy-h}}
h^{\omega^{\lambda_x}}(x) \equalby{ih} f_{\lambda_x}(x)
\equalby{\eqref{eq-fast-h}}f_{\lambda}(x)$.\qedhere
\end{description}
\end{proof}

\subsection{Pointwise Ordering and Monotonicity}
\label{sub-pointwise}
We set to prove in this section the two equations \eqref{eq-mono-hardy}
and~\eqref{eq-dd-hardy} stated in
\citep[\theoremautorefname~2]{cichon98}.

\begin{lemma}[Equations~\eqref{eq-mono-hardy} and~\eqref{eq-dd-hardy}]
  Let $h$ be a monotone function with $h(x)\geq x$.  Then, for all
  $\alpha,\alpha'$ in $\Omega$ and $x,y$ in $\Nat$,
  \begin{align*}
    \tag{see \ref{eq-mono-hardy}}
    x<y&\text{ implies }h_\alpha(x)\leq h_\alpha(y)\;,\\
    \tag{see \ref{eq-dd-hardy}}
    \alpha'\dd[x]\alpha&\text{ implies }h_{\alpha'}(x)\leq
    h_{\alpha}(x)\;.
  \end{align*}
\end{lemma}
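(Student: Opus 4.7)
The plan is to prove the two inequalities by simultaneous transfinite induction on $\alpha$, since the proof of monotonicity in $x$ at a limit ordinal will require the pointwise ordering property applied to a strictly smaller ordinal, so the two statements cannot be decoupled. The base case $\alpha=0$ is trivial: $h_0\equiv 0$ is constant, and no $\alpha'$ satisfies $\alpha'\dd 0$.

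For the successor case $\alpha=\beta+1$, monotonicity in $x$ follows immediately from the defining equation $h_{\beta+1}(x)=1+h_\beta(h(x))$, the monotonicity of $h$, and the induction hypothesis applied to $\beta$. For the pointwise inequality, I would use Eq.~\eqref{eq-dd-inductive} to reduce $\alpha'\dd[x]\beta+1$ to $\alpha'\dde[x]\beta$, and then split into two subcases: if $\alpha'=\beta$ then $h_\beta(x)\leq h_\beta(h(x))\leq 1+h_\beta(h(x))=h_{\beta+1}(x)$ by monotonicity of $h_\beta$ (just proved) together with $h(x)\geq x$; if $\alpha'\dd[x]\beta$ then $h_{\alpha'}(x)\leq h_\beta(x)$ by the induction hypothesis, and one concludes with the previous subcase.

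For the limit case $\alpha=\lambda$, the pointwise inequality is again easy: $\alpha'\dd[x]\lambda$ gives $\alpha'\dde[x]\lambda_x$ by Eq.~\eqref{eq-dd-inductive}, and since $\lambda_x<\lambda$ the induction hypothesis yields $h_{\alpha'}(x)\leq h_{\lambda_x}(x)=h_\lambda(x)$. The hard part, and the main obstacle, is monotonicity in $x$ at a limit: for $x<y$ we must show $h_{\lambda_x}(x)\leq h_{\lambda_y}(y)$, but now the subscripts differ. My plan is to interpolate via $h_{\lambda_x}(y)$. The induction hypothesis on $\lambda_x<\lambda$ gives the monotonicity step $h_{\lambda_x}(x)\leq h_{\lambda_x}(y)$, while \autoref{lem-dd-xy} supplies $\lambda_x\dd[y]\lambda_y$, and then the pointwise property for $\lambda_y<\lambda$ (already obtained from the induction hypothesis) yields $h_{\lambda_x}(y)\leq h_{\lambda_y}(y)$; chaining these two inequalities finishes the case. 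This is exactly where the entanglement between the two statements is essential, which is why the simultaneous induction is the right setup.
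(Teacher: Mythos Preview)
Your proposal is correct and follows essentially the same route as the paper: a simultaneous transfinite induction on $\alpha$, with the limit case for monotonicity interpolating via $h_{\lambda_x}(y)$ and appealing to \autoref{lem-dd-xy} to get $\lambda_x\dd[y]\lambda_y$, exactly as the paper does. The only cosmetic difference is that the paper disposes of the trivial case $\alpha'=0$ once and for all before starting the induction, whereas you absorb it into the case split on $\alpha'\dde[x]\beta$; both work.
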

\begin{proof}
  Let us first deal with $\alpha'=0$ for \eqref{eq-dd-hardy}.  Then
  $h_0(x)=0\leq h_\alpha(x)$ for all $\alpha$ and $x$.

  Assuming $\alpha'>0$, the proof now proceeds by simultaneous
  transfinite induction over $\alpha$.
  \begin{description}
  \item[{\boldmath For $0$}.]  Then $h_0(x)=0=h_0(y)$ and
    $\alpha'\dd\alpha$ is impossible.
  \item[{\boldmath For a successor $\alpha+1$}.] For~\eqref{eq-mono-hardy},
    $h_{\alpha+1}(x)=1+h_\alpha(h(x))\leqby{ih\eqref{eq-mono-hardy}}
    1+h_\alpha(h(y))=h_{\alpha+1}(y)$ where the
    ind.\ hyp.\ on~\eqref{eq-mono-hardy} can be applied since $h$ is
    monotone.

    For~\eqref{eq-dd-hardy}, we have
    $\alpha'\preccurlyeq_x\alpha\dd\alpha+1$, hence
$h_{\alpha'}(x)\leqby{ih\eqref{eq-dd-hardy}}h_\alpha(x)
\leqby{ih\eqref{eq-mono-hardy}}h_\alpha(h(x))
\ltby{\eqref{eq-length-h}}h_{\alpha+1}(x)$
    where the ind.\ hyp.\ on~\eqref{eq-mono-hardy} can be applied
    since $h(x)\geq x$.
  \item[{\boldmath For a limit $\lambda$}.]  For~\eqref{eq-mono-hardy},
    $h_\lambda(x)=h_{\lambda_x}(x)\leqby{ih\eqref{eq-mono-hardy}}h_{\lambda_x}(y)\leqby{ih\eqref{eq-dd-hardy}}h_{\lambda_y}(y)=h_\lambda(y)$
    where the ind.\ hyp.\ on~\eqref{eq-dd-hardy} can be applied since
    $\lambda_x\dd[y]\lambda_y$ by
    \autoref{lem-dd-xy}.

    For~\eqref{eq-dd-hardy}, we have
    $\alpha'\preccurlyeq_x\lambda_x\dd\lambda$ with
    $h_{\alpha'}(x)\leqby{ih\eqref{eq-dd-hardy}}h_{\lambda_x}(x)=h_\lambda(x)$.
    \qedhere
  \end{description}
\end{proof}
Essentially the same proof can be carried out to prove the same
monotonicity properties for $h^\alpha$ and $f_\alpha$.	As the
monotonicity properties of $f_\alpha$ will be handy in the remainder
of the section, we prove them now:
\begin{lemma}[\citep{fast}]\label{lem-fast-mono}
  Let $h$ be a function with $h(x)\geq x$.  Then, for all
  $\alpha,\alpha'$ in $\Omega$, $x,y$ in $\Nat$ with $\omega_x>0$,
  \begin{gather}
    \label{eq-fast-mn}
    f_\alpha(x)\geq h(x)\geq x\;.\\
    \label{eq-fast-dd}
    \alpha'\dd\alpha\text{ implies }f_{\alpha'}(x)\leq
    f_{\alpha}(x)\;,\\
    \label{eq-fast-xy}
    x<y\text{ and }h\text{ monotone  imply }f_\alpha(x)\leq f_\alpha(y)\;.
  \end{gather}
\end{lemma}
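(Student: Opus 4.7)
The three properties are proved by a single simultaneous transfinite induction on $\alpha$, using only the recursive definition \eqref{eq-fast-h}, the characterization \eqref{eq-dd-inductive} of the pointwise ordering, and \autoref{lem-dd-xy} (which needs $\omega_x>0$).

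\emph{Base case $\alpha=0$.} Here $f_0=h$, so \eqref{eq-fast-mn} is the hypothesis $h(x)\geq x$, \eqref{eq-fast-xy} is the monotonicity hypothesis on $h$, and \eqref{eq-fast-dd} is vacuous since no $\alpha'$ satisfies $\alpha'\dd 0$.

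\emph{Successor case $\alpha=\beta+1$.} For \eqref{eq-fast-mn}, from $\omega_x\geq 1$ and the IH \eqref{eq-fast-mn} applied repeatedly, a straightforward subsidiary induction on $k\leq\omega_x$ gives $f_\beta^{k}(x)\geq h(x)\geq x$, hence $f_{\beta+1}(x)\geq h(x)$. For \eqref{eq-fast-dd}, $\alpha'\dd\beta+1$ is equivalent by \eqref{eq-dd-inductive} to $\alpha'\dde\beta$, so either $\alpha'=\beta$ or the IH \eqref{eq-fast-dd} gives $f_{\alpha'}(x)\leq f_\beta(x)$; combining this with the subsidiary monotonicity $f_\beta(z)\leq f_\beta(f_\beta(z))$ (which holds by IH \eqref{eq-fast-xy} on $\beta$ together with $f_\beta(z)\geq z$) yields $f_{\alpha'}(x)\leq f_\beta^{\omega_x}(x)=f_{\beta+1}(x)$. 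For \eqref{eq-fast-xy}, assuming $x<y$, the monotonicity of $s$ (which defines $\omega_x$) gives $\omega_x\leq\omega_y$; then
\begin{equation*}
f_{\beta+1}(x)=f_\beta^{\omega_x}(x)\leq f_\beta^{\omega_x}(y)\leq f_\beta^{\omega_y}(y)=f_{\beta+1}(y)\,,
\end{equation*}
where the first inequality iterates IH \eqref{eq-fast-xy} at $\beta$ and the second iterates IH \eqref{eq-fast-mn} at $\beta$ to add $\omega_y-\omega_x$ further applications of $f_\beta$ on top of a value already $\geq y$.

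\emph{Limit case $\alpha=\lambda$.} By definition $f_\lambda(x)=f_{\lambda_x}(x)$, and $\lambda_x<\lambda$, so \eqref{eq-fast-mn} follows directly from IH \eqref{eq-fast-mn} at $\lambda_x$. For \eqref{eq-fast-dd}, $\alpha'\dd\lambda$ gives $\alpha'\dde\lambda_x$ by \eqref{eq-dd-inductive}, and IH \eqref{eq-fast-dd} at $\lambda_x$ (or trivial equality) yields $f_{\alpha'}(x)\leq f_{\lambda_x}(x)=f_\lambda(x)$. For \eqref{eq-fast-xy} with $x<y$, first apply IH \eqref{eq-fast-xy} at $\lambda_x$ to get $f_{\lambda_x}(x)\leq f_{\lambda_x}(y)$; then \autoref{lem-dd-xy} provides $\lambda_x\dd[y]\lambda_y$ (this is where the hypothesis $\omega_x>0$ is needed), so IH \eqref{eq-fast-dd} at $\lambda_y<\lambda$ gives $f_{\lambda_x}(y)\leq f_{\lambda_y}(y)=f_\lambda(y)$.

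\emph{Main obstacle.} The delicate point is the tight interdependence of the three statements in the successor and limit steps: proving \eqref{eq-fast-xy} at $\alpha$ requires \eqref{eq-fast-dd} at an ordinal that is itself strictly below $\alpha$ but whose identity depends on the argument $y$, and the successor case of \eqref{eq-fast-dd} and \eqref{eq-fast-mn} needs both the pointwise-argument monotonicity and the ``inflationary'' property at the predecessor; these are precisely the points where the hypothesis $\omega_x>0$ (to guarantee at least one application of $f_\beta$ and to justify $\lambda_x\dd[y]\lambda_y$) enters the argument.
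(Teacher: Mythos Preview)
Your proof is correct and follows essentially the same route as the paper: transfinite induction on $\alpha$, with the same case analysis and the same appeals to \eqref{eq-dd-inductive} and \autoref{lem-dd-xy}. The paper organizes it as three separate inductions (first \eqref{eq-fast-mn}, then \eqref{eq-fast-dd} using only \eqref{eq-fast-mn}, then \eqref{eq-fast-xy} using both), while you do a single simultaneous induction; this is a purely cosmetic difference.

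Two small remarks. In the successor case of \eqref{eq-fast-dd} you justify $f_\beta(z)\leq f_\beta(f_\beta(z))$ via IH~\eqref{eq-fast-xy}, which drags in the extra hypothesis that $h$ is monotone; the paper (and you could too) gets this directly from IH~\eqref{eq-fast-mn} applied at the argument $f_\beta(z)$, so \eqref{eq-fast-dd} holds without assuming monotonicity of $h$. Also, your parenthetical that $\omega_x>0$ is needed to invoke \autoref{lem-dd-xy} is slightly off: the conclusion $\lambda_x\dd[y]\lambda_y$ you use there holds unconditionally; the hypothesis $\omega_x>0$ is really needed in the successor case (to ensure $\omega_x\geq 1$ so that at least one $f_\beta$ is applied) and to keep the hypothesis available when you apply the IH at larger arguments.
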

\begin{proof}[Proof of \eqref{eq-fast-mn}]
  By transfinite induction on $\alpha$.	 For the base case,
  $f_0(x)=h(x)\geq x$ by hypothesis.  For the successor case, assuming
  $f_{\alpha}(x)\geq h(x)$, then by induction on $n>0$,
  $f_\alpha^n(x)\geq h(x)$: for $n=1$ it holds since $f_\alpha(x)\geq
  h(x)$, and for $n+1$ since
  $f^{n+1}_\alpha(x)=f_\alpha(f^n_\alpha(x))\geq f_\alpha(x)$ by ind.\
  hyp.\ on $n$.	 Therefore $f_{\alpha+1}(x)=f^{\omega_x}_\alpha(x)\geq x$ since
  $\omega_x>0$.  Finally, for the limit case,
  $f_\lambda(x)=f_{\lambda_x}(x)\geq x$ by ind.\ hyp.
\end{proof}
\begin{proof}[Proof of \eqref{eq-fast-dd}]
  Let us first deal with $\alpha'=0$.  Then $f_0(x)=h(x)\leq
  f_\alpha(x)$ for all $x>0$ and all $\alpha$ by Eq.~\eqref{eq-fast-mn}.

  Assuming $\alpha'>0$, the proof proceeds by transfinite induction
  over $\alpha$.  The case $\alpha=0$ is impossible.  For the
  successor case, $\alpha'\preccurlyeq_x\alpha\dd\alpha+1$ with
    $f_{\alpha+1}(x)=f^{\omega_x-1}_\alpha\!\left(f_\alpha(x)\right)\geqby{\eqref{eq-fast-mn}}f_\alpha(x)\geqby{ih}f_{\alpha'}(x)$.
    For the limit case, we have
    $\alpha'\preccurlyeq_x\lambda_x\dd\lambda$ with $f_{\alpha'}(x)\leqby{ih}f_{\lambda_x}(x)=f_{\lambda}(x)$.
\end{proof}
\begin{proof}[Proof of \eqref{eq-fast-xy}]
  By transfinite induction over $\alpha$.  For the base case,
  $f_0(x)=h(x)\leq h(y)=f_0(y)$ since $h$ is monotone.	For the
  successor case,
  $f_{\alpha+1}(x)=f^{\omega_x}_\alpha(x)\leqby{\eqref{eq-fast-mn}}f^{\omega_y}_\alpha(x)\leqby{ih}f^{\omega_y}_\alpha(y)=f_{\alpha+1}(y)$
  using $\omega_x\leq\omega_y$.
  For the limit case,
  $f_{\lambda}(x)=f_{\lambda_x}(x)\leqby{ih}f_{\lambda_x}(y)\leqby{\eqref{eq-fast-dd}}f_{\lambda_y}(y)=f_\lambda(y)$,
  where \eqref{eq-fast-dd} can be applied thanks to \autoref{lem-dd-xy}.
\end{proof}

\subsection{Relating Different Assignments of Fundamental Sequences}
\label{sub:fund-seq}
The way we employ ordinal-indexed hierarchies is as \emph{standard}
ways of classifying the growth of functions, allowing to derive
meaningful complexity bounds for algorithms relying on wqo's for
termination.  It is therefore quite important to use a standard
assignment of fundamental sequences in order to be able to compare
results from different sources.  The definition provided in
\eqref{eq-fund} is standard, and the choices $\omega_x=x$ and
$\omega_x=x+1$ can be deemed as ``equally standard'' in the
literature.  We employed $\omega_x=x+1$ in the main text, but the
reader might desire to compare this to bounds using $\omega_x=x$.

A bit of extra notation is needed: we want to compare the length
hierarchies $(h_{s,\alpha})_{\alpha\in\Omega}$ for different choices
of $s$.  Recall that $s$ is assumed to be monotone with $s(x)\geq x$,
which is fulfilled by the identity function $\mathit{id}$.
\begin{lemma}\label{lem:fund-seq}
  Let $\alpha$ in $\Omega$.  If $s(h(x))\leq h(s(x))$ for all $x$,
  then $h_{s,\alpha}(x)\leq h_{\mathit{id},\alpha}(s(x))$ for all $x$.
\end{lemma}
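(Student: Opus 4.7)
The plan is to prove the lemma by transfinite induction on $\alpha$, with a preliminary observation that reconciles the two assignments of fundamental sequences. The key identity is that, for any limit ordinal $\lambda$ in $\Omega$ and any $x\in\Nat$,
\begin{equation*}
  \lambda^{s}_{x}=\lambda^{\mathit{id}}_{s(x)}\;,
\end{equation*}
i.e.\ evaluating the $s$-indexed fundamental sequence at $x$ produces the same ordinal as evaluating the $\mathit{id}$-indexed sequence at the shifted point $s(x)$. I would prove this by induction on the structure of $\lambda$ using \eqref{eq-fund}: when $\lambda=\gamma+\omega^{\beta+1}$, both sides evaluate directly to $\gamma+\omega^{\beta}\cdot s(x)$; when $\lambda=\gamma+\omega^{\lambda'}$ with $\lambda'$ a limit, the two sides are $\gamma+\omega^{\lambda'^{s}_{x}}$ and $\gamma+\omega^{\lambda'^{\mathit{id}}_{s(x)}}$, which agree by the inductive hypothesis applied to $\lambda'$.

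Equipped with this identity, I would carry out the main transfinite induction on $\alpha$. The base case $\alpha=0$ is immediate since $h_{s,0}(x)=0=h_{\mathit{id},0}(s(x))$. For a successor ordinal $\alpha=\beta+1$, unfolding the definition gives $h_{s,\beta+1}(x)=1+h_{s,\beta}(h(x))$; the inductive hypothesis then yields $h_{s,\beta}(h(x))\leq h_{\mathit{id},\beta}(s(h(x)))$, and the assumption $s(h(x))\leq h(s(x))$ combined with the (weak) monotonicity of $h_{\mathit{id},\beta}$ from \eqref{eq-mono-hardy} gives $h_{\mathit{id},\beta}(s(h(x)))\leq h_{\mathit{id},\beta}(h(s(x)))$, so we conclude $h_{s,\beta+1}(x)\leq 1+h_{\mathit{id},\beta}(h(s(x)))=h_{\mathit{id},\beta+1}(s(x))$. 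For a limit $\lambda$, we have $h_{s,\lambda}(x)=h_{s,\lambda^{s}_{x}}(x)$; since $\lambda^{s}_{x}<\lambda$, the inductive hypothesis applies to give $h_{s,\lambda^{s}_{x}}(x)\leq h_{\mathit{id},\lambda^{s}_{x}}(s(x))$, which by the preliminary identity equals $h_{\mathit{id},\lambda^{\mathit{id}}_{s(x)}}(s(x))=h_{\mathit{id},\lambda}(s(x))$.

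The main obstacle is the limit case: the two hierarchies $h_{s}$ and $h_{\mathit{id}}$ diagonalize over limit ordinals through different fundamental sequences, so the transfinite recursion on the left-hand side descends via $s$-sequences while the right-hand side descends via $\mathit{id}$-sequences, and one must show that these two descents remain comparable. The preliminary identity $\lambda^{s}_{x}=\lambda^{\mathit{id}}_{s(x)}$ provides precisely the alignment required, ensuring that both recursive descents probe \emph{the same} subordinal, after which the main inductive hypothesis handles the rest with nothing more than one step of monotonicity. Throughout, I am tacitly using that the control function $h$ is monotone with $h(x)\geq x$, as is standard for the smooth control functions considered in the paper and as is needed to invoke \eqref{eq-mono-hardy}.
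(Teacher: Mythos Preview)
Your proof is correct and follows the same transfinite induction as the paper, with identical base and successor cases. The only difference is in the limit step: you first establish the identity $\lambda^{s}_{x}=\lambda^{\mathit{id}}_{s(x)}$ and use it to obtain an \emph{equality} $h_{\mathit{id},\lambda^{s}_{x}}(s(x))=h_{\mathit{id},\lambda^{\mathit{id}}_{s(x)}}(s(x))$, whereas the paper instead appeals to the pointwise-ordering machinery, invoking \eqref{eq-dd-hardy} together with \autoref{lem-dd-xy} (using $s(x)\geq x$) to get an inequality at that same point. Your identity is in fact what makes the paper's step go through once the notational ambiguity between the two assignments is resolved, so your route is slightly more elementary and self-contained here; the paper's route, on the other hand, reuses lemmas already available and avoids the separate structural induction on $\lambda$.
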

\begin{proof}
  By induction on $\alpha$.  For $0$, $h_{s,0}(x)=0=h_{\mathit{id},0}(s(x))$.
  For a successor ordinal $\alpha+1$,
  $h_{s,\alpha+1}(x)=1+h_{s,\alpha}(h(x))\leqby{ih}1+h_{\mathit{id},\alpha}(s(h(x)))\leqby{\eqref{eq-mono-hardy}}1+h_{\mathit{id},\alpha}(h(s(x)))=h_{\mathit{id},\alpha+1}(s(x))$
  since $s(h(x))\leq h(s(x))$.  For a limit ordinal $\lambda$,
  $h_{s,\lambda}(x)=h_{s,\lambda_x}(x)\leqby{ih}h_{\mathit{id},\lambda_x}(s(x))\leqby{\eqref{eq-dd-hardy}}h_{\mathit{id},\lambda_{s(x)}}(s(x))=h_{\mathit{id},\lambda}(s(x))$
  where $s(x)\geq x$ implies $\lambda_x\dd[s(x)]\lambda_{s(x)}$ by
  \autoref{lem-dd-xy} and allows to apply \eqref{eq-dd-hardy}.
\end{proof}
In particular, for a smooth $h$ and $s(x)=x+1$, $h(x)+1\leq h(x+1)$
and we can apply \autoref{lem:fund-seq} together with
\autoref{prop-hardy} to get a uniform bound using the standard
assignment with $\omega_x=x$ instead of $\omega_x=x+1$: for all
$\alpha$ in $\CNF[\omega^{\omega^\omega}]$ and $n>0$,
\begin{equation}\label{eq:prop-hardy-id}
  M_{\alpha,g}(n)\leq h_{\alpha}(kn+1)
\end{equation}
where $k$ is the leanness of $\alpha$ and $h(x)=x\cdot g(x)$.

\subsection{Relating Different Control Functions}
\label{sub:cont-func}
As in \autoref{sub:fund-seq}, if we are to obtain bounds in terms of a
\emph{standard} hierarchy of functions, we ought to provide bounds for
$h(x)=x+1$ as control.
We are now in position to prove a statement of \citet{cichon83}:
\begin{lemma}[\autoref{lem-hierarchies}.\ref{lem-hierarchies-4}]\label{lem-eb}
  For all $\gamma$ and $\alpha$ in $\Omega$, if $h$ is monotone
  eventually bounded by $F_\gamma$, then $f_\alpha$ is eventually
  bounded by $F_{\gamma+\alpha}$.
\end{lemma}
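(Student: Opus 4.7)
I would argue by transfinite induction on $\alpha$ in CNF, proving the slightly strengthened statement that for each $\alpha$ there is a threshold $N_\alpha\in\Nat$ with $f_\alpha(x)\leq F_{\gamma+\alpha}(x)$ for all $x\geq N_\alpha$. The base case $\alpha=0$ is immediate: since $f_0=h$ is eventually bounded by $F_\gamma=F_{\gamma+0}$ by hypothesis, any witness for this eventuality serves as $N_0$.

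For the successor case $\alpha=\beta+1$, the induction hypothesis supplies $N_\beta$ with $f_\beta(x)\leq F_{\gamma+\beta}(x)$ for $x\geq N_\beta$. Using the monotonicity of $f_\beta$ and of $F_{\gamma+\beta}$ together with the identity-dominance property $f_\beta(y)\geq y$ recorded in \autoref{lem-fast-mono} (and the analogous facts for $F_{\gamma+\beta}$), a secondary induction on $k$ yields $f_\beta^k(x)\leq F_{\gamma+\beta}^k(x)$ for every $k\geq 1$ and $x\geq N_\beta$: the key point is that each iterate $f_\beta^{k-1}(x)$ remains $\geq N_\beta$, so the induction hypothesis can be re-applied, and monotonicity lets us compare $f_\beta(f_\beta^{k-1}(x))\leq f_\beta(F_{\gamma+\beta}^{k-1}(x))\leq F_{\gamma+\beta}^k(x)$. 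Instantiating $k=\omega_x$ concludes the step with $N_{\beta+1}\eqdef N_\beta$.

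For the limit case $\alpha=\lambda$, I would unfold $f_\lambda(x)=f_{\lambda_x}(x)$ and use $(\gamma+\lambda)_x=\gamma+\lambda_x$ from \eqref{eq-fund} to rewrite $F_{\gamma+\lambda}(x)=F_{\gamma+\lambda_x}(x)$. Applying the induction hypothesis at the ordinal $\lambda_x<\lambda$ gives $f_{\lambda_x}(x)\leq F_{\gamma+\lambda_x}(x)$ whenever $x\geq N_{\lambda_x}$, so it is enough to choose $N_\lambda$ so that $N_{\lambda_{x'}}\leq x'$ for every $x'\geq N_\lambda$.

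Producing such an $N_\lambda$ is the main obstacle, since a bare ordinal induction only yields some $N_\beta$ for each $\beta<\lambda$ with no uniform control on the dependence $\beta\mapsto N_\beta$ along the fundamental sequence $(\lambda_x)_x$. The intended fix is to produce the $N_\alpha$'s constructively through the induction, so that their dependence on $\alpha$ is explicit, and then to exploit the fact that along the fundamental sequence the ``next'' $\lambda_x$ grows only with $\omega_x$ while $F_{\gamma+\lambda_x}$ grows much faster, forcing $N_{\lambda_x}\leq x$ from some point on. Concretely, I would exhibit a monotone assignment $\beta\mapsto N_\beta$ arising from the recipe above (setting $N_\lambda$ to be the least $x_0$ such that $N_{\lambda_{x'}}\leq x'$ for all $x'\geq x_0$) and verify its well-definedness by a nested transfinite induction using the Bachmann-style monotonicity of $(\cdot)_x$ (see \autoref{lem-dd-xy}), so that the thresholds along $(\lambda_x)_x$ are eventually dominated by the running input $x$.
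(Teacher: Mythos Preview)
Your induction scheme is the same as the paper's, and the base and successor cases are handled identically. The difference is entirely in the limit case, where you have manufactured an obstacle that does not exist.

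Notice what your own argument already shows: $N_{\beta+1}=N_\beta$. If you strengthen the induction hypothesis to ``$f_\alpha(x)\leq F_{\gamma+\alpha}(x)$ for all $x\geq N_0$'' (with the \emph{same} threshold $N_0$ throughout), the limit case becomes a one-liner: for $x\geq N_0$ one has $f_\lambda(x)=f_{\lambda_x}(x)\leq F_{\gamma+\lambda_x}(x)=F_{(\gamma+\lambda)_x}(x)=F_{\gamma+\lambda}(x)$, using the induction hypothesis at $\lambda_x<\lambda$ directly. No tracking of a map $\beta\mapsto N_\beta$, no nested induction, no appeal to Bachmann-style monotonicity is needed. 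In fact, your recursive recipe $N_\lambda\eqdef\min\{x_0:\forall x'\geq x_0,\,N_{\lambda_{x'}}\leq x'\}$ would, once you prove it well-defined, yield $N_\alpha=N_0$ for every $\alpha$ anyway---so the elaborate machinery of your last paragraph collapses to the trivial observation it was meant to replace.

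This is exactly how the paper proceeds: it fixes $x_0$ once (the witness for $h\leq F_\gamma$ eventually) and then proves by transfinite induction on $\alpha$ that $f_\alpha(x)\leq F_{\gamma+\alpha}(x)$ for all $x\geq x_0$, with the threshold never moving.
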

\begin{proof}
  By hypothesis, there exists $x_0$ (which we can assume wlog.\
  verifies $x_0>0$) s.t.\ for all $x\geq x_0$,
  $h(x)\leq F_\gamma(x)$.  We keep this $x_0$ constant and show by
  transfinite induction on $\alpha$ that for all $x\geq x_0$,
  $f_\alpha(x)\leq F_{\gamma+\alpha}(x)$, which proves the lemma.
  Note that $\omega_x\geq x\geq x_0>0$ and thus that we can apply
  \autoref{lem-fast-mono}.
  \begin{description}
  \item[{\boldmath For the base case $0$}] for all $x\geq x_0$,
    $f_0(x)=h(x)\leq F_\gamma(x)$ by hypothesis.
  \item[{\boldmath For a successor ordinal $\alpha+1$}] we first prove
    that for all $n$ and all $x\geq x_0$,
    \begin{align}
      f^n_\alpha(x)&\leq F_{\gamma+\alpha}^n(x)\;.\label{eq-lem-4-ind-n}\\
      \shortintertext{Indeed, by induction on $n$, for all $x\geq
	x_0$,}
      f_\alpha^0(x)&=x=F^0_{\gamma+\alpha}(x)\notag\\
      f^{n+1}_\alpha(x)&=f_\alpha\!\left(f^n_\alpha(x)\right)\notag\\
      &\leq\tag{by \eqref{eq-fast-xy} on $f_\alpha$ and the ind.\ hyp.\ on $n$}f_\alpha\!\left(F^n_{\gamma+\alpha}(x)\right)\\
      &\leq\tag{since by \eqref{eq-fast-mn} $F_{\gamma+\alpha}(x)\geq
	x\geq x_0$ and by ind.\ hyp.\ on
	$\alpha$}F_{\gamma+\alpha}\!\left(F_{\gamma+\alpha}^n(x)\right)\\
      &=F^{n+1}_{\gamma+\alpha}(x)\;.\notag \shortintertext{Therefore}
      f_{\alpha+1}(x)&=f^x_\alpha(x)\notag\\
      &\leq\tag{by \eqref{eq-lem-4-ind-n} for
	$n=x$}F^x_{\gamma+\alpha}(x)\\
      &=F_{\gamma+\alpha+1}(x)\;.\notag
    \end{align}
  \item[{\boldmath For a limit ordinal $\lambda$}] for all $x\geq
    x_0$,
    $f_\lambda(x)=f_{\lambda_x}(x)\leqby{ih}F_{\gamma+\lambda_x}(x)=F_{(\gamma+\lambda)_x}(x)=F_{\gamma+\lambda}(x)$.\qedhere
  \end{description}
\end{proof}

\begin{remark}
  Observe that the statement of \autoref{lem-eb} is one of the few
  instances in this appendix where ordinal term notations matter.
  Indeed, nothing forces $\gamma+\alpha$ to be an ordinal term in CNF.
  Note that, with the exception of \autoref{lem-Px-dd}, all the
  definitions and proofs given in this appendix are compatible with
  arbitrary ordinal terms in $\Omega$, and not just terms in CNF, so
  this is not a formal issue.

  The issue lies in the intuitive understanding the reader might have
  of a term ``$\gamma+\alpha$'', by interpreting $+$ as the direct sum
  in ordinal arithmetic.  \textbf{This would be a mistake:} in a
  situation where two different terms $\alpha$ and $\alpha'$ denote
  the same ordinal $\mathop{ord}(\alpha)=\mathop{ord}(\alpha')$, we do
  not necessarily have $F_{\alpha}(x)=F_{\alpha'}(x)$: for instance,
  $\alpha=\omega^{\omega^0}$ and $\alpha'=\omega^0+\omega^{\omega^0}$
  denote the same ordinal $\omega$, but $F_\alpha(2)=F_2(2)=2^2\cdot
  2=2^3$ and $F_{\alpha'}(2)=F_3(2)=2^{2^2\cdot 2}\cdot 2^2\cdot
  2=2^{11}$.  The reader is therefore kindly warned that the results
  on ordinal-indexed hierarchies in this appendix should be understood
  \emph{syntactically} on ordinal terms, and not semantically on their
  ordinal denotations.
\end{remark}

The natural question at this point is: how do these new fast growing
functions compare to the functions indexed by terms in CNF?  Indeed,
we should check that e.g.\ $F_{\gamma+{\omega^p}}$ with
$\gamma<\omega^\omega$ is multiply-recursive if our results are to be
of any use.  The most interesting case is the one where $\gamma$ is
finite but $\alpha$ infinite (which is used in the proof of
\autoref{theo-main}):
\begin{lemma}\label{lem-prim-rec}
  Let $\alpha\geq\omega$ and $0<\gamma<\omega$ be in $\CNF$, and
  $\omega_x\eqdef x$.  Then, for all $x$, $F_{\gamma+\alpha}(x)\leq
  F_{\alpha}(x+\gamma)$.
\end{lemma}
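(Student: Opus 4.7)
The plan is to prove by transfinite induction on $\alpha \geq \omega$ (in CNF) the stronger statement
\begin{equation*}
  F_{\gamma+\alpha}(x) + \gamma \leq F_\alpha(x+\gamma) \qquad\text{for all } x\in\Nat,
\end{equation*}
which immediately implies the lemma. The additive $+\gamma$ buffer is what makes the successor step propagate; without it, the single-step bound $F_{\gamma+\beta}(y) \leq F_\beta(y+\gamma)$ cannot be iterated.

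The case $x=0$ is handled uniformly: since $\gamma \geq 1$, the index $\gamma+\alpha$ stays positive throughout the recursive unfolding of $F_{\gamma+\alpha}(0)$ (recall $\omega_0=0$), so a successor step $F_{\delta+1}(0)=F_\delta^0(0)=0$ is eventually encountered, giving $F_{\gamma+\alpha}(0)=0$; it remains to check $\gamma \leq F_\alpha(\gamma)$, which is immediate from $F_\alpha(\gamma) \geq \gamma+1$ by \eqref{eq-fast-mn}. For $x\geq 1$ and $\alpha=\omega$, one computes $(\gamma+\omega)_x = \gamma+x$, which equals $x+\gamma$ as a natural number, so $F_{\gamma+\omega}(x)=F_{x+\gamma}(x)$; the standard property $F_m(z+1) \geq F_m(z)+1$ for $m\geq 1$ (immediate from \eqref{eq-fast-mn} combined with the monotonicity of $F_{m-1}$) iterates to the required gap $F_{x+\gamma}(x+\gamma) \geq F_{x+\gamma}(x) + \gamma$. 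For $x\geq 1$ and $\alpha=\lambda>\omega$ a limit, an inspection of the CNF of $\lambda$ shows $\lambda_x \geq \omega$: either $\lambda$ has several CNF summands (the leading one, already $\geq \omega$, survives in $\lambda_x$), or $\lambda=\omega^\beta$ is principal with $\beta\geq 2$ (and then $(\omega^\beta)_x$ exhibits a factor $\geq \omega$ for $x\geq 1$). The induction hypothesis applied to $\lambda_x<\lambda$ gives $F_{\gamma+\lambda_x}(x) + \gamma \leq F_{\lambda_x}(x+\gamma)$, and \autoref{lem-dd-xy} together with \eqref{eq-fast-dd} yields $F_{\lambda_x}(x+\gamma) \leq F_{\lambda_{x+\gamma}}(x+\gamma) = F_\lambda(x+\gamma)$ via $\lambda_x \dd[x+\gamma] \lambda_{x+\gamma}$.

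The main obstacle is the successor case $\alpha=\beta+1$ with $\beta\geq\omega$, where $F_{\gamma+\alpha}(x) = F_{\gamma+\beta}^x(x)$ and $F_\alpha(x+\gamma)=F_\beta^{x+\gamma}(x+\gamma)$. A nested induction on $n$ establishes the key inequality $F_{\gamma+\beta}^n(x) + \gamma \leq F_\beta^n(x+\gamma)$: the step bounds $F_{\gamma+\beta}^{n+1}(x) + \gamma = F_{\gamma+\beta}(F_{\gamma+\beta}^n(x)) + \gamma$ first by $F_\beta(F_{\gamma+\beta}^n(x)+\gamma)$ using the outer strong IH on $\beta$, and then by $F_\beta(F_\beta^n(x+\gamma)) = F_\beta^{n+1}(x+\gamma)$ using the monotonicity of $F_\beta$ and the inner IH on $n$. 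Taking $n=x$ and using $F_\beta(y)\geq y$ to inflate the iteration count from $x$ to $x+\gamma$ yields $F_{\gamma+\beta}^x(x) + \gamma \leq F_\beta^{x+\gamma}(x+\gamma)$, completing the case and the induction.
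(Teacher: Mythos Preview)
Your proof is correct, but it takes a genuinely different route from the paper's. The paper does not strengthen the inequality; instead it introduces an auxiliary fast-growing hierarchy $F_{s,\alpha}$ built on the shifted fundamental sequence $\omega_x = s(x) \eqdef x+\gamma$, proves $F_{\mathit{id},\gamma+\alpha}(x)\leq F_{s,\alpha}(x)$ by a short induction on $\alpha\geq\omega$, and then translates back via \autoref{lem:fund-seq} and the Hardy/fast-growing correspondence (\autoref{lem-hierarchies}.\ref{lem-hierarchies-3}) to obtain $F_{s,\alpha}(x)\leq F_{\mathit{id},\alpha}(x+\gamma)$. Your approach is more self-contained: the strengthened invariant $F_{\gamma+\alpha}(x)+\gamma\leq F_\alpha(x+\gamma)$ is exactly what is needed to make the successor step compose under iteration, so no auxiliary hierarchy is required. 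The paper's route is more modular (it reuses general lemmas about changing fundamental sequences), while yours is more direct and arguably more transparent about why the $+\gamma$ shift in the argument suffices. Both inductions have the same skeleton (base $\omega$, successor, limit $>\omega$ with the observation that $\lambda_x\geq\omega$ for $x\geq 1$); only the bookkeeping differs.
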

\begin{proof}
  We first show by induction on $\alpha\geq\omega$ that
  \begin{claim}
    Let $s(x)\eqdef x+\gamma$.  Then for all $x$,
    $F_{\mathit{id},\gamma+\alpha}(x)\leq F_{s,\alpha}(x)$.
  \end{claim}
  \begin{description}
  \item[{\boldmath base case for $\omega$}]
    $F_{\mathit{id},\gamma+\omega}(x)=F_{\mathit{id},\gamma+x}(x)=F_{s,\omega}(x)$,
  \item[{\boldmath successor case $\alpha+1$}] with $\alpha\geq\omega$, an
  induction on $n$ shows that $F^n_{\mathit{id},\gamma+\alpha}(x)\leq
  F^n_{s,\alpha}(x)$ for all $n$ and $x$ using the ind.\ hyp.\ on
  $\alpha$, thus
  $F_{\mathit{id},\gamma+\alpha+1}(x)=F^x_{\mathit{id},\gamma+\alpha}(x)\leqby{\eqref{eq-fast-mn}}
  F^{x+\gamma}_{\mathit{id},\gamma+\alpha}(x)\leq
  F^{x+\gamma}_{s,\alpha}(x)=F_{s,\alpha+1}(x)$,
\item[{\boldmath limit case $\lambda>\omega$}]
  $F_{\mathit{id},\gamma+\lambda}(x)=F_{\mathit{id},\gamma+\lambda_x}(x)\leqby{ih}F_{s,\lambda_x}(x)\leqby{\eqref{eq-fast-dd}}F_{s,\lambda{x+\gamma}}(x)=F_{s,\lambda}(x)$
  where \eqref{eq-fast-dd} can be applied since
  $\lambda_x\dde\lambda_{x+\gamma}$ by \autoref{lem-dd-xy} (applicable
  since $s(x)=x+\gamma>0$).
\end{description}\medskip

  Returning to the main proof, note that $s(x+1)=x+1+\gamma=s(x)+1$,
  allowing to apply \autoref{lem:fund-seq}, thus for all $x$,
  \begin{align*}
    F_{\mathit{id},\gamma+\alpha}(x)
    &\leq F_{s,\alpha}(x)\tag{by the previous claim}\\
    &=H_s^{\omega^\alpha}(x)\tag{by \autoref{lem-hierarchies}.\ref{lem-hierarchies-3}}\\
    &\leq H_{\mathit{id}}^{\omega^\alpha}(s(x))\tag{by
      \autoref{lem:fund-seq} and \eqref{eq-length-hardy}}\\
    &=F_{\mathit{id},\alpha}(s(x))\;.\tag{by \autoref{lem-hierarchies}.\ref{lem-hierarchies-3}}
  \end{align*}
\end{proof}

\subsection{Classes of Subrecursive Functions}
\label{sub:ext-grz}
We finally consider how some natural classes of recursive functions
can be characterized by closure operations on subrecursive
hierarchies.  The best-known of these classes is the \emph{extended
Grzegorczyk hierarchy} $(\FGH{\alpha})_{\alpha\in\CNF}$ defined by
\citet{fast} on top of the fast-growing hierarchy
$(F_{\alpha})_{\alpha\in\CNF}$ for $\omega_x\eqdef x$.

Let us first provide some background on the definition and properties
of $\FGH{\alpha}$.  The class of functions $\FGH{\alpha}$ is the
closure of the constant, addition, projection---including identity---,
and $F_{\alpha}$ functions, under the operations of
\begin{description}
\item[substitution] if $h_0,h_1,\dots,h_n$ belong to the class, then
so does $f$ if
\begin{equation*}
  f(x_1,\dots,x_n)=h_0(h_1(x_1,\dots,x_n),\dots,h_n(x_1,\dots,x_n))
\:,
\end{equation*}
\item[limited recursion] if $h_1$, $h_2$, and $h_3$ belong to the
  class, then so does $f$ if
\begin{align*}
  f(0,x_1,\dots,x_n)&=h_1(x_1,\dots,x_n)
\:,\\
  f(y+1,x_1,\dots,x_n)&=h_2(y,x_1,\dots,x_n,f(y,x_1,\dots,x_n))
\:,\\
  f(y,x_1,\dots,x_n)&\leq h_3(y,x_1,\dots,x_n)\;.
\end{align*}
\end{description}

The hierarchy is strict for $\alpha>0$, i.e.\
$\FGH{\alpha'}\subsetneq\FGH{\alpha}$ if $\alpha'<\alpha$, because in
particular $F_{\alpha'}\notin\FGH{\alpha}$.  For small
finite values of $\alpha$, the hierarchy characterizes some well-known
classes of functions:
\begin{itemize}
  \item $\FGH{0}=\FGH{1}$ contains all the linear functions, like
    $\lambda x.x+3$ or $\lambda x.2x$,
  \item $\FGH{2}$ contains all the elementary functions, like
    $\lambda x.2^{2^{x}}$,
  \item $\FGH{3}$ contains all the tetration functions, like $\lambda
    x.\tetra{2}{x}$, etc.
\end{itemize}
The union $\bigcup_{\alpha<\omega}\FGH{\alpha}$ is the set of
primitive-recursive functions, while $F_{\omega}$ is an
Ackermann-like non primitive-recursive function; we call
\emph{Ackermannian} such functions that lie in %
$\FGH{\omega}\backslash\bigcup_{\alpha<\omega}\FGH{\alpha}$.
Similarly, $\bigcup_{\alpha<\omega^\omega}\FGH{\alpha}$ is the set of
multiply-recursive functions with $F_{\omega^\omega}$ a non
multiply-recursive function.

The following properties (resp.\ \theoremautorefname~2.10
and \theoremautorefname~2.11 in \citep{fast}) are useful: for all
$\alpha$, unary $f$ in $\FGH{\alpha}$, and $x$,
\begin{align}
  \label{eq-fast-p}
  \alpha>0&\text{ implies }\exists
  p,\,f(x)\leq F^p_\alpha(x)\;,\\
  \label{eq-fast-em}
  &\exists p,\,\forall x\geq p,\,f(x)\leq F_{\alpha+1}(x)\;.
\end{align}
Also note that by \eqref{eq-fast-p}, if a unary function $g$ is
bounded by some function $g'$ in $\FGH{\alpha}$ with $\alpha>0$, then
there exists $p$ s.t.\ for all $x$, $g(x)\leq g'(x)\leq F^p_\alpha(x)$.
Similarly, \eqref{eq-fast-em} shows that for all $x\geq p$, $g(x)\leq
g'(x)\leq F_{\alpha+1}(x)$.

Let us conclude this appendix with the following
slight extension of \autoref{lem-hierarchies}.\ref{lem-hierarchies-4}:
\begin{lemma}\label{lem-fgh}
  For all $\gamma>0$ and $\alpha$, if $h$ is monotone and eventually
  bounded by a function in $\FGH{\gamma}$, then
  \begin{enumerate}[(i)]
  \item\label{lem-4-finite} if $\alpha<\omega$, $f_{\alpha}$ is
    bounded by a function in $\FGH{\gamma+\alpha}$, and
  \item\label{lem-4-infinite} if $\gamma<\omega$ and
    $\alpha\geq\omega$, $f_{\alpha}$ is bounded by a function in
    $\FGH{\alpha}$.
  \end{enumerate}
\end{lemma}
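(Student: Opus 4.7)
The plan is to treat the two items of the lemma separately: I prove (i) by finite induction on $\alpha<\omega$, and I then deduce (ii) directly from \autoref{lem-hierarchies}.\ref{lem-hierarchies-4} combined with \autoref{lem-prim-rec}. Throughout I rely on the monotonicity of $f_\alpha$ provided by \autoref{lem-fast-mono}, on the subrecursive majorants \eqref{eq-fast-p} and \eqref{eq-fast-em}, and on the closure of each $\FGH{\beta}$ under substitution and limited recursion.

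For (i), the base case $\alpha=0$ is immediate since $f_0=h$ is by hypothesis eventually bounded by a function in $\FGH{\gamma}=\FGH{\gamma+0}$. For the successor step, assume $f_\alpha$ is eventually bounded by some $g_\alpha\in\FGH{\gamma+\alpha}$, which we may take monotone in view of \autoref{lem-fast-mono}. Applying \eqref{eq-fast-p} to $g_\alpha$ furnishes a constant $p$ with $g_\alpha(x)\leq F^p_{\gamma+\alpha}(x)$ eventually, hence $f_{\alpha+1}(x)=f^{\omega_x}_\alpha(x)\leq g_\alpha^{\omega_x}(x)\leq F^{p\omega_x}_{\gamma+\alpha}(x)$. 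A short induction on $p$, starting from $F^{\omega_x}_{\gamma+\alpha}(x)=F_{\gamma+\alpha+1}(x)$ and using $F_{\gamma+\alpha+1}(x)\geq x$ (from \eqref{eq-fast-mn}) together with \eqref{eq-fast-xy}, yields the key inequality $F^{p\omega_x}_{\gamma+\alpha}(x)\leq F^p_{\gamma+\alpha+1}(x)$. The right-hand side is a finite composition of $F_{\gamma+\alpha+1}$ with itself, hence belongs to $\FGH{\gamma+\alpha+1}$ by closure under substitution, closing the induction.

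For (ii), with $\gamma<\omega$ and $\alpha\geq\omega$, I first apply \eqref{eq-fast-em} to the assumed majorant of $h$ in $\FGH{\gamma}$ to obtain that $h$ is eventually bounded by $F_{\gamma+1}$ itself. Then \autoref{lem-hierarchies}.\ref{lem-hierarchies-4} yields that $f_\alpha$ is eventually bounded by $F_{\gamma+1+\alpha}$. Since $0<\gamma+1<\omega\leq\alpha$, \autoref{lem-prim-rec} applies and gives $F_{\gamma+1+\alpha}(x)\leq F_\alpha(x+\gamma+1)$, whose right-hand side lies in $\FGH{\alpha}$ by substitution with a linear function. I expect the main technical obstacle to be the inductive verification of the ``$F^{p\omega_x}_{\gamma+\alpha}(x)\leq F^p_{\gamma+\alpha+1}(x)$'' step, which, although elementary, hinges on a careful reuse of the defining equation of $F_{\gamma+\alpha+1}$ and on propagating monotonicity through iteration; a secondary caveat is keeping the fundamental-sequence convention $\omega_x=x$ fixed throughout so that \autoref{lem-prim-rec} applies as stated and so that the ``eventually'' thresholds from \eqref{eq-fast-p} and \eqref{eq-fast-em} compose correctly across the inductive steps.
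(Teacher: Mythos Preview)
Your proposal is correct and follows essentially the same approach as the paper: finite induction on $\alpha$ for (i), and the combination of \autoref{lem-hierarchies}.\ref{lem-hierarchies-4} with \autoref{lem-prim-rec} for (ii). The only difference is cosmetic, in the final estimate of the successor step of (i): the paper bounds $f_{k+1}(x)$ by $F_{\gamma+k+1}(px)$ via $f^x_k(x)\leq F^{px}_{\gamma+k}(x)\leq F^{px}_{\gamma+k}(px)$, whereas you bound it by $F^p_{\gamma+\alpha+1}(x)$ via your inequality $F^{p\omega_x}_{\gamma+\alpha}(x)\leq F^p_{\gamma+\alpha+1}(x)$; both land in $\FGH{\gamma+\alpha+1}$ by closure under substitution.
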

\begin{proof}[Proof of \eqref{lem-4-finite}]
  We proceed by induction on $\alpha<\omega$.
  \begin{description}
  \item[{\boldmath For the base case $\alpha=0$}] we have $f_0=h$ bounded by
    a function in $\FGH{\gamma}$ by hypothesis.
  \item[{\boldmath For the successor case $\alpha=k+1$}] by ind.\ hyp.\
    $f_k$ is bounded by a function in $\FGH{\gamma+k}$, thus
    by \eqref{eq-fast-p} there exists $p$ s.t.\
    $f_k(x)\leq F^p_k(x)$.  By induction on $n$, we deduce
    \begin{align}\label{eq-fast-iter}
    f^n_k(x)&\leq F^{pn}_{\gamma+k}(x)\;;
    \shortintertext{indeed}
    f^0_k(x)&=x=F^0_{\gamma+k}(x)\;,\notag\\
    f^{n+1}_k(x)&=f_k(f^n_k(x))\leqby{ih}f_k(F^{pn}_{\gamma+k}(x))\leqby{\eqref{eq-fast-p}}F^p_{\gamma+k}(F^{pn}_{\gamma+k}(x))=F^{p(n+1)}_{\gamma+k}(x)\;.\notag
    \shortintertext{Therefore,}
    f_{k+1}(x)&=f^x_{k}(x)\leqby{\eqref{eq-fast-iter}}F^{px}_{\gamma+k}(x)\leqby{\eqref{eq-fast-xy}}F^{px}_{\gamma+k}(px)=F_{\gamma+k+1}(px)\;,\notag
    \end{align}
    where the latter function $x\mapsto F_{\gamma+k+1}(px)$ is
    defined by substitution from $F_{\gamma+k+1}$ and $p$-fold
    addition, and therefore belongs to $\FGH{\gamma+k+1}$.\qedhere
  \end{description}
\end{proof}
\begin{proof}[Proof of \eqref{lem-4-infinite}]
  By \eqref{eq-fast-em}, there exists $x_0$ s.t.\ for all $x\geq x_0$,
  $h(x)\leq F_{\gamma+1}(x)$.  By
  \autoref{lem-hierarchies}.\ref{lem-hierarchies-4} and
  \autoref{lem-prim-rec}, $f_\alpha(x)\leqby{\eqref{eq-fast-xy}}f_{\alpha}(x+x_0)\leq F_{\alpha}(x+x_0+\gamma+1)$ for
  all $x$, where the latter function $x\mapsto
  F_{\alpha}(x+x_0+\gamma+1)$ is in $\FGH{\alpha}$.
\end{proof}

\section{Additional Comments}\label{app-comments}
We gather in this appendix several additional remarks comparing some
of the more technical aspects of the main text with the literature.

\subsection{Maximal Order Types}\label{app-ot}

\paragraph{Definitions of Maximal Order Types.}
Our definition of $o(A)$ in \autoref{sec-otypes} is the same as that of
the \emph{maximal order type} of the wpo $A$, which is defined as the
sup of all the order types of the linearizations of
$\tup{A;\leq}$~\citep{dejongh77}, or equivalently as the height of the
tree of bad sequences of $A$~\citep{hasegawa94}---this is not a mere
coincidence, as we will see at the end of the section when introducing
reifications.

Consider a well partial order $\tup{A;\leq}$.  The first definition of
the maximal order type of $A$ is through \emph{linearizations},
i.e. linear orderings $\leq'$ extending $\leq$:
\begin{align}
  o(A)&=\sup\{|A;{\leq'}|\;\mid\;\leq'\text{ is a linearization of
  }\leq\}\;.\tag{\citep[\definitionautorefname~1.4]{dejongh77}}
\intertext{%
This definition uses the fact that well-linear orders and ordinal
terms in $\CNF$ can be identified.
For the second definition, organize the set of bad
sequences over $\tup{A;\leq}$ as a prefix tree $\mathrm{Bad}$, and
associate an ordinal $|\sigma|$ to each node $\sigma$ respecting%
}
|\sigma|&=\sup\{|\sigma'|+1\mid \sigma'\text{ is an immediate successor
of }\sigma\}\;.\notag
\intertext{Write $|\mathrm{Bad}|$ for the root ordinal:%
}
  o(A)&=|\mathrm{Bad}|\;.\tag{\citep[\definitionautorefname~2.7]{hasegawa94}}
\end{align}

\paragraph{Bijection With Algebra.}
The bijection between exponential nwqo's and ordinal terms in $\CNF[\omega^{\omega^\omega}]$ is
not extremely surprising.  A bijection for an algebra on wpo's with
fixed points instead of Kleene star is shown to hold by
\citet{hasegawa94}, and applied to Kruskal's Theorem.  The novelty in
\autoref{sec-otypes} is that everything also works for \emph{normed}
wqo's.

Finally note that using different algebraic operators can easily break
this bijection.  For instance, $\seg{p}$, the $p$-element initial
segment of $\Nat$, has order type $p=o(\Upgamma_p)$, but for $p\geq 2$
the two nwqo's $\seg{p}$ and $\Upgamma_p$ are not isomorphic.

\paragraph{Reifications.}
Equation~\ref{eq-main-inequality} can be viewed as a controlled variant of
the reification techniques usually employed to prove upper
bounds on maximal order types~\citep{simpson88,hasegawa94}.

A \emph{reification} of a partial order $\tup{A;\leq}$ by an ordinal
$\alpha$ is a map $\mathrm{Bad}\to\alpha+1$ s.t.\ if $\sigma'$ is a
suffix of $\sigma$, then
$r(\sigma')<r(\sigma)$~\citep[Def.~4.1]{simpson88}.  If there exists
such a reification, then $o(A)<\alpha+1$ and $\tup{A;\leq}$ is a wpo.

Given a normed partial order $A$ and any bad sequence
$\xxx=x_0,x_1,\ldots$, we can define a control
$g(x)=\max\{|x_{i+1}|+1\mid x=|x_i|+1\}$ (remember that $\len{.}_A$ is
proper) such that $\xxx$ is $(g,|x_0|+1)$-controlled, and use
\eqref{eq-main-inequality} to associate with each (bad) suffix
$x_{i+1},x_{i+2},\ldots$ an ordinal $\alpha_{i+1}$ that maximizes
$L_{C(\alpha_i)}$ in \eqref{eq-main-inequality}.  Since
$\alpha\mathrel{\partial_n}\alpha'$ implies $\alpha>\alpha'$ for all
$n$, this mapping yields a well-founded sequence
$\alpha_0>\alpha_1>\cdots$ of ordinals, of length at most
$\alpha_0=o(A)$.  While not a reification stricto sensu, this
association of decreasing ordinals to each suffix of any bad sequence
$\xxx$ implies every bad sequence $\xxx$ to be finite and $A$ to be a
wqo, i.e., \eqref{eq-main-inequality} implies Higman's Lemma in the
finite alphabet case.  A second consequence is that no choice of
$o(A)$ smaller than the maximal order type of $A$ can be compatible
with an inequality like \eqref{eq-main-inequality}, since the
particular linearization that gave rise to $o(A)$ yields one
particular bad sequence.

\subsection{Comparisons with the Literature}
\label{ax:comps}
We provide some elements of comparison between our bounds
and similar bounds found in the literature.

\paragraph{Lower Bounds.} Let us compare our bound with
the lower bound of \citet{lowhigman}, who constructs a
$(g,n)$-controlled bad sequence $\xxx=x_0,x_1,\ldots$ of length
\begin{equation*}
g_{\omega^{\omega^{p-1}}}(n)\leq \badd{\Upgamma_{p}^\ast,g}(n^p)\;
\end{equation*}
for $\omega_x=x$.

The $n^p$ bound on the length of $x_0$ in this sequence results from
an alternative definition of the norm over $\Upgamma_{p}^\ast$.	 Let
$\Upgamma_{p+1}=\{a_1,\ldots,a_p,a_{p+1}\}$, and
$\pi_p:\Upgamma_{p+1}^\ast\to\Upgamma_p^\ast$ be the projection
defined by $\pi_p(a_{p+1})=\varepsilon$ (the empty string) and
$\pi_p(a_i)=a_i$ for all $1\leq i\leq p$.  The norm
$\|.\|_{\Upgamma_{p+1}}$ is defined by \citet{lowhigman} for all $x$ in
$\Upgamma_{p+1}$ by
\begin{equation*}
\|x\|_{\Upgamma_{p+1}}=\max(\{\|\pi_p(x)\|_{\Upgamma_p}\}\cup\{|y|\mid\exists
z,z'\in\Upgamma_{p+1},x=zyz'\wedge y\in\{a_{p+1}\}^\ast\})
\end{equation*} for $p\geq
0$ and $\|x\|_{\Upgamma_0}=0$.	For instance,
$\|a_2^ia_1a_1a_2^ja_1^k\|_{\Upgamma_2}=\max(i,j,k+2)$.	 Observe that,
if a sequence in
$(\Upgamma_{p}^\ast,|.|_{\Upgamma_{p}};\leq_{\Upgamma_{p}})$ is
$(g,n)$-controlled, then seeing it as a sequence in
$(\Upgamma_{p}^\ast,\|.\|_{\Upgamma_{p}};\leq_{\Upgamma_{p}})$,
it remains $(g,n)$-controlled; thus despite being more involved this
norm could be used seamlessly in applications.	But, most importantly,
using this new norm does not break \eqref{eq-residual-finite-star},
and the entire analysis we conducted still holds.  Thus, by
\eqref{eq:prop-hardy-id}, in
$(\Upgamma_{p}^\ast,\|.\|_{\Upgamma_{p}};\leq_{\Upgamma_{p}})$,
\begin{equation*}
  g_{\omega^{\omega^{p-1}}}(n)\leq \badd{\Upgamma_{p}^\ast,g}(n)\leq
  h_{\omega^{\omega^{p-1}}}((p-1)n+1)\;.
\end{equation*}

\paragraph{Upper Bounds.}
Because previous authors employed various modified subrecursive
hierarchies (as we do with $h_\alpha$) but did not provide any
translation into the standard ones (like our \autoref{theo-main}),
comparing our bound with theirs is very difficult.  \Citet{cichon98}
show a $g_{\alpha_p}\!(n)$ upper bound where $\alpha_p$ is an ordinal
with a rather complex definition
\citep[see][\sectionautorefname~8]{cichon98}.
\Citet[\corollaryautorefname~6.3]{weiermann94} assumes $g(x+1)=g(x)+d$
for some constant $d$ and shows a
$\bar{H}^{\omega^{\omega^{p-1}}}\!\!\!\left((4+p+12\cdot(n+2+d))^3\right)$
upper bound using a modified Hardy hierarchy
$(\bar{H}^\alpha)_\alpha$; it is actually not clear whether this would
be eventually bounded by $F_{\omega^{p-1}}$.  See the next section for
a discussion of how the techniques of \citet{buchholtz94,weiermann94}
could be applied to our case.

\subsection{Normed Systems of Fundamental Sequences}
\label{ax-nhardy}
We discuss in this subsection an alternative proof of
\autoref{prop-hardy} (with an additional hygienic condition on $g$),
which relies on the work of \citet{buchholtz94} on alternative
definitions of hierarchies, and in particular on their
\theoremautorefname~4.  The proof reuses some results given in
Appendix~\ref{ax-hardy} (namely \eqref{eq-mono-hardy} and
\autoref{coro-partial-lean}), and the interplay between leanness and
the predecessor function (\autoref{lem-Px-dd}).  Throughout the
section, fix $\omega_x\eqdef x+1$.

Let us first define a \emph{norm} $N$ over $\CNF$ by
\begin{equation}
N\alpha=\min\{k\in\Nat\mid \alpha\text{ is $k$-lean}\}\;.
\end{equation}
One can verify
\begin{align}\label{eq-norm-bach}
  \forall\alpha,\,N0\leq N\alpha&\text{ and
  }\forall\alpha,\,N(\alpha+1)\leq N(\alpha)+1\;.
\end{align}
Note that for each $k$ and each $\alpha$ in $\CNF$ (thus
$\alpha<\ezero$), there are only finitely many ordinal terms
$\alpha'<\alpha$ with $N\alpha'\leq k$.  This is useful in definitions
like Eq.~\eqref{eq-nhardy} below, where it ensures that the $\max$
operation is applied to a finite set.

Given a monotone control function $g$, define an alternative length
hierarchy by
\begin{align}\label{eq-nhardy}
  \tilde{H}_0(x)&=0,\,&
  \tilde{H}_\alpha(x)&=\max\{1+\tilde{H}_{\alpha'}(g(x))\mid \alpha'<\alpha\wedge
    N\alpha'\leq (N\alpha)\cdot x\}\,.
\end{align}
One easily proves, by induction over $\alpha$, that each
$\tilde{H}_\alpha$ is monotone.  By \autoref{coro-partial-lean}, if
$\alpha'\mathrel{\partial_n}\alpha$, then $N\alpha'\leq(N\alpha)n$,
and as seen in Appendix~\ref{app-deriv-is-wf} $\alpha'<\alpha$, thus
for all $\alpha$ and all $n$,
\begin{equation}\label{eq-ubound-nhardy}
  M_\alpha(n)\leq \tilde{H}_\alpha(n)\;.
\end{equation}
We could stop here: after all, which hierarchy definition constitutes
the appropriate one is debatable.  Nevertheless, we shall continue
toward a more ``standard'' understanding (see \eqref{eq-nhardy-pred})
of the alternative length hierarchy $(\tilde{H}_\alpha)_\alpha$
defined in \eqref{eq-nhardy}, from which \autoref{prop-hardy} will be
quite easy to derive (see \eqref{eq-nhardy-ind}).

\paragraph{An alternative assignment of fundamental sequences.}
Consider the following alternative assignment of fundamental sequences
(also defined on zero and successor ordinals):
\begin{align}\label{eq-bachman}
  [0]_x&=0,\,&
  [\alpha]_x&=\max\{\alpha'<\alpha\mid
  N\alpha'\leq(N\alpha)\cdot x\}\;.
\end{align}
This almost fits the statement of
\citep[\theoremautorefname~4]{buchholtz94}, which defines fundamental
sequences using a ``$N\alpha'\leq p(\alpha+x)$'' condition for a
suitable function $p$, instead of ``$N\alpha'\leq(N\alpha)\cdot x$''
as in \eqref{eq-bachman}.  Nevertheless, we can follow the proof of
\citep[\theoremautorefname~4]{buchholtz94} and adapt it quite easily
to our case.%

\begin{lemma}\label{lem-bachman}
  Let $g(x)\geq 2x$ for all $x$ be a control function.  Then, for all
  $\alpha>0$ in $\CNF$ and all $x$,
  \begin{equation*}\label{eq-maxbachman}
    \tilde{H}_\alpha=1+\tilde{H}_{[\alpha]_x}(g(x))\;.
  \end{equation*}
\end{lemma}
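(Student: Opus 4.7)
The proof plan is a transfinite induction on $\alpha$, establishing simultaneously two statements: (a) the equation $\tilde{H}_\alpha(x) = 1 + \tilde{H}_{[\alpha]_x}(g(x))$ itself, and (b) a monotonicity property: for every $\alpha' \leq \alpha$ and every $y$ with $N\alpha' \leq N\alpha \cdot y$, $\tilde{H}_{\alpha'}(y) \leq \tilde{H}_\alpha(y)$. Writing $T_\alpha^x$ for the set $\{\beta < \alpha \mid N\beta \leq N\alpha \cdot x\}$ over which the maximum in the definition of $\tilde{H}_\alpha(x)$ is taken, the content of (a) is that this maximum is attained at the largest element $[\alpha]_x$ of $T_\alpha^x$.

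Two preliminaries feed into the inductive step. The first is a direct monotonicity observation from the definition: if $\alpha' \leq \alpha''$ and $N\alpha' \leq N\alpha''$, then $\tilde{H}_{\alpha'}(y) \leq \tilde{H}_{\alpha''}(y)$, because $T_{\alpha'}^y$ is contained in $T_{\alpha''}^y$. The second, and much more delicate, is a norm comparison lemma: for every $\alpha > 1$ and $x \geq 1$, $N\alpha \leq 2\,N[\alpha]_x$ (and in particular $N[\alpha]_x \geq 1$). This is handled by case analysis on the strict Cantor normal form of $\alpha$. If $\alpha = \gamma + 1$ is a successor with $\gamma \geq 1$, then $[\alpha]_x = \gamma$ and inspection of the strict CNF yields $N\alpha \leq N\gamma + 1 \leq 2\,N\gamma$. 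If $\alpha$ is a limit, one actually obtains the stronger $N[\alpha]_x \geq N\alpha$ by observing that $[\alpha]_x$ matches $\alpha$ in the upper part of its CNF and that, below the position where it must diverge downward, one can always fit coefficients up to $N\alpha$ within the permitted norm budget $N\alpha \cdot x$.

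Granting these preliminaries, the inductive step at $\alpha$ runs as follows. For (a): the inequality $\tilde{H}_\alpha(x) \geq 1 + \tilde{H}_{[\alpha]_x}(g(x))$ is immediate since $[\alpha]_x \in T_\alpha^x$, so it suffices to show $\tilde{H}_{\alpha'}(g(x)) \leq \tilde{H}_{[\alpha]_x}(g(x))$ for every $\alpha' \in T_\alpha^x$. Now $\alpha' \leq [\alpha]_x$, and the norm comparison together with $g(x) \geq 2x$ gives $N\alpha' \leq N\alpha \cdot x \leq 2\,N[\alpha]_x \cdot x \leq N[\alpha]_x \cdot g(x)$; applying the induction hypothesis (b) to $[\alpha]_x < \alpha$ yields the desired inequality. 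For (b): fix $\alpha' \leq \alpha$ with $N\alpha' \leq N\alpha \cdot y$; the cases $\alpha' = 0$ and $\alpha' = \alpha$ are trivial, and otherwise $\alpha' \in T_\alpha^y$, giving $1 + \tilde{H}_{\alpha'}(g(y)) \leq \tilde{H}_\alpha(y)$. Combining the just-proved (a) at $\alpha' < \alpha$ with the induction hypothesis (b) applied to $\alpha'$ (using $g(y) \geq y$ to cover the norm constraint on $[\alpha']_y$) yields $\tilde{H}_{\alpha'}(y) = 1 + \tilde{H}_{[\alpha']_y}(g(y)) \leq 1 + \tilde{H}_{\alpha'}(g(y)) \leq \tilde{H}_\alpha(y)$.

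The main obstacle is the limit case of the norm comparison lemma: one must describe the structure of $[\alpha]_x$ carefully from the strict CNF of $\alpha$ and argue that saturating the norm budget on the part of $[\alpha]_x$ below its divergence from $\alpha$ preserves a sufficiently high norm. The factor $2$ in the norm comparison is precisely what is bought by the hypothesis $g(x) \geq 2x$, so this is where that hygienic assumption on the control function plays its role.
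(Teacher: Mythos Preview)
Your plan is correct and matches the paper's proof in its essential architecture: both hinge on the same norm comparison (your $N\alpha\le 2\,N[\alpha]_x$ is exactly the content of the paper's intermediate claim that $N\alpha'\le N\alpha\cdot x$ implies $N\alpha'\le N[\alpha]_x\cdot g(x)$), and both finish with a transfinite induction showing the maximum in the definition of $\tilde H_\alpha(x)$ is attained at $[\alpha]_x$.

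There are two packaging differences worth noting. First, for the limit case of the norm comparison the paper gives a one-line argument that is sharper and less structural than yours: it shows $N[\lambda]_x = N\lambda\cdot x$ exactly, by observing that if $N[\lambda]_x < N\lambda\cdot x$ then $[\lambda]_x+1$ would still satisfy $[\lambda]_x+1<\lambda$ (as $\lambda$ is a limit) and $N([\lambda]_x+1)\le N[\lambda]_x+1\le N\lambda\cdot x$, contradicting maximality of $[\lambda]_x$. This avoids having to describe the CNF of $[\lambda]_x$ at all. Second, the paper runs a single induction proving ``$\alpha'<\alpha$ and $N\alpha'\le N\alpha\cdot x$ imply $1+\tilde H_{\alpha'}(g(x))\le\tilde H_\alpha(x)$'', and in the inductive step uses the (separately established) monotonicity of each $\tilde H_{\alpha'}$ in its argument to pass from $g(x)$ to $g^2(x)$. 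Your simultaneous induction on (a) and (b) sidesteps that appeal to argument-monotonicity, trading it for your index-monotonicity statement (b); either route works, and yours is arguably a bit more self-contained.
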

\begin{proof}
One inequality is immediate since $[\alpha]_x$ verifies the conditions of \eqref{eq-nhardy} by
definition, thus
\begin{equation}
  1+\tilde{H}_{[\alpha]_x}(g(x))\leq\max\{1+\tilde{H}_{\alpha'}(g(x))\mid
\alpha'<\alpha\wedge N\alpha'\leq(N\alpha)\cdot x\}\;.
\end{equation}
The proof of the converse inequality is more involved.  Let us first
show the following:
\begin{claim}
  Let $g(x)\geq 2x$ for all $x$ be a control function.  If
  $\alpha'<[\alpha]_x$ and $N\alpha'\leq(N\alpha)\cdot x$, then
  \begin{equation}\label{eq-nbach-ind}
    N\alpha'\leq N[\alpha]_x\cdot g(x)\;.
  \end{equation}
\end{claim}
\begin{proof}[Proof of \eqref{eq-nbach-ind}]
First note that, for a successor ordinal,
\begin{equation}
  [\alpha+1]_x=\alpha
\end{equation}
since $\alpha$ satisfies the conditions of \eqref{eq-bachman} and is
the maximal ordinal to do so.  Thus
\begin{equation}\label{eq-nbach-succ}
  N[\alpha+1]_x=N\alpha\;.
\end{equation}
Also note that, if $\lambda$ is a limit, then
\begin{equation}\label{eq-nbach-lim}
  N[\lambda]_x=(N\lambda)\cdot x\;.
\end{equation}
Indeed, assume $N[\lambda]_x\neq(N\lambda)\cdot x$.  By
definition \eqref{eq-bachman}, we have $[\lambda]_x<\lambda$ and
$N[\lambda]_x\leq(N\lambda)\cdot x$, so this would mean
$N[\lambda]_x<(N\lambda)\cdot x$.  But in that case
$[\lambda]_x+1<\lambda$ since $[\lambda]_x<\lambda$ and $\lambda$ is a
limit, and $N([\lambda]_x+1)\leq N[\lambda]_x+1\leq(N\lambda)\cdot x$
by \eqref{eq-norm-bach}, hence $[\lambda]_x+1$ also satisfies the
conditions of \eqref{eq-bachman} with $[\lambda]_x<[\lambda]_x+1$, a
contradiction.

Let us now prove the claim itself.  Note that $\alpha'<[\alpha]_x$
implies $[\alpha]_x>0$.  If $\alpha$ is a limit ordinal $\alpha''+1$,
then
\begin{align*}
  N\alpha'&<(N(\alpha''+1))\cdot x\tag{by hyp.}\\
  &\leq (N\alpha'' + 1)\cdot x\tag{by \eqref{eq-norm-bach}}\\
  &= (N[\alpha]_x+1)\cdot x\tag{by \eqref{eq-nbach-succ}}\\
  &\leq (N[\alpha]_x)\cdot 2x\tag{since $[\alpha]_x>0$}\\
  &\leq (N[\alpha]_x)\cdot g(x)\;.\tag{since $g(x)\geq 2x$}
\end{align*}
If $\alpha$ is a limit ordinal, then
\begin{align*}
  N\alpha'&<(N\alpha)\cdot x\tag{by hyp.}\\
  &=(N[\alpha]_x)\cdot x\tag{by \eqref{eq-nbach-lim}}\\
  &\leq (N[\alpha]_x)\cdot g(x)\;.\tag{since $g(x)\geq x$}
\end{align*}
\end{proof}

Returning to the proof of \autoref{lem-bachman}, we show by induction
on $\alpha$ that
\begin{claim}
\begin{equation}\label{eq-maxbachman-ind}
  \alpha'<\alpha\text{ and }N\alpha'\leq (N\alpha)\cdot x\text{ imply
  }1+\tilde{H}_{\alpha'}(g(x))\leq\tilde{H}_\alpha(x)\;.
\end{equation}
\end{claim}
\begin{proof}[Proof of \eqref{eq-maxbachman-ind}]
We have $\tilde{H}_\alpha(x)=1+\tilde{H}_{[\alpha]_x}(g(x))$ by
definition, and by the hypotheses of \eqref{eq-maxbachman-ind} we get
$\alpha'\leq[\alpha]_x$.  If $\alpha'=[\alpha]_x$, then
\eqref{eq-maxbachman-ind} holds.  Otherwise, i.e.\ if
$\alpha'<[\alpha]_x$, \eqref{eq-nbach-ind} shows
$N\alpha'\leq (N[\alpha]_x)\cdot g(x)$, and we can apply the
ind.\ hyp.\ on $[\alpha]_x$:
\begin{align*}
  1+\tilde{H}_{\alpha'}(g(x))&<2+\tilde{H}_{\alpha'}(g^2(x))\tag{by
    monotonicity of $g$ and $\tilde{H}_{\alpha'}$}\\
  &\leq 1+\tilde{H}_{[\alpha]_x}(g(x))\tag{by ind.\ hyp.}\\
  &=\tilde{H}_\alpha(x)\;.\qedhere
\end{align*}
\end{proof}
The previous claim implies the desired inequality and concludes the
proof of \autoref{lem-bachman}.
\end{proof}

\paragraph{Relating with Predecessors.}
We first revisit the relationship between leanness and predecessor
computations (this also provides an alternative proof of
\autoref{lem-Px-dd}).
\begin{lemma}\label{lem-pred-lean}
  Let $\alpha$ be in $\CNF$ and $k$ in $\Nat$.  If $\alpha$ is
  $k$-lean, then $P_k(\alpha)$ is also $k$-lean, and furthermore
  $P_k(\alpha)=\max\{\alpha'\text{ $k$-lean}\mid
  \alpha'\dd[k]\alpha\}$.
\end{lemma}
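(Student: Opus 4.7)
The plan is to establish the two parts of the lemma in sequence: first that $P_k(\alpha)$ is $k$-lean, and then that it realises the stated maximum. The second part follows at once from \autoref{lem-Px-dd}: the equivalences in Eq.~\eqref{eq-Px-dd} show that for $k$-lean $\alpha'$, the relation $\alpha'\dd[k]\alpha$ is equivalent to $\alpha'\leq P_k(\alpha)$, so once $P_k(\alpha)$ itself is known to be $k$-lean it is visibly the largest element of the set $\{\alpha'\text{ $k$-lean}\mid\alpha'\dd[k]\alpha\}$.

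The substantive work is the first part, which I would prove by transfinite induction on $\alpha>0$. The successor case $\alpha=\beta+1$ is immediate: $P_k(\beta+1)=\beta$, and writing $\alpha$ in strict CNF shows that $\beta$ is obtained from $\alpha$ either by decrementing the last $\omega^0$-coefficient or by dropping a final $\omega^0\cdot 1$ summand, so $\beta$ inherits $k$-leanness from $\alpha$.

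In the limit case, I would write $\alpha$ in strict CNF and isolate its last principal summand as $\omega^\delta$ with $\delta>0$, so that $\alpha=\gamma+\omega^\delta$ with $\gamma$ visibly $k$-lean, and Eq.~\eqref{eq-Px-prop1} yields $P_k(\alpha)=\gamma+P_k(\omega^\delta)$. The key sub-claim is that $P_k(\omega^\delta)$ is $k$-lean. For this I would invoke Eq.~\eqref{eq-Px-prop2} with $\omega_k-1=k$, giving $P_k(\omega^\delta)=\omega^{P_k(\delta)}\cdot k+P_k(\omega^{P_k(\delta)})$. Since $\delta$ is $k$-lean with $\delta<\omega^\delta\leq\alpha$, the inductive hypothesis applies and $P_k(\delta)$ is $k$-lean; hence $\omega^{P_k(\delta)}$ is $k$-lean and strictly smaller than $\omega^\delta\leq\alpha$, and a second use of the inductive hypothesis shows that $P_k(\omega^{P_k(\delta)})$ is $k$-lean as well. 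Because $P_k(\omega^{P_k(\delta)})<\omega^{P_k(\delta)}$, the two summands in the formula have disjoint leading exponents and combine into a strict-CNF sum with top coefficient $k$ and $k$-lean exponents; thus $P_k(\omega^\delta)$ is $k$-lean. Finally, either $\gamma=0$, or the last exponent of $\gamma$ is $\geq\delta>P_k(\delta)$; in either case gluing $P_k(\omega^\delta)$ to $\gamma$ does not fuse any coefficients, and $P_k(\alpha)$ itself is $k$-lean.

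The main subtlety to guard against is that the pointwise-expansion $\lambda_k$ of a $k$-lean limit $\lambda$ need not itself be $k$-lean — for instance, $\lambda=\omega^{\beta+1}$ gives $\lambda_k=\omega^\beta(k+1)$, whose leading coefficient $k+1$ already exceeds $k$. This rules out a naive application of the inductive hypothesis at $\lambda_k$, and is precisely why the proof leans on the explicit predecessor formulas Eqs.~\eqref{eq-Px-prop1} and \eqref{eq-Px-prop2}: they allow the recursion to descend to the genuinely $k$-lean smaller arguments $\delta$ and $\omega^{P_k(\delta)}$ rather than to $\lambda_k$.
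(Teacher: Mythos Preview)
Your proof is correct and takes a genuinely different route from the paper's. The paper introduces an auxiliary notion of \emph{almost $k$-lean} ordinals (allowing the last coefficient to be $k+1$, or recursively the last exponent to be almost $k$-lean) and proves three claims showing that this property is preserved along the chain $\alpha\mapsto\alpha_k\mapsto(\alpha_k)_k\mapsto\cdots$ of fundamental-sequence steps, collapsing back to genuine $k$-leanness only at the terminal successor step. You instead bypass the fundamental-sequence iteration entirely by appealing to the closed-form identities \eqref{eq-Px-prop1} and \eqref{eq-Px-prop2} for $P_k$, which lets the transfinite induction descend directly to the strictly smaller $k$-lean arguments $\delta$ and $\omega^{P_k(\delta)}$; your explicit remark about why one cannot naively apply the induction hypothesis at $\lambda_k$ is exactly the obstruction that motivates the paper's ``almost $k$-lean'' machinery. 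Your approach is more economical (no auxiliary definition, fewer case analyses) and reuses \autoref{lem-Px-dd} for the maximality half, whereas the paper's argument is self-contained with respect to \autoref{lem-Px-dd} and in fact is advertised as yielding an alternative proof of it. Both are clean; yours is shorter.
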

\begin{proof}
  Let us introduce a slight variant of $k$-lean ordinals: let
  $\alpha=\omega^{\beta_1}\cdot c_1+\cdots+\omega^{\beta_m}\cdot c_m$
  be an ordinal in $\CNF$ with $\alpha>\beta_1>\cdots>\beta_m$ and
  $\omega>c_1,\ldots,c_m>0$.  We say that $\alpha$ is \emph{almost
    $k$-lean} if either (i)~$c_m=k+1$ and both
  $\sum_{i<m}\omega^{\beta_i}$ and $\beta_m$ are $k$-lean, or
  (ii)~$c_m\leq k$, $\sum_{i<m}\omega^{\beta_i}$ is $k$-lean, and
  $\beta_m$ is almost $k$-lean.  Note that an almost $k$-lean ordinal
  term is \emph{not} $k$-lean.  Here are several properties of note on
  almost $k$-lean ordinals:
  \begin{claim}\label{cl-almost-lean}
    If $\lambda$ is $k$-lean, then $\lambda_k$ is almost $k$-lean.
  \end{claim}
  By induction on $\lambda$, letting $\lambda=\omega^{\beta_1}\cdot
  c_1+\cdots+\omega^{\beta_m}\cdot c_m$ as above.  If $\beta_m$ is a
  successor ordinal $\beta+1$ (thus $\beta$ is $k$-lean),
  $\lambda_k=\omega^{\beta_1}\cdot c_1+\cdots+\omega^{\beta_m}\cdot
  (c_m-1)+\omega^{\beta}\cdot (k+1)$ is almost $k$-lean.  If $\beta_m$ is
  a limit ordinal, $\lambda_k=\omega^{\beta_1}\cdot
  c_1+\cdots+\omega^{\beta_m}\cdot (c_m-1)+\omega^{(\beta_m)_k}$ is
  $k$-lean by ind.\ hyp.\ on $\beta_m$.
  
  \begin{claim}\label{cl-almost-succ}
    If $\alpha+1$ is almost $k$-lean, then $\alpha$ is $k$-lean.
  \end{claim}  
  If $\alpha+1=\omega^{\beta_1}\cdot
  c_1+\cdots+\omega^{\beta_m}\cdot c_m$ as above, it means
  $\beta_m=0$, thus we are in case~(i) of almost $k$-lean ordinals
  with $c_m=k+1$, and $\alpha=\omega^{\beta_1}\cdot
  c_1+\cdots+\omega^{\beta_m}\cdot(c_m-1)$ is $k$-lean.
  
  \begin{claim}\label{eq-almost-limit}
    If $\lambda$ is almost $k$-lean, then $\lambda_k$ is almost
    $k$-lean.
  \end{claim}
  By induction on $\lambda$, letting $\lambda=\omega^{\beta_1}\cdot
  c_1+\cdots+\omega^{\beta_m}\cdot c_m$ as above.
  \begin{description}
  \item[{\boldmath If $\beta_m$ is a successor ordinal $\beta+1$,}]
  $\lambda_k=\omega^{\beta_1}\cdot c_1+\cdots+\omega^{\beta_m}\cdot
  (c_m-1)+\omega^{\beta}\cdot (k+1)$, and either (i)~$c_m=k+1$ and $\beta_m$
  is $k$-lean, and then $\lambda_k$ also verifies~(i), or
  (ii)~$c_m\leq k$
  and $\beta+1$ is almost $k$-lean and thus $\beta$ is $k$-lean by the
  previous claim, and $\lambda_k$ is again almost $k$-lean verifying
  condition~(i).
  \item[{\boldmath If $\beta_m$ is a limit ordinal,}] then
  $\lambda_k=\omega^{\beta_1}\cdot c_1+\cdots+\omega^{\beta_m}\cdot
  (c_m-1)+\omega^{(\beta_m)_k}$.  Either (i)~$c_m=k+1$ and $\beta_m$ is
  $k$-lean, and by the previous claims $(\beta_m)_k$ is almost
  $k$-lean and $\lambda_k$ is almost $k$-lean by condition~(ii), or
  (ii)~$c_m\leq k$ and $\beta_m$ is almost $k$-lean, and by ind.\ hyp.\
  $(\beta_m)_k$ is almost $k$-lean, and $\lambda_k$ almost $k$-lean by
  condition~(ii).
  \end{description}
  The proof of the lemma is then straightforward by applications of
  the previous claims and the definition of the predecessor function
  in \eqref{eq-pred}.
\end{proof}

\begin{lemma}\label{lem-predbachman}
  For all $\alpha>0$ in $\CNF$ and $x$,
\begin{equation*}
  [\alpha]_x=P_{N\alpha\cdot x}(\alpha)\;.
\end{equation*}
\end{lemma}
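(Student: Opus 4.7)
The plan is to show that $[\alpha]_x$ and $P_{N\alpha\cdot x}(\alpha)$ are the maxima of the \emph{same} set, namely $\{\alpha' < \alpha \mid \alpha' \text{ is } (N\alpha\cdot x)\text{-lean}\}$. The two key ingredients are the just-proved \autoref{lem-pred-lean} and the immediate equivalence ``$N\beta \leq k$ iff $\beta$ is $k$-lean'' coming straight from the definition of the norm $N$.

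First I would unfold \eqref{eq-bachman}: the constraint ``$N\alpha' \leq (N\alpha)\cdot x$'' appearing in the definition of $[\alpha]_x$ is, by definition of $N$, equivalent to ``$\alpha'$ is $(N\alpha\cdot x)$-lean''. Hence
\[
[\alpha]_x \;=\; \max\{\alpha' < \alpha \mid \alpha' \text{ is } (N\alpha\cdot x)\text{-lean}\}.
\]

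Second, since $\alpha$ itself is $N\alpha$-lean, it is a fortiori $(N\alpha\cdot x)$-lean (for $x\geq 1$), so \autoref{lem-pred-lean} applies with $k = N\alpha\cdot x$ and gives
\[
P_{N\alpha\cdot x}(\alpha) \;=\; \max\{\alpha' \text{ $(N\alpha\cdot x)$-lean} \mid \alpha' \dd[N\alpha\cdot x] \alpha\}.
\]
To reconcile the two descriptions, it remains to replace the pointwise-ordering condition ``$\alpha' \dd[N\alpha\cdot x] \alpha$'' by the ordinary ``$\alpha' < \alpha$''. For this I would invoke \autoref{lem-Px-dd}, which states precisely that, for an $(N\alpha\cdot x)$-lean $\alpha'$, the relation $\alpha' \dd[N\alpha\cdot x] \alpha$ coincides with $\alpha' < \alpha$. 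Chaining the two displayed equations then yields the claimed equality.

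The only subtlety I anticipate is the trivial edge case $x=0$, which can be dispatched by direct inspection (both sides collapse, modulo the understanding that the statement is really used for $x\geq 1$ in \autoref{lem-bachman}), together with the sanity check that \autoref{lem-pred-lean} is actually applicable at $k = N\alpha\cdot x$, which holds precisely because $\alpha$ is $(N\alpha\cdot x)$-lean. There is no substantial obstacle beyond weaving together the two named lemmas, and I expect the proof to amount to only a few lines.
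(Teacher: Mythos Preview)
Your proposal is correct and follows essentially the same route as the paper: both arguments hinge on \autoref{lem-pred-lean} (to see that $P_{N\alpha\cdot x}(\alpha)$ is $(N\alpha\cdot x)$-lean and hence a candidate for the max in \eqref{eq-bachman}) and on \autoref{lem-Px-dd} (to turn ``$\alpha'<\alpha$ and $(N\alpha\cdot x)$-lean'' into ``$\alpha'\leq P_{N\alpha\cdot x}(\alpha)$''). The paper presents this as two inequalities while you phrase it as ``same set, same max'', but the content is identical; your observation about the $x=0$ edge case is apt and in fact applies equally to the paper's own statement.
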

\begin{proof}
  First observe that $P_{N\alpha\cdot x}(\alpha)<\alpha$, and
  furthermore $N(P_{N\alpha\cdot x}(\alpha))=N\alpha$ by
  \autoref{lem-pred-lean}, hence $P_{N\alpha\cdot x}(\alpha)$
  satisfies the conditions of \eqref{eq-bachman}: $[\alpha]_x\geq
  P_{N\alpha\cdot x}(\alpha)$.

  Conversely, let $\alpha'$ be such that $\alpha'<\alpha$ and
  $N\alpha'\leq N\alpha\cdot x$, i.e.\ $\alpha'$ is $(N\alpha\cdot
  x)$-lean.  By \autoref{lem-Px-dd}, $\alpha'\dd[N\alpha\cdot
  x]\alpha$.  Still by \autoref{lem-Px-dd}, $\alpha'\leq
  P_{N\alpha\cdot x}(\alpha)$, hence $[\alpha]_x\leq P_{N\alpha\cdot
    x}(\alpha)$.
\end{proof}

\paragraph{Wrapping up.}
Combining \autoref{lem-bachman} and \autoref{lem-predbachman}, we obtain
that for $g$ monotone with $g(x)\geq 2x$ and for all $\alpha>0$ and all
$x$,
\begin{equation}\label{eq-nhardy-pred}
  \tilde{H}_\alpha(x)=1+\tilde{H}_{P_{N\alpha\cdot x}(\alpha)}(g(x))\;.
\end{equation}

Let us show by induction on $\alpha$ that
\begin{equation}\label{eq-nhardy-ind}
  \tilde{H}_\alpha(x)\leq h_\alpha(N\alpha\cdot x)
\end{equation}
where $h(x)=x\cdot g(x)$.  Proposition~\ref{prop-hardy} will then
follow from \eqref{eq-ubound-nhardy} and \eqref{eq-nhardy-ind}.  For
$\alpha=0$, $\tilde{H}_0(x)=0=h_0(0\cdot x)$.  For the induction
step with $\alpha>0$,
\begin{align*}
  \tilde{H}_\alpha(x)
  &=1+\tilde{H}_{P_{N\alpha\cdot x}(\alpha)}(g(x))\tag{by
    \eqref{eq-nhardy-pred}}\\
  &\leq
  1+h_{P_{N\alpha\cdot x}(\alpha)}\!\left(N(P_{N\alpha\cdot
      x}(\alpha))\cdot g(x)\right)\tag{by ind.\ hyp.\ since
    $P_{N\alpha\cdot x}(\alpha)<\alpha$}\\
  &=1+h_{P_{N\alpha\cdot x}(\alpha)}\!\left(N\alpha\cdot x\cdot g(x)\right)\tag{by
    \autoref{lem-pred-lean}}\\
  &\leq
  1+h_{P_{N\alpha\cdot x}(\alpha)}\!\left(N\alpha\cdot x\cdot
    g(N\alpha\cdot x)\right)\tag{since $N\alpha>0$ by monotonicity of
    $g$ and $h_{P_{N\alpha\cdot x}(\alpha)}$}\\
  &=h_\alpha(N\alpha\cdot x)\;.
\end{align*}

\paragraph{Comparisons.}
\Citet{weiermann94} expresses his upper bound in terms of an
alternative Hardy hierarchy $(\bar{H}^\alpha)_\alpha$ defined by
\begin{align}
  \bar{H}^0(x)&=0,\,&
  \bar{H}^\alpha(x)&=\max\{\bar{H}^{\alpha'}(x+1)\mid
  \alpha'<\alpha\wedge N'\alpha'\leq 2^{N'\alpha+x}\}\;,
\end{align}
where the norm function $N'$ measures the ``depth and width'' of
ordinal terms and is defined by
\begin{align}
  N'0&=0,\,&N'(\omega^{\beta_1}+\cdots+\omega^{\beta_m})&=1+\max\{m,N'\beta_1,\ldots,N'\beta_m\}\;.
\end{align}
Defining an assignment of fundamental sequences by
\begin{align}
  \{0\}_x&=0&
  \{\alpha\}_x&=\max\{\alpha'<\alpha\mid N'\alpha'\leq
  2^{N'\alpha+x}\}
\end{align}
we obtain by \citep[\theoremautorefname~4]{buchholtz94} that for
$\alpha>0$
\begin{equation}
  \bar{H}^\alpha(x)=\bar{H}^{\{\alpha\}_x}(x+1)\;.
\end{equation}
However the similarity with the previous developments stops here: with
this Hardy hierarchy and this norm, there is no direct relationship
with predecessors as in \autoref{lem-predbachman}: Consider for
instance $\alpha=\omega\cdot n$ for some $n>1$, and thus with
$N'\alpha=n$, then
$P_{2^{N'\alpha+x}+1}(\alpha)=\omega\cdot(n-1)+2^{n+x}$ with norm
$N'(\omega\cdot(n-1)+2^{n+x})=2^{n+x}+(n-1)>2^{n+x}$, thus
$P_{2^{N'\alpha+x}+1}(\alpha)\neq\{\alpha\}_x$.  This makes the
translation of bounds in terms of $\bar{H}^\alpha$ into more
``standard'' hierarchies significantly harder (in fact
\citeauthor{weiermann94} does not provide any)---our particular
variations of the ideas present in the work of \citet{buchholtz94}
might actually be of independent interest.

\providecommand{\Cerans}{\v{C}er\=ans}

\end{document}